\newcommand*{\INCLUDEAPPENDIX}{}%
\newcommand*{\BUILDARXIV}{}%
\newcommand{\appref}{appendix}
\newcommand{\appref}{appendix~\cite{appendix}}
\newcommand\basedir{}%
\newcolumntype{.}{@{}}
\newcolumntype{M}{@{\mskip\thickmuskip}}
\definecolor{StringRed}{rgb}{.637,0.082,0.082}
\definecolor{CommentGreen}{rgb}{0.0,0.55,0.3}
\definecolor{KeywordBlue}{rgb}{0.0,0.3,0.55}
\definecolor{LinkColor}{rgb}{0.55,0.0,0.3}
\definecolor{CiteColor}{rgb}{0.55,0.0,0.3}
\definecolor{HighlightColor}{rgb}{0.0,0.0,0.0}
\definecolor{grey}{rgb}{0.5,0.5,0.5}
\definecolor{red}{rgb}{1,0,0}
\spnewtheorem{defn}[theorem]{Definition}{\bfseries}{\itshape}
\spnewtheorem{cor}[theorem]{Corollary}{\bfseries}{\itshape}
\spnewtheorem{conj}[theorem]{Conj}{\bfseries}{\itshape}
\spnewtheorem{lem}[theorem]{Lemma}{\bfseries}{\itshape}
\spnewtheorem{thm}[theorem]{Theorem}{\bfseries}{\itshape}
\newcommand*{\Sref}[1]{\hyperref[#1]{\S\ref*{#1}}}
\let\secref\Sref
\newcommand*{\lemref}[1]{\hyperref[#1]{Lemma~\ref*{#1}}}
\newcommand*{\thmref}[1]{\hyperref[#1]{Theorem~\ref*{#1}}}
\newcommand{\corref}[1]{\hyperref[#1]{Cor.~\ref*{#1}}}
\newcommand*{\defref}[1]{\hyperref[#1]{Definition~\ref*{#1}}}
\newcommand*{\egref}[1]{\hyperref[#1]{Example~\ref*{#1}}}
\newcommand*{\appendixref}[1]{\hyperref[#1]{Appendix~\ref*{#1}}}
\newcommand*{\figref}[1]{\hyperref[#1]{Figure~\ref*{#1}}}
\newcommand*{\tabref}[1]{\hyperref[#1]{Table~\ref*{#1}}}
\newcommand{\ie}{\emph{i.e.,} }
\newcommand{\eg}{\emph{e.g.,} }
\newcommand{\etal}{\emph{et~al.}}
\renewcommand*{\mathellipsis}{%
  \mathinner{{\ldotp}{\ldotp}{\ldotp}}%
}
\@ifdefinable{\org@ldots}{%
  \LetLtxMacro\org@ldots\ldots
  \DeclareRobustCommand*{\ldots}{%
    \ifmmode
      \expandafter\my@ldots
    \else
      \expandafter\textellipsis
    \fi
  }%
}
\newcommand*{\neghalfmskip}{%
  \nonscript\mskip-.5\muexpr\thinmuskip\relax%
}
\newcommand*{\my@ldots}{%
  \mathellipsis
  \@ifnextchar,\neghalfmskip{%
  \@ifnextchar:\neghalfmskip{%
  \@ifnextchar;\neghalfmskip{%
  \@ifnextchar.\neghalfmskip{%
  \@ifnextchar!\neghalfmskip{%
  \@ifnextchar?\neghalfmskip{%
    \rightdelim@
    \ifgtest@
      \mskip-.5\muexpr\thinmuskip\relax%
    \fi
  }}}}}}%
}
\newcommand{\heaploc}{l}
\newcommand{\chanloc}{c}
\newcommand{\side}{s}
\newcommand{\vunit}{()}
\newcommand{\buflr}{b_{\rightarrow}}
\newcommand{\bufrl}{b_{\leftarrow}}
\newcommand{\locside}[2]{{#1}_{#2}}
\newcommand{\locsidevar}{\locside{\chanloc}{\side}}
\newcommand{\lside}{\textlog{left}}
\newcommand{\rside}{\textlog{right}}
\newcommand{\fork}[1]{\textlog{fork}\{#1\}}
\newcommand{\newch}{\textlog{newch}}
\newcommand{\recv}[1]{\textlog{recv}(#1)}
\newcommand{\send}[2]{\textlog{send}(#1, #2)}
\newcommand{\letp}[4]{\textlog{let}\;(#1, #2)\;\textlog{=}\;#3\;\textlog{in}\;#4}
\newcommand{\bind}[3]{\textlog{let}\;#1\;\textlog{=}\;#2\;\textlog{in}\;#3}
\newcommand{\typ}{\tau}
\newcommand{\styp}{S}
\newcommand{\dual}[1]{\overline{#1}}
\newcommand{\lolli}{\multimap}
\newcommand{\otensor}{\otimes}
\newcommand{\unit}{\textlog{Unit}}
\newcommand{\inttyp}{\textlog{Int}}
\newcommand{\booltyp}{\textlog{Bool}}
\newcommand{\sendtyp}[2]{{! #1.\, #2}}
\newcommand{\recvtyp}[2]{{? #1.\, #2}}
\newcommand{\stypend}{\textlog{end}}
\newcommand{\heapsend}{\textlog{heapSend}}
\newcommand{\heaprecv}{\textlog{heapRecv}}
\newcommand{\heapalloc}{\textlog{heapNewch}}
\let\heapnewch\heapalloc
\newcommand{\rec}[2]{\textlog{rec}\,#1\,#2.\,}
\newcommand{\recN}[1]{\textlog{rec}\ #1}
\newcommand{\store}[2]{#1\;\textlog{:=}\;#2}
\newcommand{\match}[1]{\textlog{match}\;#1\;\textlog{with}}
\newcommand{\ifmatch}[1]{\textlog{if}\;#1\;\textlog{then}}
\newcommand{\matchend}{\textlog{end}}
\newcommand{\load}[1]{! #1}
\newcommand{\alloc}[1]{\textlog{ref}\;#1}
\newcommand{\swap}[2]{\textlog{swap}(#1, #2)}
\newcommand{\fai}[1]{\textlog{FAI}(#1)}
\newcommand{\some}[1]{\textlog{some}\; #1}
\newcommand{\none}{\textlog{none}}
\newcommand{\patcase}[2]{#1 \Rightarrow #2}
\newcommand{\fst}[1]{\textlog{fst}\,#1}
\newcommand{\snd}[1]{\textlog{snd}\,#1}
\newcommand{\inl}{\textlog{inl}}
\newcommand{\inr}{\textlog{inr}}
\newcommand{\caseelim}[3]{\textlog{case}(#1, #2, #3)}
\newcommand{\locktyp}{\textlog{Lock}}
\newcommand{\reftyp}[1]{\textlog{Ref}\,#1}
\newcommand{\typtrans}{\leadsto}
\let\mapsto\hookrightarrow
\newcommand\leftTok[1]{[\textlog{Left}\ #1]} %
\newcommand\rightTok[1]{[\textlog{Right}\ #1]}
\newcommand{\refines}{\sqsubseteq} %
\newcommand{\valobsrel}{\approx} %
\newcommand\Expr{E}
\newcommand\State{\Sigma}
\newcommand\Val{V}
\newcommand\Cfg{\rho}
\newcommand\compile[1]{\widehat{#1}}
\newcommand\smapsto{\hookrightarrow_{\textlog{s}}}
\newcommand{\ownThread}[2]{\textlog{source}(#1, #2)}
\newcommand{\ownThreadD}[3]{\textlog{source}(#1, #2, #3)}
\newcommand{\ownThreadNoArg}{\textlog{source}}
\newcommand{\delay}{d}
\newcommand{\aff}[1]{\mathcal{A}(#1)}
\newcommand{\emp}{\textlog{Emp}}
\newcommand{\stopped}{\textlog{Stopped}}
\newcommand{\sessionInv}[3]{\pred_{#1, #2}(#3)}
\newcommand{\sessionInvPre}[4]{\pred_{#1, #2, #3}(#4)}
\newcommand{\sessionInvNoArg}{\pred}
\newcommand{\typInterp}{\Theta}
\newcommand{\scl}{\textlog{l}}
\newcommand{\scr}{\textlog{r}}
\newcommand{\sch}{\textlog{h}}
\newcommand{\scc}{\textlog{c}}
\newcommand{\linklist}[3]{\textlog{linklist}(#1, #2, #3)}
\newcommand{\llength}[1]{|#1|}
\newcommand{\sadvance}[2]{{#1}^{#2}}
\newcommand{\interpRel}[5]{{#1} \mathrel{\simeq^{\mathcal{#3}}_{#4}} {#2} : {#5}}
\newcommand{\interpRelNoArgs}[2]{\simeq^{\mathcal{#1}}_{#2}}
\newcommand{\interpValRel}[4]{\interpRel{#1}{#2}{V}{#3}{#4}}
\newcommand{\interpExprRel}[4]{\interpRel{#1}{#2}{E}{#3}{#4}}
\newcommand{\interpListRel}[4]{\interpRel{#1}{#2}{L}{#3}{#4}}
\newcommand{\interpOpenRel}[4]{#1 \vdash \interpRel{#2}{#3}{}{}{#4}}
\newcommand{\SessionProt}[4]{\textlog{Session}_{#1}(#2,#3,#4)} %
\newcommand{\dmax}{\ensuremath{D}}
\newcommand{\thefp}{\ensuremath{\mathrel{\Omega}}}
\newcommand{\islock}[4]{\textlog{isLock}(#1, #2, #3, #4)}
\newcommand{\locked}[3]{\textlog{locked}(#1, #2, #3)}
\newcommand{\substh}{\gamma_h}
\newcommand{\substc}{\gamma_c}
\newcommand{\substhB}{\chi_h}
\newcommand{\substcB}{\chi_c}
\newcommand{\bigast}{\scalebox{3}{\raisebox{-0.3ex}{$\ast$}}}
\newcommand{\fv}[1]{\textlog{fv}(#1)}
\def\arcr{\@arraycr}
\newcommand{\isnew}[1]{{\color{blue} #1}}
\begin{document}

\mainmatter  %

\title{A Higher-Order Logic for Concurrent Termination-Preserving Refinement}

\titlerunning{A Higher-Order Logic for Concurrent Termination-Preserving Refinement}
\author{Joseph Tassarotti\inst{1}\and Ralf Jung\inst{2}\and Robert
  Harper\inst{1}}

\institute{Carnegie Mellon University, Pittsburgh,
  USA\and%
MPI-SWS, Saarland, Germany}

\maketitle

\begin{abstract}

Compiler correctness proofs for higher-order concurrent languages are
difficult: they involve establishing a termination-preserving
refinement between a \emph{concurrent} high-level source language and
an implementation that uses low-level shared memory primitives.  
However, existing logics for proving concurrent refinement
either neglect properties such as termination, or only handle
first-order state.  In this paper, we address these limitations by
extending Iris, a recent higher-order concurrent separation logic,
with support for reasoning about termination-preserving refinements.
To demonstrate the power of these extensions, we prove the correctness of an efficient
implementation of a higher-order, session-typed language. To our
knowledge, this is the first program logic capable of giving a
compiler correctness proof for such a language.  The soundness of our
extensions and our compiler correctness proof have been mechanized in
Coq.

\end{abstract}

\section{Introduction}

Parallelism and concurrency impose great challenges on both programmers and
compilers.  In order to make compiled code more efficient and help programmers avoid
errors, languages can
provide type systems or other features to constrain the structure of
programs and provide useful guarantees. The design of these kinds of concurrent
languages is an active area of research.  However, it is frequently
difficult to prove that efficient compilers for these languages are correct, and
that important properties of the source-level language are preserved under
compilation.

For example, in work on session types~\cite{honda93,yoshida07,GayV10,CairesP10,Wadler14}, processes communicate by sending
messages over channels. These channels are given a type which
describes the kind of data sent over the channel, as well as the order
in which each process sends and receives messages. Often, the type
system in these languages ensures the absence of undesired behaviors like races and deadlocks;
for instance, two threads cannot both be trying to send a message on
the same channel simultaneously.

Besides preventing errors, the invariants enforced by session types also permit these
language to be compiled efficiently to a shared-memory target
language~\cite{Willsey16}. For example, because
only one thread can be sending a message on a given channel at a
time, channels can be implemented without performing
locking to send and receive messages.
It is particularly important to
prove that such an implementation does not \emph{introduce} races or deadlocks, since
this would destroy  the very properties that make certain
session-typed languages so interesting.

In this paper, we develop a higher-order program logic for proving the correctness of such concurrent language implementations, in a way that ensures that termination is preserved. 
We have used this program logic to give
a machine-checked proof of correctness for a lock-free implementation of a
\emph{higher-order} session-typed language, \ie a language in which closures and channels can be sent over channels.
To our knowledge, this is the \emph{first such proof} of its kind.

As we describe below, previously developed program logics
cannot be used to obtain these kinds of correctness results due to various limitations.
In the remainder of the introduction, we will explain why it is so hard to prove refinements between higher-order, concurrent languages. To this end, we first have to provide some background.

\paragraph{Refinement for concurrent languages.}
To show that a compiler is correct, one typically proves that if a source expression $\Expr$ is
well-typed, its translation $\compile\Expr$ \emph{refines} $\Expr$. In the
sequential setting, this notion of refinement is easy to
define\footnote{Setting aside issues of IO behavior.}: (1) if the target program $\compile\Expr$ terminates in some value $v$, we expect $\Expr$ to
also have an execution that terminates with value $v$, and (2) if $\compile\Expr$ diverges, then $\Expr$
should also have a diverging execution.

In the concurrent setting, however, we need to change this definition.
In particular, the condition (2) concerning diverging executions is too weak. To see why, consider
the following program, where \texttt{x} initially contains $0$:
\[ \texttt{while (*x == 0)~ \{\}} \qquad 
   || \qquad \texttt{*x = 1;} \]
Here, $||$ represents parallel composition of two threads. In every execution where the thread on the right
 eventually gets to run, this program will terminate. However, the program does
 have a diverging execution in which only the left thread runs: because \texttt{x} remains $0$, the left thread continues to loop. Such executions
 are ``unrealistic'' in the sense that generally, we rely on schedulers to be
 \emph{fair} and not let a thread starve.  As a consequence, for purposes of compiler correctness,
 we do not want to consider these ``unrealistic'' executions which only diverge because the
 scheduler never lets a thread run.

 Formally, an infinite execution is said to be \emph{fair}~\cite{LehmannPS81} if every thread which
 does not terminate in a value takes infinitely many steps.\footnote{This
   definition is simpler than the version found in \citet{LehmannPS81}, because there
   threads can be temporarily \emph{disabled}, \ie blocked and unable to take a step.
   In the languages we consider, threads can always take a step unless
   they have finished executing or have ``gone wrong''.} In the definition
 of refinement above, we change (2) to demand that if $\compile\Expr$ has a \emph{fair} diverging
 execution, then $\Expr$ also has a \emph{fair} diverging execution.  We impose no
 such requirement about unfair diverging executions.
This leads us to \emph{fair termination-preserving refinement}.

\paragraph{Logics for proving refinement.}
To prove our compiler correct, we need to reason about the concurrent execution and (non)termination of the source and target programs.
Rather than reason directly about all possible executions of these programs, we prefer to use a concurrent program logic in order to re-use ideas found in rely-guarantee reasoning~\cite{rg} and concurrent separation logic~\cite{ohearn:csl}.
However, although a number of concurrency logics have recently been developed for reasoning about termination and refinements, they cannot be used to prove our compiler correctness result because they either:

\begin{itemize}
\item are restricted to first-order state~\cite{hoffmann:lock-freedom,total-tada,liang:refinement14,liang:termination14,liang:lili},
\item only deal with termination, not refinement~\cite{hoffmann:lock-freedom,total-tada}, or
\item handle a weaker form of refinement that is not fair termination-preserving~\cite{caresl,liang:refinement14,liang:termination14}.
\end{itemize}

Although the limitations are different in each of the above papers, let us focus on the
approach by \citet{caresl} since we will build on it.
That paper establishes a termination-\emph{insensitive} form of refinement, \ie a diverging program refines every program.
Refinement is proven in a
higher-order concurrent separation logic which, in addition to the usual
points-to assertions $l \gmapsto v$, also provides assertions
\emph{about the source language's state}.  For instance, the
assertion\footnote{The notation in \citet{caresl} is different.}
$\ownThread{i}{\Expr}$
  says thread $i$ in the source language's execution is running
expression $\Expr$. A thread which ``owns'' this resource is allowed to
modify the state of the source program by simulating steps of the execution of $\Expr$. Then, we can
prove that $\expr$ refines $\Expr$ by showing:
\[
{\hoare{\ownThread{i}{\Expr}}{\expr}{\Ret\val. \ownThread{i}{\val}}[]}
\]
As usual, the triple enforces that the post-condition holds on
termination of $\expr$.  Concretely for the triple above, the soundness
theorem for the logic implies that if target expression $\expr$
terminates with a value $\val$, then there is an execution of source
expression $\Expr$ that also terminates with value $\val$. However,
the Hoare triple above only expresses \emph{partial
  correctness}.  That means if $\expr$ does not terminate, then the
triple above is trivial, and so these triples can only be used to
prove termination-insensitive refinements.

Ideally, one would like to overcome this limitation by adapting ideas
from logics that deal with termination for first-order state. Notably,
Liang~\etal~\cite{liang:lili} have recently developed a logic for establishing
\emph{fair} refinements (as defined above). 

However, there is a serious difficulty in trying to adapt these ideas.
Semantic models of concurrency logics for
higher-order state usually involve \emph{step-indexing}~\cite{appel-mcallester,birkedal:metric-space}.
In step-indexed logics, the validity of Hoare triples is restricted to program executions of arbitrary \emph{but finite} length.
How can we use these to reason about fairness, a property which is inherently about \emph{infinite} executions?

In this paper, we show how to overcome this difficulty: the key insight is that when the source language has only \emph{bounded non-determinism}, step-indexed Hoare triples are actually sufficient to establish properties of infinite program executions.
Using this observation, we extend Iris~\cite{iris,iris2}, a recent higher-order concurrent separation logic, to support reasoning about fair termination-preserving refinement.
The soundness of our extensions to Iris and our case studies have been verified in Coq.

\paragraph{Overview.}

We start by introducing the case study that we will focus on in this paper: a session-typed source language, a
compiler into an ML-like language, and the compiler's correctness property -- fair, termination-preserving refinement (\Sref{sec:session-lang}).
Then we present our higher-order concurrent separation logic for establishing said refinement (\Sref{sec:iris}).
We follow on by explaining the key changes to Iris that were necessary to perform this kind of reasoning (\Sref{sec:extensions}).  We then use the
extended logic to prove the correctness of the compiler for our session-typed language (\Sref{sec:session-proof}). Finally, we conclude by
describing connections to related work and limitations of our approach
that we hope to address in future work (\Sref{sec:conclusion}).

\newcommand\bufferfig{\begin{figure}\centering
  \begin{tikzpicture}
    \newcommand\vdist{0.4}
    \tikzstyle{buffer_box} = [draw,rectangle,minimum size=0.5cm,text width=0.35cm,inner sep=0];
    \node[buffer_box] at (0,\vdist) {};
    \node[buffer_box] at (0.5,\vdist) {};
    \node[buffer_box] at (1.0,\vdist) {};
    \node at (1.75,\vdist) {$\cdots$};
    \node[buffer_box] at (2.5,\vdist) {};
    \node[buffer_box] at (3.0,\vdist) {};
    \node[buffer_box] at (3.5,\vdist) {$\Val$};

    \node[buffer_box] at (0,-\vdist) {};
    \node[buffer_box] at (0.5,-\vdist) {};
    \node[buffer_box] at (1.0,-\vdist) {};
    \node at (1.75,-\vdist) {$\cdots$};
    \node[buffer_box] at (2.5,-\vdist) {};
    \node[buffer_box] at (3.0,-\vdist) {};
    \node[buffer_box] at (3.5,-\vdist) {};

    \node[align=right] at (-1.15,0) {End-point\\$\chanloc_\lside$};
    \node[align=left] at (4.65,0) {End-point\\$\chanloc_\rside$};
    \node at ($(1.75,2.55*\vdist)$) {Buffer $b_1$};
    \node at ($(1.75,-2.55*\vdist)$) {Buffer $b_2$};

    \draw[->]  ($(0,2*\vdist)$) -- ($(3.5,2*\vdist)$);
    \draw[<-]  ($(0,-2*\vdist)$) -- ($(3.5,-2*\vdist)$);
  \end{tikzpicture}
  \caption{Depiction of a channel. Here, message $\Val$ has been sent from the left end-point to the right end-point.}\label{fig:buffers}
\end{figure}
}

\newcommand\sessionrulefig{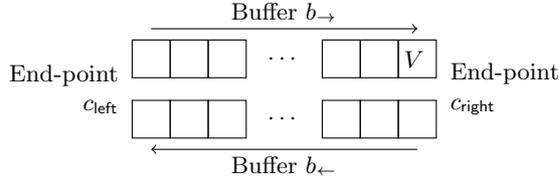
\begin{figure*}
\noindent \textbf{Syntax:}
\begin{mathpar}
\begin{array}{llcl}
\textdom{Side} & \side  & ::= & \lside \ALT \rside \\
\textdom{Val} & \Val & ::= & 
   \locsidevar \ALT
   \lambda \var.\, \Expr_1 \ALT
   (\Val_1, \Val_2) \ALT 
   \vunit \ALT 
   n \qquad\qquad \text{where $\chanloc \in \mathbb{N}$} \\
\textdom{Expr} & \Expr & ::= & \var \ALT 
   \Val \ALT
   \Expr_1 \, \Expr_2 \ALT
   (\Expr_1, \Expr_2) \ALT
   \fork{\Expr} \ALT
   \newch \ALT
   \recv{\Expr}  \\ &&&  \ALT
   \send{\Expr_1}{\Expr_2} \ALT
   \letp{\var}{\varB}{\Expr_1}{\Expr_2} \ALT
   \dots \\
\textdom{Eval Ctx} & \lctx & ::= & [] \ALT
   \lctx \, \Expr \ALT 
   \Val \, \lctx \ALT 
   (\lctx, \Expr) \ALT 
   (\Val, \lctx) \ALT 
   \recv{\lctx}  \ALT 
   \send{\lctx}{\Expr} \\ &&&\ALT
   \send{\Val}{\lctx} \ALT
   \letp{\var}{\varB}{\lctx}{\Expr} \ALT 
   \dots \\
\textdom{State} & \State & \in & \mathbb N \rightarrow
                      \textdom{List Val} \times \textdom{List Val} \\
\textdom{Config} & \Cfg & ::= & \cfg{\Expr_1, \dots, \Expr_n}{\State} \\
\textdom{Type} & \typ & ::= & 
    \inttyp \ALT 
    \unit \ALT
    \typ_1 \otensor \typ_2 \ALT
    \typ_1 \lolli \typ_2 \ALT
    \styp \\
\textdom{Session Type} & \styp  & ::= & 
   \sendtyp{\typ}{\styp} \ALT
   \recvtyp{\typ}{\styp} \ALT
   \stypend \qquad (\text{co-inductive}) \\
\textdom{Dual Type} & \multicolumn{3}{c}{\dual{\recvtyp{\typ}{\styp}} \eqdef \sendtyp{\typ}{\dual{\styp}}\qquad~
\dual{\sendtyp{\typ}{\styp}} \eqdef \recvtyp{\typ}{\dual{\styp}}\qquad~
\dual{\stypend} \eqdef \stypend
}
\end{array}
\end{mathpar}
\noindent \textbf{Per-Thread Reduction} $\Expr; \State \step \Expr'; \State'$: \hfill (Pure and symmetric rules ommitted.)
\begin{mathpar}
\inferH{NewCh}{\chanloc = \min\{\chanloc' \ | \ \chanloc' \not \in \dom(\State)\}}
      {\newch; \State 
       \step
       (\locside{\chanloc}{\lside}, \locside{\chanloc}{\rside}); [\chanloc \mapsto ([], [])]\State
      }

\inferH{SendLeft}{\State(\chanloc) = (\buflr, \bufrl)}
      {\send{\locside{\chanloc}{\lside}}{\Val}; \State
       \step
       \locside{\chanloc}{\lside}; [\chanloc \mapsto (\buflr \Val, \bufrl)]\State
      }

\inferH{RecvRightIdle}{\State(\chanloc) = ([],\bufrl)}
      {\recv{\locside{\chanloc}{\rside}}; \State
       \step
       \recv{\locside{\chanloc}{\rside}}; \State
      }

\inferH{RecvRight}{\State(\chanloc) = (\Val\, \buflr, \bufrl )}
      {\recv{\locside{\chanloc}{\rside}}; \State
       \step
       (\locside{\chanloc}{\rside}, \Val); [\chanloc \mapsto (\buflr, \bufrl)]\State
      }
\end{mathpar}
\noindent \textbf{Concurrent Semantics} $\Cfg \step \Cfg'$:
\begin{mathpar}
\newcommand\sqStep{\kern-0.5ex\step\kern-0.5ex}
\infer{\Expr_i ; \State \sqStep \Expr_i' ; \State' }
      {\cfg{\dots,\lctx[\Expr_i],\dots}{\State}
       \sqStep
       \cfg{\dots,\lctx[\Expr_i'],\dots}{\State'}
      }

\hspace{-1em}
\infer{}
      {\cfg{\dots,\lctx[\fork{\Expr_\f}],\dots}{\State}
       \sqStep
       \cfg{\dots,\lctx[\vunit],\dots,\Expr_\f}{\State}
      }
\end{mathpar}
\noindent \textbf{Type system:}\hfill(Standard rules for variables, integers and lambda omitted.)
\begin{mathpar}
\inferH{Fun-Elim}{\Gamma \vdash \Expr : \typ_1 \lolli \typ_2 \\
       \Gamma' \vdash \Expr' : \typ_1 }
      {\Gamma \uplus \Gamma' \vdash \Expr\, \Expr' : \typ_2}

\inferH{Pair-Intro}{\Gamma_1 \vdash \Expr_1 : \typ_1 \\
       \Gamma_2 \vdash \Expr_2 : \typ_2 }
      {\Gamma_1 \uplus \Gamma_2 \vdash (\Expr_1, \Expr_2) : \typ_1 \otensor \typ_2}
      
\inferH{Pair-Elim}{\Gamma \vdash \Expr : \typ_1 \otensor \typ_2 \\
       \Gamma', \var : \typ_1, \varB : \typ_2 \vdash \Expr' : \typ'}
      {\Gamma \uplus \Gamma' \vdash \letp{\var}{\varB}{\Expr}{\Expr'} : \typ'}
      
\inferH{Fork}{\Gamma_1 \vdash \Expr_\f : \typ' \and \Gamma_2 \vdash \Expr : \typ}
      {\Gamma_1 \uplus \Gamma_2 \vdash \fork{\Expr_\f}; \Expr  : \typ}
      
\inferH{NewChTyp}{}
      {\Gamma \vdash \newch  : \styp \otensor \dual{\styp}}
      
\inferH{Send}{\Gamma_1 \vdash \Expr_1 : \sendtyp{\typ}{\styp} \\
       \Gamma_2 \vdash \Expr_2 : \typ
      }
      {\Gamma_1 \uplus \Gamma_2 \vdash \send{\Expr_1}{\Expr_2}  : \styp}

\inferH{Recv}{\Gamma \vdash \Expr : \recvtyp{\typ}{\styp} \\
      }
      {\Gamma \vdash \recv{\Expr} : \styp \otensor \typ }
\end{mathpar}
\noindent \textbf{Buffer visualization:}\hfill Message $\Val$ has been sent from the left end-point to the right.
  \centering\begin{tikzpicture}
    \newcommand\vdist{0.4}
    \tikzstyle{buffer_box} = [draw,rectangle,minimum size=0.5cm,text width=0.35cm,inner sep=0];
    \node[buffer_box] at (0,\vdist) {};
    \node[buffer_box] at (0.5,\vdist) {};
    \node[buffer_box] at (1.0,\vdist) {};
    \node at (1.75,\vdist) {$\cdots$};
    \node[buffer_box] at (2.5,\vdist) {};
    \node[buffer_box] at (3.0,\vdist) {};
    \node[buffer_box] at (3.5,\vdist) {$\Val$};

    \node[buffer_box] at (0,-\vdist) {};
    \node[buffer_box] at (0.5,-\vdist) {};
    \node[buffer_box] at (1.0,-\vdist) {};
    \node at (1.75,-\vdist) {$\cdots$};
    \node[buffer_box] at (2.5,-\vdist) {};
    \node[buffer_box] at (3.0,-\vdist) {};
    \node[buffer_box] at (3.5,-\vdist) {};

    \node[align=right] at (-1.15,0) {End-point\\$\chanloc_\lside$};
    \node[align=left] at (4.65,0) {End-point\\$\chanloc_\rside$};
    \node at ($(1.75,2.55*\vdist)$) {Buffer $\buflr$};
    \node at ($(1.75,-2.55*\vdist)$) {Buffer $\bufrl$};

    \draw[->]  ($(0,2*\vdist)$) -- ($(3.5,2*\vdist)$);
    \draw[<-]  ($(0,-2*\vdist)$) -- ($(3.5,-2*\vdist)$);
  \end{tikzpicture}
\vspace{-1em}

\caption{Syntax, semantics, and session type system of message-passing source language}
\label{fig:session-lang-rules}
\end{figure*}
}

\newcommand\implfig{\begin{figure}[t]
\begin{mathpar}
\begin{minipage}[t]{0.35\textwidth}
\[
\begin{array}{ll}
&\heapalloc \eqdef\\
&\quad \bind{l}{\alloc{\none}}{(l, l)}
\end{array}
\] 
\end{minipage}
\begin{minipage}[t]{0.2\textwidth}
\[
\begin{array}{ll}
&\heapsend\  l\, v\eqdef  \\
&\quad \letp{l'}{v'}{(l, v)}{} \\
&\quad \bind{l_{new}}{\alloc{\none}}{} \\
&\quad \store{l'}{\some{(l_{new}, v')}}; \\
&\quad l_{new}
\end{array}
\] 
\end{minipage}
\hfill
\begin{minipage}[t]{0.35\textwidth}
\[
\begin{array}{ll} 
&\heaprecv \eqdef \rec{f}{l} \\
&\quad \match{\load{l}} \\
&\quad\quad \ALT \patcase{\none}{f\, l} \\
&\quad\quad \ALT \patcase{\some{(l',v)}}{(l', v)} \\
&\quad \matchend
\end{array}
\]
\end{minipage}
\end{mathpar}
\caption{Implementation of message passing primitives.}
\label{fig:implementation}
\end{figure}
}

\section{Session-Typed Language and Compiler}
\label{sec:session-lang}

This section describes the case study that we chose to demonstrate our logic: a concurrent message-passing language and a type system establishing safety and race-freedom for this language.
On top of that, we explain how to implement the message-passing primitives in terms of shared-memory concurrency, \ie we define a compiler translating the source language into an ML-like target language.
Finally, we discuss the desired correctness statement for this compiler.

\subsection{Source Language}
\sessionrulefig

The source language for our compiler is a simplified version of the
language described in \citet{GayV10}. The syntax and semantics are given
in \figref{fig:session-lang-rules}. It is a
functional language extended with primitives for message passing and a command $\fork{\Expr}$ for creating threads.
The semantics is defined
by specifying a reduction relation for a single thread, which is then
lifted to a concurrent semantics on thread-pools in which at each step a thread is
selected non-deterministically to take the next step.

Threads can
communicate asynchronously with each other by sending messages over
\emph{channels}. For example, consider the following program (which will be a running example of the paper):
\begin{equation}
\begin{aligned}
& \letp{\var}{\varB}{\newch}{}  %
\big(\fork{\send\var{42}};\, %
 \letp\_{v}{\recv\varB}{v}\big)
\end{aligned}\label{eq:running-example}
\end{equation}

The command $\newch$ creates a new channel and returns two \emph{end-points} (bound to $\var$ and $\varB$ in the example). 
An end-point consists of a channel id $\chanloc$ and a side $\side$ (either $\lside$ or $\rside)$, and is written as $\locsidevar$.
Each channel is a pair of buffers $(\buflr, \bufrl$), which are
lists of messages.  Buffer $\buflr$ stores messages traveling
left-to-right (from $\var$ to $\varB$, in the example above), and $\bufrl$ is for right-to-left messages, as shown in the visualization in \figref{fig:session-lang-rules}.

A thread can then use $\send{\locsidevar}{\Val}$ to send a value $\Val$ along
the channel $c$, with the side $\side$ specifying which buffer is used to store the message. For instance, when  $\side$ is $\lside$, it inserts the value at the end of the first buffer (\ruleref{SendLeft}).
This value will then later be taken by a thread receiving on the \emph{right} side (\ruleref{RecvRight}).
Alternatively, if the buffer is empty when receiving, $\textlog{recv}$ takes an ``idle'' step and tries again (\ruleref{RecvRightIdle}).
(The reason $\textlog{send}$ and $\textlog{recv}$ return the end-point again will become clear when we explain the type system.)

In the example above, after creating a new channel, the initial thread
forks off a child which will send $42$ from the left end-point,~$\var$. %
Meanwhile, the parent thread tries to receive from the right
end-point~$\varB$, and returns the message it gets. If the parent
thread does this $\textlog{recv}$ \emph{before} the child has done its
$\textlog{send}$, there will be no message and the parent thread will take an idle
step.
Otherwise, the receiver will see the message and the program will evaluate to $42$.

\subsection{Session Type System}
\label{sec:session-types}

 A type system for this language is shown in
 \figref{fig:session-lang-rules}. This is a simplified version of
 the type system given in \citet{GayV10}.\footnote{For the reader
   familiar with that work: we leave out subtyping and choice
   types. Also, we present an affine type system instead of a linear
   one.}  In addition to base types $\inttyp$ and $\unit$, we have
 pair types $\tau_1 \otensor \tau_2$, function types $\tau_1
 \lolli \tau_2$, and \emph{session types}~$\styp$.
 Session types are used to type the end-points of a channel. These
types describe a kind of \emph{protocol} specifying what types of data
will flow over the channel, and in what order messages are sent.
Notice that this type system is \emph{higher-order} in the sense that both closures and channel end-points are first-class values and can, in particular, be sent over channels.

\paragraph{Session types.}
The possible session types are specified by the grammar in
\figref{fig:session-lang-rules}. If an end-point has the session type
$\sendtyp{\tau}{\styp}$, this means that the next use of this
end-point must be to send a value of type $\tau$
(\ruleref{Send}). Afterward, the
end-point that is returned by the send will have type $\styp$. Dually, $\recvtyp{\tau}{\styp}$ says that the
end-point can be used in a receive (\ruleref{Recv}), in which case the message read will
have type $\tau$, and the returned end-point will have type $\styp$.
Notice that this is the same end-point that was passed to the command, but \emph{at a different type}.
The type
of the end-point \emph{evolves} as messages are sent and
received, always representing the current state of the protocol.
Finally, $\stypend$ is a session type for an end-point on
which no further messages will be sent or received.

When calling $\newch$ to create a new channel, it is important that the types of the two end-points match: whenever one side sends a message of type $\typ$, the other side should be expecting to receive a message of the same type.
This relation is called \emph{duality}.
Given a session type $\styp$, its \emph{dual} $\dual{\styp}$ is the result of swapping sends and receives in $\styp$.
In our example~(\ref{eq:running-example}), the end-point $\var$
is used to send a single integer,
so it can be given the type $\sendtyp{\inttyp}{\stypend}$. Conversely,
$\varB$ receives a single integer, so it has the dual type
$\dual{\sendtyp{\inttyp}{\stypend}} = \recvtyp{\inttyp}{\stypend}$.

\paragraph{Affinity.}
The type system of the source language is \emph{affine}, which means that a variable in the context can be used at most once.
This can be seen, \eg in the rule \ruleref{Fork}: 
the forked-off thread $\Expr_\f$ and the local continuation $\Expr$ are typed using the two disjoint contexts $\Gamma_1$ and $\Gamma_2$, respectively.

One consequence of affinity is that 
 after using an end-point to send or receive, the variable passed to $\textlog{send}$/$\textlog{recv}$ has been ``used up'' and cannot be used anymore.
Instead, the program has to use the channel returned from $\textlog{send}$/$\textlog{recv}$,
which has the new ``evolved'' type for the end-point.

The type system given here ensures safety and race-freedom.  However,
it does not guarantee termination. We discuss alternative type systems
guaranteeing different properties in the conclusion.

\subsection{Compilation}
We now describe a simple translation from this session-typed
source language to a MiniML language with references and a forking
primitive like the one in the source language.
We omit the details of the MiniML syntax and semantics as they are standard.

Our translation needs
to handle essentially one feature: the implementation of channel
communication in terms of shared memory references.

The code for the implementation of the channel primitives is shown in
\figref{fig:implementation}. We write $\compile{\Expr}$ for the
translation in which we replace the primitives of the source language
with the corresponding implementations.
Concretely, applying the translation to our running example program we get:
\begin{align*}
  & \letp{\var}{\varB}{\newch}{} & & \letp{\var}{\varB}{\heapnewch}{} \\
& \fork{\send{\var}{42}}; & \quad\ \Ra\quad\quad & \fork{\heapsend\,\var\,{42}}; \\
& \letp\_{v}{\recv{\varB}}{v} & & \letp\_{v}{\heaprecv\,\varB}{v}
\end{align*}

Each channel is implemented as
a linked list which represents \emph{both} buffers.  Nodes in this
list are pairs $(\heaploc, \val$), where $\heaploc$ is a reference to
the (optional) next node, and $\val$ is the message that was sent.
Why is it safe to use just one list?  Duality in the session types guarantees that if a thread is
sending from one end-point, no thread can at the same time be sending a message on the
other end-point. This ensures that at least one of the two
buffers in a channel is always empty.
Hence we just need one list to
represent both buffers.

The implementation of $\newch$, given by $\heapalloc$, creates a new
empty linked list by allocating a new reference $\heaploc$ which initially
contains $\none$.
The function $\heapsend$ implements $\textlog{send}$ by appending a
node to the end ($l'$) of the list, and returning the new end.
Meanwhile, for $\textlog{recv}$, $\heaprecv$ takes an end-point
$\heaploc$ and waits in a loop until it finds that the end-point contains a node.

\implfig

\subsection{Refinement}
\label{sec:refinement}

Having given the implementation, let us now clarify what it means for
the compiler to be correct. Intuitively, we want to show that if we
take a well-typed source expression $\Expr$, all the \emph{behaviors} of its
translation $\compile{\Expr}$ are also \emph{possible} behaviors of $\Expr$.
We say that $\compile{\Expr}$ \emph{refines} $\Expr$.

Before we come to the formal definition of refinement, we need to answer the question: which behaviors do we consider equivalent?
In our case, the only observation that can be made about a whole program is its return value, so classifying ``behaviors'' amounts to relating return values.
Formally speaking:
\begin{mathpar}

\infer{}{n \valobsrel n}

\infer{}{\vunit \valobsrel \vunit}

\infer{}{\heaploc \valobsrel \locsidevar}

\infer{}{\lambda x. \expr \valobsrel \lambda x. \Expr}

\infer{\val_1 \valobsrel \Val_1 \\ 
       \val_2 \valobsrel \Val_2}{(\val_1, \val_2) \valobsrel (\Val_1, \Val_2)}
\end{mathpar}

For integer and unit values, we
expect them to be exactly equal; similarly, pairs are the same if
their components are.
Coming to locations/end-points and closures, we do not consider them to be interpretable by the user looking at the result of a closed program.
So, we just consider all closures to be equivalent, and all heap locations to relate to all channel end-points.
Of course, the \emph{proof} of compiler correctness will use a more fine-grained logical relation between source and target values.

Based on this notion of equivalent observations, we define what it means for a MiniML program $\expr$ to \emph{refine} a source program $\Expr$, written $\expr \refines \Expr$.
When executing from an initial ``empty'' state $\initstate$, the following
 conditions must hold:
\begin{enumerate}
\item If $([\expr], \initstate) \step^{*} ([\expr_{1}, \dots, \expr_{n}], \state)$ then
         no $\expr_{i}$ is stuck in state $\state$.

         In other words: the target program does not reach a stuck state.

\item If $([\expr], \initstate) \step^{*} ([\val_{1}, \dots,
  \val_{n}], \state)$ then either:
   \begin{enumerate}
    \item  $([\Expr], \initstate) \step^{*}
  ([\Val_{1}, \dots, \Val_{m}], \State)$ and $\val_{1} \valobsrel
  \Val_{1}$, or
 
    \item there is an execution of $([\Expr], \initstate)$
  in which some thread gets stuck.
  \end{enumerate}

  That is, if \emph{all threads} of the target program terminate with a value, then either \emph{all threads} of the source program terminate in some execution \emph{and} the return values of the first (main) source thread and target thread are equivalent;  or the source program can get stuck.
         
\item 
  If $([\expr], \initstate)$ has a fair diverging execution, then $([\Expr], \initstate)$ also has a fair diverging execution.
  Recall that an infinite execution is \emph{fair} if every non-terminating thread takes infinitely many steps.
  This last condition makes the refinement a \emph{fair, termination-preserving} refinement.
\end{enumerate}

To understand why we have emphasized the importance of fair termination-preservation, suppose
we had miscompiled our running example as:
\begin{align*}
& \letp{\var}{\varB}{\heapnewch}{}  \letp\_{v}{\heaprecv\,\varB}{v}
\end{align*}
That is, we removed the sender thread. We consider this to be
an incorrect compilation; \ie
this program should \emph{not} be considered a refinement of the
source program. But imagine that we removed the word ``fair'' from
condition (3) above: then this bad target program would be considered a
refinement of the source. How is that? The program does not get stuck, so it satisfies condition (1).
Condition (2) holds vacuously since the target program will never terminate; it will loop in
$\heaprecv\,\varB$, forever waiting for a message. Finally,
to satisfy condition (3), we have to exhibit a diverging execution in the source program.
Without the fairness constraint, we can pick the (unfair) execution in which
the sender source thread never gets to run.

Notice that this unfair execution is very much like the example we gave in the introduction, where a thread waited forever for another one to perform a change in the shared state.

We consider such unfair executions to be unrealistic~\cite{LehmannPS81}; they should not give license to a compiler to entirely remove a thread from the compiled program.
That's why our notion of refinement restricts condition (3) to \emph{fair} executions, \ie executions in which all non-terminating threads take infinitely many steps.

\paragraph{Compiler correctness.}
We are now equipped to formally express the correctness statement of our compiler:
\begin{thm}\label{thm:compiler}
For every \emph{well-typed} source program $\Expr$, we have that:
\[\compile{\Expr} \refines \Expr\]
\end{thm}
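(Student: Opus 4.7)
The plan is to reduce Theorem~\ref{thm:compiler} to an Iris refinement proof by building a logical relation between target and source values indexed by source types, then proving a fundamental lemma and appealing to an adequacy theorem. Concretely, I would define a value interpretation $\interpValRel{\val}{\Val}{}{\typ}$ and an expression interpretation $\interpExprRel{\expr}{\Expr}{}{\typ}$ in the logic. At base types the relation reduces to $\valobsrel$; at $\typ_1 \otensor \typ_2$ it is the pointwise lifting; at $\typ_1 \lolli \typ_2$ it is the usual closure relation, quantifying over related arguments and relating the resulting computations; and the expression relation takes a resource $\ownThread{i}{\Expr}$ and requires that running $\expr$ from a target heap satisfying an appropriate invariant eventually produces a related value together with a reduced source thread $\ownThread{i}{\Val}$. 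The proof of the theorem then proceeds by a fundamental lemma stating that for all $\Gamma \vdash \Expr : \typ$, the pair $(\compile{\Expr}, \Expr)$ lies in the open expression relation, followed by adequacy to translate a closed refinement at type $\typ$ into the concrete conditions (1)--(3) of \Sref{sec:refinement}.

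The crux is the case of session types $\typS{\styp}$. Here the target end-point is a reference $\heaploc$ into a shared linked list, while the source end-point is $\locsidevar$ backed by a pair of buffers in the source state. I would introduce a session invariant $\SessionProt{}{\heaploc}{\locsidevar}{\styp}$ parameterised by ghost state that tracks which of the two buffers is currently ``live'' (empty on the other side, thanks to duality), together with a $\linklist{}{}{}$ predicate relating the linked-list segment to that buffer's contents at the residual session type $\styp$. Affinity of the source type system gives that at most one side holds write permission at a time, so a single list faithfully represents the pair of buffers. Proving the fundamental lemma is then mostly symbol-pushing in Iris for the functional constructs; the interesting obligations arise for \ruleref{NewChTyp}, \ruleref{Send}, and \ruleref{Recv}, where I must allocate or update the session invariant and simultaneously simulate the corresponding source steps using the $\ownThread{}{}$ resource, producing an $\interpValRel{}{}{}{\styp}$ at the residual type for the end-point returned by the primitive.

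The hard part will be the loop in $\heaprecv$ combined with fair termination preservation. When the buffer is empty, $\heaprecv$ spins, and we must match each target iteration with an idle step \ruleref{RecvRightIdle} of the source, so that infinitely many unproductive target steps correspond to infinitely many (also idle) source steps; when a message finally arrives, we must transition to the \ruleref{RecvRight} step. Because Iris triples are step-indexed, one cannot directly argue about infinite executions; here I would rely on the extension described in \Sref{sec:extensions}, using that the source language has only bounded non-determinism to lift the step-indexed refinement into a genuine fair one. This requires the session invariant to be chosen so that the simulation can ``wait'' for the dual thread without losing the ability to resume, which in turn forces the invariant to be closed under the evolving session-type protocol and to expose sufficient ghost tokens for both endpoints. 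With the logical relation and session invariant set up correctly, the remaining cases of the fundamental lemma are routine Iris proofs, and adequacy packages everything into the statement $\compile{\Expr} \refines \Expr$.
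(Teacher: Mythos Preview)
Your proposal is essentially the paper's approach: a binary logical relation in the logic, a fundamental lemma by induction on typing, and then \ruleref{Ht-refine} (backed by the adequacy lemmas of \Sref{sec:extensions}) to extract conditions (1)--(3) of the refinement. Your treatment of session types via a protocol invariant that ties the linked list to the live buffer, with tokens for each endpoint, is exactly what the paper does (it phrases this as an STS whose states record how many messages each side has sent/received and the current end-of-list locations).

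One technical point you do not mention, and which you will hit when writing the definition down: the value relation at a session type $\styp$ must refer to the protocol invariant, but that invariant in turn asserts that the buffered messages are related at their payload types---which may themselves be session types. This makes the definition of $\interpValRel{}{}{}{\typ}$ circular. The paper resolves this by parameterising the session STS interpretation by an abstract type interpretation $\Theta$, guarding the recursive occurrence with Iris's $\later$ modality, and then taking $\interpValRel{}{}{}{-}$ as the unique fixed point of a guarded-contractive functional. Without this, the definition is not well-founded; with it, the $\later$ reappears in the proof of the \ruleref{Recv} case and must be stripped (which is possible because $\heaprecv$ takes at least one physical step). Everything else in your outline matches the paper.
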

We prove this theorem in \secref{sec:session-proof}. In the intervening sections, we first develop and explain a logic to help carry out this proof.

\newcommand\sourcerulefig{\begin{figure}
\noindent \textbf{Step Shift Rules:} (all $d$ and $d'$ must be $\leq$ some fixed upper-bound $D$)
\let\oldcr\\
\begin{mathpar}
\inferH{src-newch}{}
{
{\begin{aligned}
    &\ownThreadD{i}{\lctx[\newch]}{\delay} \svs %
    \Exists\chanloc. \ownThreadD{i}{\lctx[(\chanloc_\lside, \chanloc_\rside)]}{\delay'} *
    \chanloc \smapsto ([],[])
  \end{aligned}}
}

\inferH{src-recv-right-miss}{}
{
{\begin{aligned}
    &\ownThreadD{i}{\lctx[\recv{\chanloc_\rside}]}{\delay} * \chanloc \smapsto ([],
    \bufrl) \svs %
     \ownThreadD{i}{\lctx[\recv{\chanloc_\rside}]}{\delay'} * \chanloc
    \smapsto ([], \bufrl)
  \end{aligned}}
}

\inferH{src-recv-right-hit}{}
{
{\begin{aligned}
    &\ownThreadD{i}{\lctx[\recv{\chanloc_\rside}]}{\delay} * \chanloc \smapsto (\val\,\buflr,
    \bufrl) \svs %
    \ownThreadD{i}{\lctx[(\chanloc_\rside,\val)]}{\delay'} * \chanloc
    \smapsto (\buflr, \bufrl)
  \end{aligned}}
}

\inferH{src-send-left}{}
{
  {\begin{aligned}
    &\ownThreadD{i}{\lctx[\send{\chanloc_\lside}\val]}{\delay} * \chanloc \smapsto (\buflr, \bufrl)
    \svs %
    \ownThreadD{i}{\lctx[\chanloc_\lside]}{\delay'} * \chanloc \smapsto (\buflr\,\val,
    \bufrl)
  \end{aligned}}
}

\inferH{src-fork}{}
{
{\begin{aligned}
  &\ownThreadD{i}{\lctx[\fork\Expr]}{\delay} \svs %
   \Exists j. \ownThreadD{i}{\lctx[()]}{\delay'} 
   * \ownThreadD{j}{\Expr}{\delay_\f}
  \end{aligned}}
}

\inferH{src-delay}{}
{d' < d \vdash \ownThreadD{i}{\lctx[\Expr]}{\delay} \svs \ownThreadD{i}{\lctx[\Expr]}{\delay'}}

\inferH{src-pure-step}
{\expr_1 \step \expr_2}
{\ownThreadD{i}{\expr_1}{d} \svs \ownThreadD{i}{\expr_2}{d'}}

\inferH{src-stopped}{}
{\ownThreadD{i}{\Val}{0} \vdash \stopped}

{\text{(Symmetric rules and side-condition on $d'$ omitted.)}}
\end{mathpar}
\noindent \textbf{Basic Hoare Triples:}
\begin{mathpar}
\inferH{ml-alloc} {\forall \var.\,P \svs Q}
       {\hoare{\prop}{\alloc{\val}}{\Ret\var. Q * \var\mapsto\val}[]}

\inferH{ml-load}
{P  \svs{}{} [\val/\varB]Q}
{\hoare{P * \var \mapsto \val}{\load\var}{\Ret\varB. Q * \var \mapsto \val}[]}

\inferH{ml-store}
{P \svs{}{} Q}
{\hoare{P * \var \mapsto \val}{\store\var\valB}{Q * \var \mapsto \valB}}

\inferH{ml-fork}
{P \svs Q_0 * Q_1 \\\\ \hoare{Q_0}{\expr}{\stopped} \and \hoare{Q_1}{\expr'}\propC}
{\hoare{P}{\fork\expr; \expr'}{\propC}[]}

\inferH{ml-rec}
{ \prop \svs \prop' \and (\All \val. \hoare {\prop} {(\rec f x  \expr)\,\val} {\Ret \valB. \propB}) \Ra  \All \val. \hoare{\prop'} {[\rec f x  \expr/f,\val/x]\expr} {\Ret \valB. \propB} 
 }
{   \All\val. \hoare {\prop} {(\rec f x  \expr)\,\val} {\Ret\valB. \propB} }

\inferH{Ht-frame}
  {\hoare{\prop}{\expr}{\Ret\val.\propB}[]}
  {\hoare{\prop * \aff{\propC}}{\expr}{\Ret\val.\propB * \aff{\propC}}[]}

\inferH{step-frame}
{ \prop \svs \propB}
{\prop * \aff{\propC} \svs \propB * \aff{\propC}}
  
\inferH{Ht-csq}
  {\prop \vs \prop' \\
    \hoare{\prop'}{\expr}{\Ret\val.\propB'} \\   
   \All \val. \propB' \vs \propB}
  {\hoare{\prop}{\expr}{\Ret\val.\propB}}
\end{mathpar}
\noindent \textbf{Refinement Rule:}
\begin{mathpar}
  \inferH{Ht-refine}
  {\hoare{\ownThreadD{i}{\Expr}{\delay}}{\expr}{\Ret \val. \Exists\Val. \ownThreadD{i}{\Val}{0} * \val \valobsrel \Val}[]}
  {\expr \refines \Expr}
\end{mathpar}
\caption{Selection of rules for step shifts and Hoare triples}
\label{fig:hoare-rules}
\end{figure}
}

\newcommand\examplestsfig{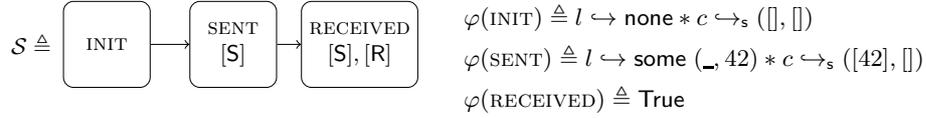
\begin{figure}[t]
  \centering
  \begin{minipage}[c]{0.5\linewidth}
   \begin{tikzpicture}[sts, transform canvas={scale=0.95}]
    \node[draw=none] at (.5,0) {$\mathcal{\STSS}\eqdef{}$};
    \node at (1.5,0)  (init){$\textsc{init}$};
    \node at (3.25,0)  (sent) {$\textsc{sent}$\\$[\textlog{S}]$};
    \node at (5,0)  (recv) {$\textsc{received}$\\$[\textlog{S}],[\textlog{R}]$};
    \path[sts_arrows] (init) edge  (sent)
        (sent) edge (recv);
  \end{tikzpicture}
  \end{minipage}\hfill%
  \begin{minipage}[c]{0.5\linewidth}
  \begin{align*}
    &\pred(\textsc{init}) \eqdef{} l \mapsto \none * \chanloc \smapsto ([],[]) \\
    &\pred(\textsc{sent}) \eqdef{} l \mapsto \some{(\any, 42)} * \chanloc \smapsto ([42],[]) \\
    &\pred(\textsc{received}) \eqdef{} \TRUE 
  \end{align*}
  \end{minipage}
  \caption{STS for the example}
  \label{fig:example-sts}
\end{figure}
}

\section{A Logic for Proving Refinement}
\label{sec:iris}

Proving \thmref{thm:compiler} is a challenging exercise.
Both the source and the target program are written in a concurrent language with higher-order state, which is always a difficult combination to reason about. Moreover, the invariant relating the channels and buffers to their implementation as linked lists is non-trivial and relies on well-typedness of the source program.

The contribution of this paper is to provide a logic powerful enough to prove theorems like \thmref{thm:compiler}.
In this section, we will give the reader an impression of both the logic and the proof by working through a proof of one concrete instance of our general result: we will prove that the translation of our running example is in fact a refinement of its source.

\subsection{Refinement as a Hoare Logic}
\label{sec:iris-overview}

Our logic is an extension of \emph{Iris}~\citep{iris,iris2}, a concurrent higher-order separation logic.
We use the ideas presented by~\citet{caresl} to extend this (unary) Hoare logic with reasoning principles for refinement.
Finally, we add some further extensions which become necessary due to the \emph{termination-preserving} nature of our refinement.
We will highlight these extensions as we go.

\sourcerulefig

The following grammar covers the assertions from our logic that we will need:\footnote{Note that many of these assertions are not primitive to the logic, but are themselves defined using more basic assertions provided by the logic. For instance, the Hoare triple is actually defined in terms of a \emph{weakest precondition} assertion. See \citet{iris,iris2} for further details.}
\begin{align*}
  \prop \bnfdef{}&
    \FALSE \mid
    \TRUE \mid
    \prop \lor \prop \mid
    \prop * \prop \mid
    \aff\prop \mid
    \Exists \var. \prop \mid
    \All \var. \prop \mid
    l \mapsto \val \mid
    \ownThreadD{i}{\Expr}{d} \mid
\\&
    \stopped \mid
    \chanloc \smapsto (\buflr, \bufrl) \mid
    \STSSt(s, T) \mid
    \hoare\prop\expr{\Ret\var.\propB} \mid
    \prop \vs \propB \mid
    \prop \svs \propB \mid
    \dots
\end{align*}
Many of these assertions are standard in separation logics, and our example proof will illustrate the non-standard ones.

Recalling the example and its translation,we want to prove:
\begin{align*}
  & \letp{\var}{\varB}{\heapnewch}{} & & \letp{\var}{\varB}{\newch}{} \\
& \fork{\heapsend\,\var\,{42}}; & \quad\refines\quad\quad & \fork{\send\var{42}}; \\
& \letp\_{v}{\heaprecv\,\varB}{v} & & \letp\_{v}{\recv\varB}{v}
\end{align*}
or, for short, $\expr_{\textrm{ex}} \refines \Expr_{\textrm{ex}}$.
Following \ruleref{Ht-refine} (\figref{fig:hoare-rules}), it is enough to prove
\begin{equation} \hoare{\ownThreadD{i}{\Expr_{\textrm{ex}}}\delay}{\expr_{\textrm{ex}}}{\Ret \val. \Exists\Val. \ownThreadD{i}{\Val}0 * \val \valobsrel \Val} \label{eq:ht-refine} \end{equation}

In other words, we ``just'' prove a Hoare triple for $\expr_{\textrm{ex}}$ (the MiniML program).
In order to obtain a refinement from a Hoare proof, we equip our logic with assertions talking about the source program $\Expr$.
The assertion $\ownThreadD{i}{\Expr}\delay$ states that source-level thread $i$ is about to execute $\Expr$,
and we have \emph{delay} $\delay$ left. (We will come back to delays shortly.)
The assertion $\chanloc \smapsto (\buflr, \bufrl)$ says that source-level channel $\chanloc$ currently has buffer contents $(\buflr, \bufrl)$.
As usual in separation logic, both of these assertions furthermore assert \emph{exclusive ownership} of their thread or channel.
For example, in the case of $\chanloc \smapsto (\buflr, \bufrl)$, this means that no other thread can access the channel and we are free to mutate it (\ie send or receive messages) -- we will see later how the logic allows threads to share these resources.
Put together, these two assertions let us control the complete state of the source program's execution.

So far, we have not described anything new.
However, to establish \emph{termination-preserving} refinement, we have to add two features to this logic: \emph{step shifts} and \emph{linear assertions}.

\paragraph{Step shifts.}
The rules given in \figref{fig:hoare-rules} let us manipulate the state of the source program's execution by \emph{taking steps in the source program}.
Such steps are expressed using \emph{step shifts} $\svs$.
Every step shift corresponds to one rule in the operational semantics (\figref{fig:session-lang-rules}).
For example, \ruleref{src-newch} expresses that if we have $\ownThreadD{i}{\lctx[\newch]}\delay$ (which means that the source is about to create a new channel), we can ``execute'' that $\newch$ and obtain some fresh channel $\chanloc$ and ownership of the channel ($\chanloc \smapsto ([], [])$).
We also obtain $\ownThreadD{i}{\lctx[\chanloc]}{\delay'}$, so we can go on executing the source thread.

Crucially, having $\prop \svs \propB$ shows that in going from $\prop$ to $\propB$, the source \emph{has} taken a step.
We need to force the source to take steps because the refinement we show is \emph{termination-preserving}.
If a proof could just decide not to ever step the source program, we could end up with a MiniML program $\expr$ diverging, while the corresponding source program $\Expr$ cannot actually diverge.
That would make \ruleref{Ht-refine} unsound.
So, to avoid this, all rules that take a step in the MiniML program (\figref{fig:hoare-rules}) force us to also take a step shift.

A strict implementation of this idea requires a lock-step execution of source and target program.
This is too restrictive.
For that reason, the $\textlog{source}$ assertion does not just record the state of the source thread, but also a \emph{delay} $\delay$.
Decrementing the delay counts as taking a step in the source (\ruleref{src-delay}). When we take an actual source step, we get to reset the delay to some new $\delay'$ -- so long as $\delay'$ is less than or equal to some fixed upper bound $D$ that we use throughout the proof.
There are also rules that allow executing \emph{multiple} source steps when taking just a single step in the target program; we omit these rules for brevity.
For the remainder of this proof, we will also gloss over the bookkeeping for the delay and just write $\ownThread{i}{\expr}$.

The assertion $\stopped$ expresses that a source thread can no longer take steps. As expected, this happens when the source thread reaches a value (\ruleref{src-stopped}).

\paragraph{Linearity.}
There is one last ingredient we have to explain before we start the actual verification: \emph{linearity}.
Assertions in our logic are generally \emph{linear}, which means they cannot be ``thrown away'', \ie $\prop * \propB \proves \prop$ does not hold generically in $\prop$ and $\propB$.
As a consequence, assertions represent not only the \emph{right} to perform certain actions (like modifying memory), but also the \emph{obligation} to keep performing steps in the source program.
This ensures that we do not ``lose track'' of a source thread and stop performing step shifts justifying its continued execution.

The modality $\aff\prop$ says that we have a proof of $\prop$, and that this is an \emph{affine} proof -- so there are no obligations encoded in this assertion, and we can throw it away.
Some rules are restricted to affine assertions, \eg rules for framing around a Hoare triple or a step shift (\figref{fig:hoare-rules}; the rule \ruleref{Ht-csq} will be explained later).
Again, this affine requirement ensures that we do not ``smuggle'' a source thread around the obligation to perform steps in the source.
All the base assertions, with the exception of $\ownThread{i}\expr$, are affine.

Coming back to the Hoare triple~(\ref{eq:ht-refine}) above that we have to prove, the pre-condition $\ownThread{i}{\Expr_{\textrm{ex}}}$ expresses that we start out with a source program executing $\Expr_{\textrm{ex}}$ (and not owning any channels), and we somehow have to take steps in the source program to end up with $\ownThread{i}{\Val}$ such that $\Val$ is ``equivalent'' (in the sense defined in \Sref{sec:refinement}) to the return value of the target program.
Intuitively, because we can only manipulate $\textlog{source}$ by taking steps in the source program, and because we end up stepping from $\ownThread{i}{\Expr_{\textrm{ex}}}$ to ``the same'' return value as the one obtained from $\expr$, proving the Hoare triple actually establishes a refinement between the two programs.
Furthermore, since $\textlog{source}$ is linear and we perform a step shift at every step of the MiniML program, the refinement holds even for diverging executions.

\subsection{Proof of the Example}
The rest of this section will present in great detail the proof of our example~(\ref{eq:ht-refine}).
The rough structure of this proof goes as follows:  after a small introduction covering the allocation of the channel, we will motivate the need for \emph{state-transition systems} (STS), a structured way of controlling the interaction between cooperating threads.
We will define the STS used for the example and decompose the remainder of the proof into two pieces: one covering the sending thread and one for the receiving thread.

\paragraph{Getting started.}
The first statement in both source and target program is the allocation of a channel.
The following Hoare triple that's easily derived from \ruleref{ml-alloc} summarizes the action of $\heapnewch$: It allocates a channel in \emph{both} programs.
\begin{equation} \hoareHV{\ownThread{i}{\lctx[\newch]}}{\heapnewch}{\Ret\var.
      \Exists l, \chanloc. \var = (l,l) * l \mapsto \none * \chanloc \smapsto
      ([],[]) * %
      \ownThread{i}{\lctx[(\chanloc_\lside, \chanloc_\rside)]}
} 
\label{eq:ht-newch}
\end{equation}
Let us pause a moment to expand on that post-condition.
On the source side, we have a channel $c$ with both buffers being empty; on the target side we have a location $l$ representing the empty buffer with $\none$.
The return value $\var$ is a pair with both components being $l$.
Finally, the source thread changed from $\lctx[\newch]$ in the pre-condition to $\lctx[(\chanloc_\lside, \chanloc_\rside)]$, meaning that the $\newch$ has been executed and the context can now go on with its evaluation based on the pair $(\chanloc_\lside, \chanloc_\rside)$.

We apply this triple for $\heapnewch$ with the appropriate evaluation context $\lctx$ for the source program, and the post-condition of (\ref{eq:ht-newch}) becomes our new context of current assertions.
Next, we reduce the $\textlog{let}$ on both sides, so we end up with
\begin{equation} l \mapsto \none *
    \chanloc \smapsto ([],[])  * \ownThread{i}{\expr_{\mathrm{comm}}(\chanloc)}\label{eq:proofstate-1} \end{equation}
where
\[ \expr_{\mathrm{comm}}(\chanloc) \eqdef \fork{\send{\chanloc_\lside}{42}}; \letp\_{v}{\recv{\chanloc_\rside}}{v} \]
and the remaining MiniML code is
\[  \fork{\heapsend\,{l}\,{42}}; \letp\_{v}{\heaprecv\,{l}}{v} \]
(In the following, we will perform these pure reduction steps and the substitutions implicitly.)

As we can see, both programs are doing a $\textlog{fork}$ to concurrently send and receive messages on the same channel.
Usually, this would be ruled out by the exclusive nature of ownership in separation logic.
To enable sharing, the logic provides a notion of \emph{protocols} coordinating the interaction of multiple threads on the same shared state.
The protocol governs ownership of both $l$ (in the target) and $c$ (in the source), and describes which thread can perform which actions on this shared state.

\paragraph{State-transition systems.}
A structured way to describe protocols is the use of state-transition systems (STS), following the ideas of \citet{caresl}.
An STS $\STSS$ consists of a directed graph with the nodes denoting \emph{states} and the arrows denoting \emph{transitions}.

\examplestsfig

The STS for our example is given in \figref{fig:example-sts}.
It describes the interaction of our two threads over the shared buffer happening in three phases.
In the beginning, the buffer is empty (\textsc{init}).
Then the message is sent by the forked-off sending thread (\textsc{sent}).
Finally, the message is received by the main thread (\textsc{received}).

The STS also contains two \emph{tokens}.
Tokens are used to represent actions that only particular threads can perform.
In our example, the state \textsc{sent} requires the token $[\textlog{S}]$.
The STS enforces that, in order to step from \textsc{init} to \textsc{sent}, a thread must \emph{provide} (and give up) ownership of $[\textlog{S}]$.
This is called the \emph{law of token preservation}~\cite{caresl}: Because \textsc{sent} contains more tokens than \textsc{init}, the missing tokens have to be provided by the thread performing the transition.
Similarly, $[\textlog{R}]$ is needed to transition to the final state \textsc{received}.

To tie the abstract state of the STS to the rest of the verification, every STS comes with an \emph{interpretation} $\pred$.
For every state, it defines an affine assertion that has to hold at that state.
In our case, we require the buffer to be initially empty, and to contain $42$ in state \textsc{sent}.
Once we reach the final state, the programs no longer perform any action on their respective buffers, so we stop keeping track.

We need a way to track the state of the STS in our proof.
To this end, the assertion $\STSSt(s, T)$ states that the STS is \emph{at least} in state $s$, and that we own tokens $T$.
We cannot know the \emph{exact} current state of the STS because other threads may have performed further transitions in the mean time.
The proof rules for STSs can be found in the \appref; in the following, we will keep the reasoning about the STS on an intuitive level to smooth the exposition.

\paragraph{Plan for finishing the proof.}
Let us now come back to our example program.
We already described the STS we are going to use for the verification (\figref{fig:example-sts}).
The next step in the proof is thus to initialize said STS.

Remember our current context is (\ref{eq:proofstate-1}).
When allocating an STS, we get to pick its initial state -- that would be \textsc{init}, of course.
We have to provide $\pred(\textsc{init})$ to initialize the STS, so we give up ownership of $l$ and $\chanloc$.
In exchange, we obtain $\STSSt$ and the tokens.
Our  context is now
\begin{equation} \STSSt(\textsc{init}, \{[\textlog{S}],[\textlog{R}]\}) * \ownThread{i}{\expr_{\mathrm{comm}}(\chanloc)} \label{eq:proofstate-2} 
     \end{equation}

The next command executed in both programs is $\textlog{fork}$.
We are thus going to apply \ruleref{ml-fork} and prove the step shift using \ruleref{src-fork}.
The two remaining premises of \ruleref{ml-fork} are the following two Hoare triples:
\begin{align}
  &\!\!\hoare{\STSSt(\textsc{init}, [\textlog{S}]) * \ownThread{j}{\send{\chanloc_\lside}{42}}}{\heapsend\,l\,{42}}{\stopped} \label{eq:ht-send}
\\[0.7em]
  &\!\!\hoareV{\STSSt(\textsc{init}, [\textlog{R}]) * \ownThread{j}{\letp\_{v}{\recv{\chanloc_\rside}}{v}}}{\letp\_{v}{\heaprecv\,l}{v}}{\Ret n. n = 42 *  \ownThread{j}{42} } \label{eq:ht-recv}
\end{align}
Showing these will complete the proof.
The post-condition $\stopped$ of (\ref{eq:ht-send}) is mandated by \ruleref{ml-fork}; we will discuss it when verifying that Hoare triple.
Note that we are splitting the $\STSSt$ to hand the two tokens that we own to two different threads.

\paragraph{Verifying the sender.}
To prove the sending Hoare triple (\ref{eq:ht-send}), the context we have available is
$ \STSSt(\textsc{init}, [\textlog{S}]) * \ownThread{j}{\send{\chanloc_\lside}{42}}$,
and the code we wish to verify is (unfolding the definition of $\heapsend$, and performing some pure reductions):
\begin{align*}
& \bind{l_{new}}{\alloc{\none}}{} \store{l}{\some{(l_{new}, 42)}}; l_{new}
\end{align*}

The allocation is easily handled with \ruleref{ml-alloc}, and it turns out we don't even need to remember anything about the returned $l_{new}$.

The next step is the core of this proof: showing that we can change the value stored in $l$.
Notice that we do not own $l \mapsto \_\,$; the STS ``owns'' $l$ as part of its interpretation.
So we will \emph{open} the STS to get access to $l$.

Looking at \figref{fig:example-sts}, we can see that doing the transition from \textsc{init} to \textsc{sent} requires the token $[\textlog{S}]$, \emph{which we own} -- as a consequence, nobody else could perform this transition.
It follows that the STS is currently in state \textsc{init}.
We obtain  $\pred(\textsc{init})$, so that we %
can apply \ruleref{ml-store} with \ruleref{src-send-left}, yielding
\begin{equation}
 l \mapsto \some(l',42) * \chanloc \smapsto ([],[]) * \ownThread{j}{\chanloc_\lside} \label{eq:proofstate-sender}
\end{equation}

To finish up accessing the STS, we have to pick a new state and show that we actually possess the tokens to move to said state.
In our case, we \emph{cannot} pick \textsc{received}, since we do not own the token $[\textlog{R}]$ necessary for that step.
Instead, we will pick $\textsc{sent}$ and give up our token.
This means we have to establish $\pred(\textsc{sent})$.
Doing so consumes most of our context (\ref{eq:proofstate-sender}), leaving only
$\ownThread{j}{\chanloc_\lside}$.
What remains to be done?
We have to establish the post-condition of our triple (\ref{eq:ht-send}), which is $\stopped$.
By \ruleref{src-stopped}, this immediately follows from the fact that we reduced the source thread to $\chanloc_\lside$, which is a value.

Notice that this last step was important:  We showed that when the MiniML thread terminates, so does the source thread.
The original $\textlog{fork}$ rule for Iris allows picking \emph{any} post-condition for the forked-off thread, because nothing happens any more with this thread once it terminates.
However, we wish to establish that if all MiniML threads terminate, then so do all source threads -- and for this reason, \ruleref{ml-fork} forces us to prove $\stopped$, which asserts that all the threads we keep track of have reduced to a value.
This finishes the proof of the sender.

\paragraph{Verifying the receiver.}
The next (and last) step in establishing the refinement (\ref{eq:ht-refine}) is to prove the Hoare triple for the receiving thread (\ref{eq:ht-recv}).
This is the target code to verify:
\[ \letp\_{v}{\heaprecv\,l}{v} \]
Since $\heaprecv$ is a recursive function, we use \ruleref{ml-rec}, which says that we can assume that recursive occurrences of $\heaprecv$ have already been proven correct.
It may be surprising to see this rule~-- after all, rules like \ruleref{ml-rec} are usually justified by saying that all we do is partial correctness.
Notice, however, that we are \emph{not} showing that $\Expr_{\textrm{ex}}$ terminates.
All we show is that, \emph{if} $\Expr_{\textrm{ex}}$ diverges, then so does $\expr_{\textrm{ex}}$.
That is, we are establishing termination-\emph{preservation}, not termination.

In continuing the proof, we thus get to assume correctness of the recursive call.
Our current context is
\begin{equation}
  \STSSt(\textsc{init}, [\textlog{R}]) * \ownThread{j}{\letp\_{v}{\recv{\chanloc_\rside}}{v}}
 \label{eq:proofstate-recv}
\end{equation}
and the code we are verifying is
\begin{align*}
& \match{\load{l}} \;
 \patcase{\none}{\heaprecv\, l} 
 \ALT \patcase{\some{(l',v)}}{(l', v)}
\;\matchend
\end{align*}
with post-condition $\Ret (\any, n). n = 42 *  \ownThread{j}{42}$.

The first command of this program is $\load l$.
To access $l$, we have to again open the STS.
Since we own $[\textlog{R}]$, we can rule out being in state \textsc{received}.
We perform a case distinction over the remaining two states.
\begin{itemize}
\item If we are in \textsc{init}, we get $l \mapsto \none *  \chanloc \smapsto ([],[])$ from the STS's $\pred(\textsc{received})$.
  We use \ruleref{ml-load} with \ruleref{src-recv-right-miss}.
  Notice how we use $\chanloc \smapsto ([],[])$ to justify performing an ``idle'' step in the source.
  This is crucial -- after all, we are potentially looping indefinitely in the target, reading $l$ over and over; we have to exhibit a corresponding diverging execution in the source.

  Since we did not change any state, we close the invariant again in the \textsc{init} state.
  Next, the program executes the $\none$ arm of the match: $\heaprecv\, l$.
  Here, we use our assumption that the recursive call is correct to finish the proof.

\item Otherwise, the current state is \textsc{sent}, and we obtain  $l \mapsto \some(\_, 42) * \chanloc \smapsto ([42],[])$.
  We use \ruleref{ml-load} with \ruleref{src-recv-right-hit}; 
  this time we know that the $\textlog{recv}$ in the source will succeed.
  We also know that we are loading $(\_, 42)$ from $l$.
  We pick \textsc{received} as the next state (giving up our STS token), and trivially establish $\pred(\textsc{received})$.
  We can now throw away ownership of $l$ and $\chanloc$ as well as $\STSSt(\textsc{received})$ since we no longer need them -- we can do this because all these assertions are affine.

  All that remains is the source thread:
  \begin{align*}
  & \ownThread{j}{\letp\_{v}{(\chanloc_\rside,42)}{v}} 
  \end{align*}

  Next, the target program  will execute the $\textlog{some}$ branch of the $\textlog{match}$.
  To finish, we need to justify the post-condition:
  $\Ret (\any, n). n = 42 *  \ownThread{j}{42}$.
  We already established that the second component of the value loaded from $l$ is $42$, and the source thread is easily reduced to $42$ as well.
\end{itemize}
  This finishes the proof of (\ref{eq:ht-recv}) and therefore of (\ref{eq:ht-refine}): we proved that
$ \expr_{\textrm{ex}} \refines \Expr_{\textrm{ex}} $.

\section{Soundness of the Logic}
\label{sec:extensions}

We have seen how to use our logic to establish a refinement
for a particular simple instance of our translation. We now need to show
that this logic is sound.

As already mentioned, our logic is an extension of Iris, so 
we need to adapt the soundness proof of Iris~\cite{iris2}.
The two extensions that were described in~\Sref{sec:iris-overview} are:
\begin{enumerate}

\item We add a notion of a \emph{step shift}, which is used to simulate source program threads.

\item We move from an affine logic to a linear logic. This is needed to capture
the idea that some resources (like $\ownThreadNoArg$) represent
\emph{obligations} that cannot be thrown away.
\end{enumerate}

In this section we describe how we adapt the semantic model of Iris to handle these changes.
Although our extensions sound simple, the modification of the model requires some care. 
Many of the features we used in \Sref{sec:iris}, such as STSs~\cite{iris} and reasoning about the source language, are \emph{derived} constructions that are not ``baked-in'' to the logic.
As we change the model, we need to ensure that all of these features can still be encoded.
We also strive to keep our extensions as general as possible so as to not unnecessarily restrict the flexibility of Iris.

\paragraph{Brief review of the Iris model.}
We start by recalling some aspects of the Iris model~\cite{iris2} that we modify in our extensions. A key concept is the notion of a \emph{resource}.
Resources describe the physical state of the program as well as additional \emph{ghost state} that is added for the purpose of verification and used, \eg to interpret STSs or the assertions talking about source programs.
Resources are instances of a partial commutative monoid-like algebraic structure; in particular, two resources $\melt$, $\meltB$ can be \emph{composed} to $\melt \mtimes \meltB$.
This operation is used to combine resources held by different threads. When the composition $\melt \mtimes \meltB$ is defined, the elements $\melt$ and $\meltB$ are said to be \emph{compatible}.
Iris always ensures that the resources held by different threads are compatible.
This guarantees that, \eg different threads cannot own the same channel or the same STS token.
The operation also gives rise to a pre-order on resources, defined as $\melt_1 \mincl
\melt_2 \eqdef \Exists \melt_3. \melt_1 \mtimes \melt_3 = \melt_2 $, \ie $\melt_1$ is included in $\melt_2$ if the former can be \emph{extended} to the latter by adding some additional resource $\melt_3$.

Ideally, we would just interpret an assertion $\prop$ as a set of resources.
For technical reasons (that we will mostly gloss over), Iris needs an additional component: the \emph{step-index} $n$.
An assertion is thus interpreted as a set of pairs $(n, \melt)$ of step-indices and resources.
We write $n, \melt
\models \prop$ to indicate that $(n, \melt) \in \prop$, and read this
as saying that $\melt$ satisfies $\prop$ for $n$ steps of the
target program's execution.

Iris furthermore demands that assertions (interpreted as sets) satisfy two \emph{closure properties}:
They must be closed under larger resources and smaller step-indices.  Formally:
 \begin{enumerate}
 \item If $n, \melt \models \prop$ and $\melt \mincl \melt'$, then $n, \melt' \models \prop$.
 \item If $n, \melt \models \prop$ and $n' \leq n$, then $n', \melt \models \prop$.
 \end{enumerate}
The first point above makes Iris an \emph{affine} as opposed to a linear logic:
 we can always ``add-on'' more
resources and continue to satisfy an assertion.
Put differently, there is no way to state an \emph{upper bound} on our resources.
The second point says that
if $\prop$ holds for $n$ steps, then it also holds for fewer than $n$ steps.

To give a model to assertions like $\heaploc \mapsto v$, we need a function $\textlog{HeapRes}(\heaploc, v)$ describing, as a resource, a heap which maps location $\heaploc$ to $v$.
We then define:
\begin{alignat*}{3}
n, \melt &\models \heaploc \mapsto v
   &\quad \text{iff} \quad & \textlog{HeapRes}(\heaploc, v) \mincl \melt 
\end{alignat*}
Notice the use of $\mincl$, ensuring that the closure property (1) holds.

\paragraph{Equipping Iris with linear assertions.}
In order to move to a linear setting with minimal disruption to the
existing features of Iris, we replace the judgment
$n, \melt \models \prop$ with $n, \melt, \meltB \models \prop$. That
is, assertions are now sets of triples: a step-index and \emph{two}
resources. The downward closure condition on $n$ and the
upward closure condition on $\melt$ still apply, but we do not impose
such a condition on $\meltB$: %
this second resource will represent the ``linear piece'' of an assertion. Crucially, whereas affine assertions like $\heaploc \mapsto \val$ continue to ``live'' in the $a$ piece, the linear $\ownThreadNoArg$ resides in $\meltB$:
\begin{alignat*}{3}
n, \melt, \meltB &\models \heaploc \mapsto v
   &\quad \text{iff} \quad & \textlog{HeapRes}(\heaploc, v) \mincl \melt \land \meltB = \munit  \\
n, \melt, \meltB &\models \ownThreadD{i}{\Expr}{\delay}
   &\quad \text{iff} \quad & \textlog{SourceRes}(i, \Expr, \delay) = \meltB
\end{alignat*}
where $\munit$ is the unit of the monoid. We assume $\textlog{SourceRes}(i, \Expr, \delay)$ to define, as a resource, a source thread $i$ executing $\Expr$ with $\delay$ delay steps left.

As we can see, $\ownThreadNoArg$ describes the \emph{exact} linear resources $\meltB$ that we own, whereas $\mapsto$ merely states a \emph{lower bound} on the affine resources $\melt$ (due to the upwards closure on $\melt$).
Notice that $\melt$ and $\meltB$ are both elements of the same set of resources; it is just their treatment in the closure properties of assertions which makes one affine and the other linear.
Because there is no upward closure condition on the second monoid
element, the resulting logic is not affine: if $n, \melt, \meltB
\models \prop * \propB$, then it is not necessarily the case that $n,
\melt, \meltB \models \prop$.

We define the affine modality by:
\begin{alignat*}{3}
n, \melt, \meltB &\models \aff{\prop}
   &\quad \text{iff} \quad n, \melt, \meltB \models \prop \wedge  \meltB = \munit
\end{alignat*}
This says that in addition to satisfying $\prop$, $\meltB$ should
equal the unit of the monoid. That is, the linear part is
``empty''; there are no obligations encoded in $\prop$.
That makes it sound to throw away $\prop$ or to frame it.

The advantage of this ``two world'' model is that it does
not require us to change many of the encodings already present in
Iris, like STSs.

\paragraph{Step Shifts.} We are now ready to explain the ideas behind the \emph{step shift}.
Remember the goal here is to account for the steps taken in the source program, in a way that we can prove refinements by proving Hoare triples (\ruleref{Ht-refine}).
This is subtle because by the definition of refinement (\Sref{sec:refinement}), we need to make statements even about infinite executions, \ie executions that never have to satisfy the post-condition.

The key idea is to equip the resources of Iris with a relation that represents a notion of \emph{taking a (resource) step}. We write $\melt \mstep \meltB$, and say that $\melt$ \emph{steps to} $\meltB$.
We will then pick the resources in such a way as to represent the status of a source program,\footnote{Iris is designed to be parametric in the choice of resources, so we can pick a particular resource for this source language and still use most of the general Iris machinery.} and we define the resource step to be taking a step in the source program.
All the other components of the resource, like STSs, will not be changed by resource steps.

Recall that the resources owned by different threads always need to be compatible. To ensure this, we define a relation that performs a step while maintaining compatibility with the resources owned by other threads.
Formally, a \emph{frame-preserving step-update} $\melt, \meltB \mupd \melt', \meltB'$
holds if $\meltB \mstep \meltB'$ and for all $\meltC$ such that $\melt \mtimes \meltB \mtimes \meltC$ is defined, so is $\melt' \mtimes \meltB' \mtimes \meltC$. 
The intuition is that, if a thread owns some resources $\melt$ and $\meltB$, that restricts the ownership of other threads to \emph{frames} $\meltC$ that are compatible with $\melt$ and $\meltB$. %
Since $\melt'$ and $\meltB'$ are also compatible with the frame, the step is guaranteed not to interfere with resources owned by other threads.

These frame-preserving step-updates are reflected into the logic through the \emph{step shift} assertions:
$\prop \svs \propB$ holds if, whenever some resources satisfy $\prop$, it is possible to perform a frame-preserving step-update to resources satisfying $\propB$.

We then connect Hoare triples to these resource steps.
To this end, we change the definition of Hoare triples so that whenever a target thread takes a
step, we have to also take a step on our resources. 
 This gives rise to the proof rules in \figref{fig:hoare-rules}, which force the user of the logic to perform a step shift alongside every step of the MiniML program.
We also enforce that forked-off threads must have a post-condition of $\stopped$, ensuring that target language threads cannot stop executing while source language threads are still running.

\paragraph{Soundness of the refinement.}
Having extended the definition of Hoare triples in this way, we can
prove our refinement theorem. Recall that the
definition of refinement had three parts.
For each of these parts, we proved an adequacy theorem for our extensions relating Hoare triples to properties of program executions.
These theorems are parameterized by the kind of resource picked by the user, and in particular the kind of resource \emph{step}.
Below, we show these theorems specialized to the case where resource steps correspond to source language steps.

The
first refinement condition, which says that the target program must not get stuck, follows
from a ``safety'' theorem  that was already present in the original Iris:
\begin{lem}\label{lem:adequate-safe}
If $\hoare{\ownThreadD{i}{\Expr}{d}}{\expr}{\Ret\val. \ownThreadD{i}{\Val}{0} *
  \aff{\val \valobsrel \Val}}$ holds and we have $([\expr], \initstate)
\step^\ast ([\expr_1, \dots, \expr_n], \state)$, then each $\expr_i$ is
either a value or it can take a step in state $\state$.
\end{lem}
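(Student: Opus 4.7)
The plan is to adapt the standard adequacy proof from Iris to account for our two extensions (linear resources and step shifts). Safety is the simplest of the three adequacy properties because it concerns only the target program, not the source program nor (non)termination; intuitively, the step shift and linearity machinery is designed so that it can only \emph{strengthen} the proof obligations on the user of the logic, and cannot break the standard Iris safety argument.

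First, I would unfold the Hoare triple into the underlying weakest-precondition (wp) assertion. The wp satisfies the usual fixpoint: either the expression is a value and the post-condition holds, or the expression is reducible from the current target state, and for every target reduction, after performing a frame-preserving step-update on the resources, wp continues to hold of the reduct at a smaller step-index. In our extended model, these resources now live in the two-world setting $(n, \melt, \meltB)$, but the reducibility clause itself is unchanged.

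Next, I would prove a preservation lemma, analogous to the original one in Iris: if a whole configuration $(\vec{\expr}, \state)$ satisfies the adequacy invariant at step-index $n+1$ (meaning the wp of each thread, combined with a world-satisfaction predicate $\wsat$ tying the resources to the target heap), and it steps to $(\vec{\expr}', \state')$ via a target reduction of some thread $\expr_i$, then the new configuration satisfies the invariant at step-index $n$. Forking in the target extends the pool with a new wp whose post-condition is $\stopped$. From the preservation lemma and the reducibility clause of wp, safety follows immediately: for any concrete execution $([\expr], \initstate) \step^k (\vec{\expr}, \state)$ of length $k$, we pick initial step-index $n = k+1$, set up the initial adequacy invariant by allocating the resource $\textlog{SourceRes}(i, \Expr, d)$ that satisfies $\ownThreadD{i}{\Expr}{d}$, iterate preservation $k$ times, and then read off safety from the remaining wp at step-index~$1$.

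The main obstacle is correctly threading the linear component $\meltB$ through the preservation argument. In the original Iris proof, closure under $\mincl$ meant that we could freely expand resources at will; with the linear second world, the total $\meltB$ of the pool must be tracked exactly, including contributions from forked-off threads whose post-conditions (namely $\stopped$) introduce linear source-thread ownership that has to be accounted for. For safety, however, we only need to know that the linear piece \emph{exists} at the end, not what it looks like, so the argument goes through by carrying an existential over $\meltB$ in the adequacy invariant and composing it with the forked threads' linear contributions at each fork step. Once this bookkeeping is settled, the core reducibility argument is essentially the same as in the original Iris adequacy proof.
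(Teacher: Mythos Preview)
Your proposal is correct and matches the paper's approach: the paper simply notes that this safety theorem ``was already present in the original Iris'' and does not spell out a proof, relying on the standard Iris adequacy argument carried over to the extended model. Your elaboration of how the two-world linear component is threaded through the preservation lemma (and why, for mere safety, only the existence of the linear piece matters) is exactly the bookkeeping the paper's Coq development performs but leaves implicit in the prose.
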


The second refinement condition says that if the execution of $\expr$
terminates, then there should be a related terminating execution in
the source. Remember that the definition of the Hoare triple requires
us to take a step in the source whenever the target steps (modulo a
finite number of delays). Hence a proof of such a triple must have
``built-up'' the desired source execution:

\begin{lem}\label{lem:adequate-finite}
If
$\hoare{\ownThreadD{i}{\Expr}{d}}{\expr}{\Ret\val. \ownThreadD{i}{\Val}{0}
  * \aff{\val \valobsrel \Val}}$ holds and we have $([\expr], \initstate)
\step^\ast ([\val_1, \dots, \val_n], \state)$, then there exists
$\Val_1$, $\Expr_2$, $\dots$, $\Expr_m$, $\State$ {s.t.}\ $([\Expr],
\initstate) \step^\ast ([\Val_1, \Expr_2, \dots, \Expr_m], \State)$. Moreover, each
$\Expr_i$ is either stuck or a value, and $\val_1 \valobsrel \Val_1$.
\end{lem}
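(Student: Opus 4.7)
The plan is to unfold the semantics of the Hoare triple to extract a source execution trace built up by the step shifts that accompany each target step. Recall that the Hoare triple is defined in Iris via a weakest-precondition predicate which demands, at each target reduction step, a frame-preserving step-update of the underlying resources. In our extension, such updates correspond to either source-language steps (via rules like \ruleref{src-send-left}), fork events creating a new $\textlog{SourceRes}$ (via \ruleref{src-fork}), or bounded delay decrements (via \ruleref{src-delay}). Since $\textlog{SourceRes}$ always records the concrete source expression and delay counter, the sequence of step-updates performed while verifying a target reduction uniquely determines a corresponding source reduction.

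First I would prove a strengthened invariant by induction on the length of the target reduction $([\expr], \initstate) \step^\ast ([\expr_1, \dots, \expr_n], \state)$. The invariant says: there exist a source configuration $([\Expr_1, \dots, \Expr_m], \State)$ reachable from $([\Expr], \initstate)$, a resource $\melt$, and a step-index $k$ such that $\melt$ satisfies the running weakest precondition for the current target configuration at step-index $k$, and the $\textlog{SourceRes}$ components of $\melt$ witness exactly the source threads $\Expr_1, \dots, \Expr_m$ together with their current delays. At each target step, I would unfold one level of the weakest precondition, obtain the step-update from the semantic definition of the Hoare triple, and, by case analysis on the resource change (a source step, a delay, or a fork), extend the source configuration and decrease $k$.

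When the target reaches the configuration $([\val_1, \dots, \val_n], \state)$, I instantiate the post-condition. For the main thread, this yields $\ownThreadD{i}{\Val_1}{0}$ together with $\aff{\val_1 \valobsrel \Val_1}$, witnessing that the invariant's first source thread has been stepped to some $\Val_1$ with $\val_1 \valobsrel \Val_1$. The side condition of \ruleref{ml-fork} forces every forked target thread to establish $\stopped$, which by \ruleref{src-stopped} means its tracked source thread has been reduced to a value. The remaining source threads $\Expr_2, \dots, \Expr_m$ in the extracted configuration are therefore either values (when reached via a $\stopped$ post-condition) or stuck, as allowed by the statement.

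The main obstacle is bridging the step-indexed resource model with a concrete source execution: a single target step can, via the omitted multi-step source rules or via consumption of stored delays, correspond to any bounded-but-arbitrary number of actual source reductions, and the delay mechanism threatens to let a proof perform no source work at all in some steps. The saving grace is that the $\textlog{SourceRes}$ resource carries the full current source expression, so the source trace is determined by the sequence of frame-preserving updates regardless of how delays interleave with genuine reductions; the bound $D$ on delays is needed only for the fair-divergence result, not here. A secondary subtlety is the bookkeeping for forked threads, whose newly introduced $\textlog{SourceRes}$ must be threaded through the remainder of the induction and whose eventual $\stopped$ post-condition must be matched up with the corresponding entry in the source configuration; this is mechanical but is the main source of formalization overhead in the Coq development.
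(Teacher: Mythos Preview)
Your approach matches the paper's: an induction on the finite target execution that extracts source steps from the step-shifts mandated by the weakest-precondition semantics, with linearity of $\ownThreadNoArg$ and the $\stopped$ post-condition for forked threads ensuring every source thread is accounted for at the end. One point the paper makes explicit that you leave implicit: since the target execution is finite and the triple holds at \emph{every} step-index, you should instantiate the initial $k$ to exceed the length of the target execution so that decrementing $k$ at each unfolding never bottoms out---this is exactly what distinguishes the finite case from \lemref{lem:inf-fair}.
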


Here, we are already making crucial use of both linearity of $\ownThreadNoArg$ and the fact that forked-off threads must have post-condition $\stopped$: if it were not for these requirements, even when all target threads terminated with a value $v_i$, we could not rule out the existence of source threads that can go on executing.

Finally, we come to the third condition, which says fair diverging executions of the target
should correspond to fair diverging executions of the source:

\begin{lem}\label{lem:inf-fair}
If $\hoare{\ownThreadD{i}{\Expr}{d}}{\expr}{\Ret\val. \ownThreadD{i}{\Val}{0} *
  \aff{\val \valobsrel \Val}}$ holds and $([\expr], \initstate)$ has a diverging execution,
then $([\Expr], \initstate)$ has a diverging execution as well. Moreover, if the diverging target execution is fair, then the source execution is too.
\end{lem}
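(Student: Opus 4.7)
The proof will combine step-indexed adequacy of the Hoare triple with a König's-lemma argument that crucially relies on the \emph{bounded non-determinism} of the source language --- the observation highlighted in the introduction. The intuition is that, for each step-index $n$, the validity of the Hoare triple yields a finite witness source execution consistent with the first $n$ target steps; these witnesses may disagree across different $n$, but since source configurations have only finitely many successors they live in a finitely-branching tree and so can be combined into a single infinite path.

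Concretely, the first step is to strengthen the internal, step-indexed adequacy result underlying \lemref{lem:adequate-safe} and \lemref{lem:adequate-finite} to a \emph{simulation} statement: for every $n$ and every target prefix $([\expr], \initstate) \step^n (\vec{e}_n, \sigma_n)$, there exists a finite source execution $\rho_n$ from $([\Expr], \initstate)$, together with a bookkeeping map linking each spawned target thread to a spawned source thread, such that the step-shift discipline enforced by the Hoare-triple rules has been witnessed along $\rho_n$. This is proved by induction on $n$, unfolding at each step the resource-step obligation imposed on target reductions by the modified Hoare triple, and using \ruleref{src-fork} (together with \ruleref{ml-fork} and the $\stopped$ post-condition) to extend the thread bookkeeping. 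Because \ruleref{src-delay} requires strict decrease of the delay and each actual source step resets it to at most $D$, at least one ``real'' source advance is produced every $D+1$ target steps.

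The second step is to pass to the limit. Let $\mathcal{T}$ be the tree of finite source executions from $([\Expr], \initstate)$ ordered by the prefix relation. The source language has bounded non-determinism: at any configuration there are only finitely many enabled thread indices, and each per-thread reduction has finite branching, so $\mathcal{T}$ is finitely branching. By step~1 the length of $\rho_n$ grows without bound (at rate at least $n/(D+1)$), so $\mathcal{T}$ has nodes of arbitrary depth and hence, by König's lemma, an infinite path $\rho_\infty$. Moreover, by a standard diagonalization one can extract $\rho_\infty$ so that each of its prefixes appears as a prefix of infinitely many $\rho_n$; this lets properties stable under taking cofinal subsequences (such as the thread-identity bookkeeping) be transferred from the $\rho_n$ to $\rho_\infty$, giving a genuine diverging source execution.

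For fairness, the bookkeeping from step~1 assigns to each target thread that eventually appears a distinguished source thread in $\rho_\infty$. If a target thread $i$ takes infinitely many steps in the fair diverging target execution, then by the bound $D$ each $D+1$ of those target steps witness at least one real step of its linked source thread, and the diagonalization in step~2 preserves infinitely many such witnesses on $\rho_\infty$. Hence every non-terminated source thread takes infinitely many steps, i.e., $\rho_\infty$ is fair. The main obstacle I expect is precisely the mismatch between the $\rho_n$ at different step-indices --- there is no reason a priori for $\rho_n$ to be a prefix of $\rho_{n+1}$ --- and properly formulating the diagonal extraction so that both the divergence and the per-thread fairness propagate to the limit. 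Bounded non-determinism is doing essential work here: without it König's lemma fails and the whole strategy of building an infinite source execution from finite step-indexed witnesses breaks down.
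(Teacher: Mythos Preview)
Your proposal is correct and rests on the same essential insight as the paper: bounded non-determinism of the source language lets one pass from ``for every step-index $n$ there is a witness'' to a single infinite source execution. The paper packages this slightly differently. Rather than collecting witness executions $\rho_n$ and invoking K\"onig's lemma on the tree of all finite source executions, it proves a local \emph{quantifier-inversion} lemma---for a step-indexed predicate on a \emph{finite} set $X$, $(\forall n.\,\exists x.\, n \models \propC(x)) \Ra (\exists x.\,\forall n.\, n \models \propC(x))$, by pigeonhole---and then a \emph{bootstrap} lemma that applies it at a single target step: from $\forall n.\,\exists \prop_n.\, n \models \hoare{\ownThread{i}{\Expr}*\prop_n}{\expr}{\propB}$ and $\expr \step \expr'$, obtain one $\Expr'$ with $\Expr \step \Expr'$ and the same universally-quantified triple for $\expr'$. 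Iterating this bootstrap yields the infinite source execution directly, with the Hoare-triple invariant (and hence the thread bookkeeping) maintained at every stage.

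The two presentations are equivalent: your K\"onig argument, when unwound, \emph{is} the iterated pigeonhole (pick at each node the child extended by infinitely many $\rho_n$). The paper's formulation has the practical advantage that the Hoare triple---carrying the resources, the thread-ownership map, and the delay counters---is preserved as an explicit loop invariant, so the fairness bookkeeping you describe does not need a separate ``diagonalization'' step to be transferred to the limit; it is simply there at every iteration. Your worry that the $\rho_n$ need not be prefixes of one another is exactly what the paper's step-by-step construction sidesteps.
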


This is the hardest part of the soundness proof. We would like to
start by arguing that, just as for the finite case, if the target program took an
infinite number of steps, then the proof of the refinement triple must
give a corresponding infinite number of steps in the source
program.
Unfortunately, this argument is not so simple because of
step-indexing.  

In Iris, Hoare triples are themselves step-indexed sets.
We write $n \models \hoare{\prop}\expr{\propB}$ to say that the triple holds at step-index $n$. Then, when we say we have proved a Hoare triple, we mean
the triple holds for all step-indices $n$ and all resources satisfying the precondition.
As is usual with step-indexing, when a triple
$\hoare{\prop}\expr{\propB}$ holds for step-index $n$, that means when
the precondition is satisfied, execution of $\expr$ is safe for up-to
$n$ steps, and if it terminates within those $n$ steps, the
post-condition holds. In our case, it also means that each step of the
target program gives a step of the source program, for up to $n$
target steps. 

This restriction to only hold ``up to $n$ steps''
arises due to the way Hoare triples are defined in the model: when
proving the Hoare triple at step-index $n$, if $\expr$ steps to $\expr'$,
we are only required to show $(n-1) \models \hoare{\prop'}{\expr'}{\propB}$ for some $\prop'$.

The restriction to a finite number of steps did not bother us for
\lemref{lem:adequate-safe} and \lemref{lem:adequate-finite}. Since
they only deal with finite executions, and the Hoare triple holds for
\emph{all} starting indices $n$, we can simply pick $n$ to be greater
than the finite execution we are considering.  But we cannot do this
when we want to prove something about a diverging execution of the
target. Whatever $n$ we start with, it is not big enough to get the
infinite source execution we need.

\paragraph{Bounded non-determinism, infinite executions, and step-indexing.}

Our insight is that when the source language has only \emph{bounded
  non-determinism}, we can set up a more careful inductive
argument. By bounded non-determinism, we mean that each configuration
$([\Expr_, \ldots], \State)$ only has \emph{finitely many} possible
successor configurations. The key result is the following quantifier
inversion lemma:

\begin{lem}\label{lem:quantifier-inversion}
  Let $\propC$ be a step-indexed predicate on a finite set $X$. Then:
\[ (\forall n.\,\exists x.\, n \models \propC(x)) \Ra (\exists x.\, \forall n.\, n \models \propC(x)) \]
\end{lem}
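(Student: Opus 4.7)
The plan is to argue by contradiction, exploiting the downward closure of step-indexed predicates together with the finiteness of $X$. Concretely, I would suppose that the conclusion fails, so that for every $x \in X$ there exists some index $n_x$ at which $\propC(x)$ does not hold. Downward closure gives the key monotonicity fact we need: if $\propC(x)$ holds at index $n$, it also holds at every $n' \le n$; contrapositively, once $\propC(x)$ fails at $n_x$, it continues to fail at every $n \ge n_x$.

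Now I would use finiteness of $X$. Since $X$ has only finitely many elements, the set $\{n_x \mid x \in X\}$ is finite, so we can take $N = \max_{x \in X} n_x$ (the case $X = \emptyset$ is vacuous, since the hypothesis $\forall n.\,\exists x \in X.\, n \models \propC(x)$ would already be false). By the monotonicity observation, for every $x \in X$ we have $\lnot(N \models \propC(x))$. But the hypothesis applied at $n = N$ yields some $x \in X$ with $N \models \propC(x)$, a contradiction. Hence some single witness $x$ works uniformly for every step-index.

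I expect the proof to be short and the statement to go through essentially as above; there is no step-indexing subtlety beyond the use of downward closure. The main conceptual point to highlight in the write-up is \emph{why} this lemma is the right tool for \lemref{lem:inf-fair}: bounded non-determinism of the source language implies that the set of finite source-level executions extending a given prefix through a bounded number of steps is finite, so when we try to extract an infinite source execution from an infinite target execution, at each stage we will quantifier-invert over a finite set of candidate extensions. The only real obstacle is presentational — making sure the statement is phrased generally enough (arbitrary finite $X$, arbitrary step-indexed $\propC$) that it can be applied repeatedly in the construction of the infinite source execution in the proof of \lemref{lem:inf-fair}.
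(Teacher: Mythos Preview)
Your proof is correct. The paper's own argument is slightly different in structure: rather than proceeding by contradiction and taking $N = \max_{x \in X} n_x$, it argues directly via the pigeonhole principle --- since for each $n$ some $x_n \in X$ satisfies $n \models \propC(x_n)$ and $X$ is finite, some fixed $x$ must work for infinitely many indices $m$; then for arbitrary $n$ pick $m > n$ with $m \models \propC(x)$ and use downward closure to conclude $n \models \propC(x)$.

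Both arguments are elementary and of comparable length; they are essentially dual. Your contradiction/max-of-failure-indices approach is perhaps marginally cleaner in that it avoids invoking pigeonhole explicitly, while the paper's direct argument makes the ``infinitely-often'' structure more visible, which foreshadows how the lemma is used iteratively in \lemref{lem:bootstrap} to build an infinite source execution. Either proof would be perfectly acceptable here.
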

\begin{proof}
By assumption, for each $n$, there exists $x_n \in X$ such that $n \models \propC(x_n)$. Since $X$ is finite, by the pigeon-hole principle, there must be some $x \in X$ such that $m \models \propC(x)$ for infinitely many values of $m$. Now, given arbitrary $n$, this means there exists $m > n$ such that $m \models \propC(x)$. Since step-indexed predicates are downward-closed, $n \models \propC(x)$. Hence $\forall n.\, n \models \propC(x)$.
\end{proof}
Ignoring delay steps for the moment, we apply this lemma to our setting to get:

\begin{lem}\label{lem:bootstrap} Suppose $\expr$ steps to $\expr'$ and $\forall n.\,\exists \prop_n.\, n \models \hoare{\ownThread{i}{E} * \prop_n}{\expr}{\propB}$. Then, $\exists \Expr'$ such that $\Expr$ steps to $\Expr'$ and $\forall n.\, \exists \prop_n'.\, n\models \hoare{\ownThread{i}{\Expr'} * \prop_n'}{\expr'}{\propB}$.
\end{lem}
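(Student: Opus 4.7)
}
The approach is to use Lemma~\ref{lem:quantifier-inversion} applied to the finite set of possible successors of the source expression $\Expr$. The hypothesis tells us that for each step-index $n$ we have \emph{some} witness, but the witness (including the choice of which source successor $\Expr'$ is reached) may vary with $n$. Quantifier inversion, backed by bounded non-determinism of the source language, is exactly the tool needed to extract a single $\Expr'$ that works for all $n$ simultaneously.

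First, I would fix an arbitrary $n$ and unfold the semantic definition of the Hoare triple $n{+}1 \models \hoare{\ownThread{i}{\Expr} * \prop_{n+1}}{\expr}{\propB}$ that we have by hypothesis. Since $\expr$ takes a step to $\expr'$, the definition of the triple (together with the step-shift discipline built into our modified Hoare triples from \Sref{sec:extensions}) guarantees a frame-preserving step-update on the resources, which in our instantiation corresponds to a source step. Ignoring delays as the excerpt indicates, this yields some $\Expr'_n$ with $\Expr \step \Expr'_n$ and some residual assertion $\prop'_n$ such that $n \models \hoare{\ownThread{i}{\Expr'_n} * \prop'_n}{\expr'}{\propB}$.

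Now let $X \eqdef \{\Expr' \mid \Expr \step \Expr'\}$. By bounded non-determinism of the source language, $X$ is finite. Define the step-indexed predicate
\[ \propC(\Expr') \eqdef \exists \prop'.\; n \models \hoare{\ownThread{i}{\Expr'} * \prop'}{\expr'}{\propB}. \]
This predicate is downward-closed in $n$: since Hoare triples are themselves downward-closed in the step-index, we may reuse the same $\prop'$ at any smaller $n'$. The previous paragraph shows that $\forall n.\,\exists \Expr' \in X.\, n \models \propC(\Expr')$. Applying Lemma~\ref{lem:quantifier-inversion} to $X$ and $\propC$ then yields a single $\Expr' \in X$ such that $\forall n.\, n \models \propC(\Expr')$, i.e.\ for every $n$ there exists some $\prop'_n$ with $n \models \hoare{\ownThread{i}{\Expr'} * \prop'_n}{\expr'}{\propB}$. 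This $\Expr'$ is the desired witness.

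The main obstacle is not the pigeon-hole argument itself but rather the bookkeeping around the definition of Hoare triples: we must confirm that stepping $\expr$ in the model really does, at each step-index, commit to a particular source successor (so that the witness $\Expr'_n$ exists), and that the existentially quantified residual assertion in $\propC$ does not destroy downward closure. Both rely directly on how we defined Hoare triples in \Sref{sec:extensions} and on the standard step-indexed closure properties of Iris assertions. Once that is carefully checked, bounded non-determinism plus Lemma~\ref{lem:quantifier-inversion} close the argument cleanly.
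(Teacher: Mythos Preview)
Your proposal is correct and follows essentially the same approach as the paper: define the finite set $X$ of source successors using bounded non-determinism, unfold the Hoare triple at step-index $n{+}1$ to obtain a witness at index $n$, and then apply Lemma~\ref{lem:quantifier-inversion} to the resulting step-indexed predicate on $X$. Your version is somewhat more explicit about the step-index bookkeeping and the downward-closure check for $\propC$, but the structure of the argument is identical.
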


\begin{proof} Let $X$ by the set of $\Expr'$ that $\Expr$ can step to, which we know to be finite.\footnote{To be precise we ought to mention the initial states $\state$ and $\State$ that $\expr$ and $\Expr$ run in and assume they satisfy the precondition of the triple.}
 Consider the step-indexed predicate $\propC$ on $X$ defined by
$n \models \propC(\Expr') \eqdef \left(\Expr \step \Expr' \land \exists \prop'_n.\, n \models \hoare{\ownThread{i}{\Expr'} * \prop'_n}{\expr'}{\propB}\right)$.
By assumption, for each $n > 0$, $n \models \hoare{\ownThread{i}{\Expr} * \prop_n}{\expr}{\propB}$ for some $\prop_n$. The definition of Hoare triples implies that there exists some $\Expr'$ such that $(n - 1) \models \propC(\Expr')$. Thus, $\forall n.\exists\Expr'.\, n\models \propC(\Expr')$, so we can apply \lemref{lem:quantifier-inversion} to get the desired result. \end{proof}

Notice that in the conclusion of \lemref{lem:bootstrap}, if $\expr'$
takes another step, we can apply \lemref{lem:bootstrap} again to the
triples for $\expr'$. So, given some initial triple
$\hoare{\ownThread{i}{\Expr}}{\expr}{Q}$ and a diverging execution of
$\expr$, by induction we can repeatedly apply \lemref{lem:bootstrap}
to construct an infinite execution of the source program. Finally, we
prove that if the execution of $\expr$ was fair, this source execution
will be fair as well, giving us \lemref{lem:inf-fair}.  Of course, for
the full mechanized proof we have to take into account the delay steps
and consider the case where the target thread multiple source
threads.  But all of these are \emph{finite} additional possibilities,
they do not fundamentally change the argument sketched above.

\section{Proof of Compiler Correctness}
\label{sec:session-proof}

We now give a brief overview of our proof of \thmref{thm:compiler}. Recall that we want
to show that if $E$ is a well-typed source expression, then
$\compile{\Expr} \refines \Expr$.

Our proof is a binary logical relations argument. We interpret each
type $\tau$ as a relation on values from the target and
source language, writing $\interpValRel{\val}{\Val}{}{\tau}$ to say
that $\val$ and $\Val$ are related at type $\tau$. 
However, following the example of \citep{iris-effects, iris-proofmode}, these are relations \emph{in our refinement logic},
which means we can use all of the constructs of the logic to describe
the meaning of types. We then prove a fundamental lemma 
showing that well-typed expressions are logically related to
their translation. Next, we show that our logical relation implies the triple used in \ruleref{Ht-refine}.
\thmref{thm:compiler} is then a direct consequence of these two lemmas.%

Details of these proofs can be found in the \appref; here we focus on the definition of the logical relation itself. For most types, the interpretation is straight-forward and fairly standard. For instance, $\interpValRel{\val}{\Val}{}{\inttyp}$ holds
exactly when $\val = \Val = n$, for some integer $n$. The important exception, of course, is the interpretation of session types, in which we need to relate the encoding of channels as linked-lists to the source language's primitive buffers.

\paragraph{Sessions as an STS.} To interpret session types,
we generalize the state transition system from the example in \Sref{sec:iris}
to handle the more complicated ``protocols'' that session types
represent.

What should the states of this STS be? In the STS
used in \secref{sec:iris}, we had three states: \textsc{init}, in
which the message had not been sent; \textsc{sent}, where a message
had been sent from the left end-point, but not received; and
\textsc{received}, where the message had now been received at the
right end-point.  In the general case, we will have more than one
message, so our states need to track
how many messages have been sent/received on each end-point.
We also need to know the ``current'' type of the end-points, but notice that if we know the starting type of an end-point,
and how many messages have been sent/received on it, we can always
recover these current types.
We write $\sadvance{\styp}{n}$ for the type after
$n$ messages have been sent/received starting from $\styp$.

We also need to know which heap locations $\heaploc_\scl$ and
$\heaploc_\scr$ currently represent the end-points of the channel.
All together then, the states will be tuples $(n_\scl, n_\scr, \heaploc_\scl,
\heaploc_\scr)$ describing how many messages have been sent/received on
each end-point, and the corresponding heap locations.

Remember that we also need to define the tokens and transitions
associated with each state of our STS.
The transitions are simple: we can either advance
the left end-point, incrementing $n_\scl$ and updating $\heaploc_\scl$, and similarly for the right
end-point.
For the tokens, recall that in our example proof, we had $[\textlog{S}]$ and $[\textlog{R}]$ tokens used by each thread to advance the state when they had interacted with their respective
end-points. 
In general, the threads will now use the end-points multiple times,
so we need a token for each of these uses on both sides.
Concretely, we will have two kinds of tokens, $\leftTok{n}$ and
$\rightTok{n}$, which are used when advancing the left and right end-point counter to $n$, respectively.

To complete the description of the STS, we have to talk about the interpretation of the states.
This interpretation has to relate the messages in the
source channel's current buffers to the nodes in the linked list on the target heap.
The individual messages should, of course, be related by our logical relation ($\interpRelNoArgs{V}{}$). We lift this relation to lists of messages ($\interpRelNoArgs{L}{}$) as follows:
\begin{mathpar}
\infer{}{\interpListRel{[]}{[]}{}{\styp}}

\inferH{L-cons}{\later{\left(\interpValRel{\val}{\Val}{}{\typ}\right)} *
       \left(\interpListRel{L_\sch}{L_\scc}{}{\styp}\right)}
      {\interpListRel{\val L_\sch }{\Val L_\scc}{}{\recvtyp{\typ}{\styp}}}
\label{eq:listrel}
\end{mathpar}
For now, ignore the $\later{}$ symbol. The left rule
says that two empty lists are equivalent at any session type.  The
right rule says two lists are related \emph{at a receive type}
$\recvtyp{\typ}{\styp}$, if their heads are related under $\typ$, and
the remainders of each list are related at $\styp$. It is important
that this is a receive type: if the current type of the end-point is a
send type, then there should not be any messages in its receive
buffer, so the rule for empty lists is the only one that applies.

We can now give our state interpretation, $\sessionInvNoArg$, which
is parameterized by 
(a) the starting type $\styp$ of the left end-point (the right
end-point's starting type is by necessity dual so there is no need to track it),
and (b) the name $c$ of the channel:
\begin{align}
\sessionInv{\styp}{\chanloc}{n_\scl, n_\scr, \heaploc_\scl, \heaploc_\scr} \eqdef 
 \exists L_{\scc}, L_{\sch}.\, 
\quad& \Big(\chanloc \smapsto (L_\scc, []) * \linklist{L_\sch}{\heaploc_\scl}{\heaploc_\scr} * {}
   \label{eq:states} \\ 
& ~~~(\interpListRel{L_\sch}{L_\scc}{}{S^{n_{\scl}}}) * n_{\scl} + \llength{L_\scc} = n_\scr \Big) \lor \dots
   \label{eq:related}  
  \end{align}

Let us explain this piece by piece.  To start, we have that there
exists a list of \emph{source} values $L_\scc$ and a list of \emph{target} values $L_\sch$, representing the messages that are stored in the buffer right now.  We then distinguish between two
cases: either the first buffer is empty or the second buffer is
empty.  We omit the second case (corresponding to the second
disjunct) because it is symmetric.  In the first case, the
channel's first buffer contains $L_\scc$ and the second buffer is
empty (\ref{eq:states}, left).
On the target side, the buffer is represented as a linked list from $\heaploc_\scl$ to $\heaploc_\scr$ containing the values $L_\sch$ (\ref{eq:states}, right).
Of course, the lists of values need to be related according to the end-point's current type $S^{n_\scl}$ (\ref{eq:related}, left).  Finally, 
the number of messages sent/received through the
left end-point, plus the number of messages still in the buffer,
should equal the total number of messages sent/received through the
right end-point (\ref{eq:related}, right).
Therefore,
when these remaining messages are received by the left end-point,
the two types will again be dual.

Informally then, the value relation at session types $\interpValRel{\heaploc}{\locsidevar}{}{\styp}$ says that there exists an appropriate STS and tokens for the session $S$ which relates $\heaploc$ and $\locsidevar$.
We can then prove Hoare triples for the message-passing primitives that manipulate this STS. For instance, for $\heaprecv$ we have (omitting delay steps):%
\begin{align*}
&\hoareHV{ \ownThread{i}{\lctx[\recv{\locside{\chanloc}{\side}}]}
          * \interpValRel{\heaploc}{\locsidevar}{}{\recvtyp{\tau}{\styp}}}
       {\quad~~\heaprecv\ \heaploc}
       {\Ret(\heaploc', \val). \Exists \Val.
    \ownThread{i}{\lctx[(\locside{\chanloc}{\side}, \Val)]} *
    (\interpValRel{\val}{\Val}{}{\tau}) * %
    \interpValRel{\heaploc'}{\locsidevar}{}{\styp}
}
\end{align*}

This triple closely corresponds to the typing rule \ruleref{Recv}
(\figref{fig:session-lang-rules}): typing judgments in the premise
become value relations in the pre-condition, and the conclusion is
analogously transformed into the postconditon. Indeed, the proof of
the fundamental lemma for the logical relation essentially just appeals to these triples.

There is something we have glossed over: when we defined the logical relation, we used the STS, but the STS interpretation used the logical relation!
This circularity is the reason for the $\later{}$ symbol guarding the recursive occurrence of ($\interpRelNoArgs{V}{}$) in \ruleref{L-cons}. The details are spelled out in the appendix.

\section{Conclusion and Related Work}
\label{sec:conclusion}

We have presented a logic for establishing \emph{fair, termination-preserving} refinement of higher-order, concurrent languages.
To our knowledge, this is the first logic combining higher-order reasoning (and in particular, step-indexing) with reasoning for termination-sensitive concurrent refinement.
Moreover, we applied this logic to verify the correctness of a compiler that translates a session-typed source language with channels into an ML-like language with a shared heap.

All of these results have been fully mechanized in Coq. Our mechanization builds on the Coq development described in \citet{iris2} and the proof-mode from \citet{iris-proofmode}. The proofs use the axioms of excluded middle and indefinite description. The proof scripts can be found online~\citep{supplement-coq}.

\paragraph{Second Case Study.} 

Our logic is not tied to this source language and translation: we have used it to mechanize
a proof that the
Craig-Landin-Hagersten queue lock~\citep{Craig93, MagnussonLH94}
refines a ticket lock. Further details  can be found in the \appref.

\paragraph{Linearity.} Linearity has been used in separation logics to verify the absence of memory leaks: if heap assertions like $l \mapsto v$ are linear, and the only way to ``dispose'' of them is by freeing the location $l$, then post conditions must mention all memory that persists after a command completes~\citep{IshtiaqO01}. Our treatment of linearity has limitations that make it unsuitable for tracking resources like the heap. First, in our logic, only affine assertions can be framed (see \ruleref{Ht-frame}), because framing could hide the obligation to perform steps on source threads. Of course, for resources like the heap this would be irrelevant, and this rule could be generalized.
Second, linear resources cannot be put in STS interpretations, so they cannot be shared between threads. Since STSs are implemented in terms of a more primitive feature in Iris called \emph{invariants}, which are affine, allowing linear resources to be put inside would circumvent the precise accounting that motivates linearity in the first place. Thus, we would need to extend Iris with a useful form of ``linear'' shared invariants, which we leave to future work.

\paragraph{Session Types.}
Starting from the seminal work of \citet{honda93}, a number of session-type systems have been presented with
different features~\cite{yoshida07,GayV10,CairesP10,ToninhoCP13,Wadler14} (among many others).  The language presented here is a simplified version of
the one in \citet{GayV10}. 
\citet{Wadler14} has shown that a restricted
subset of the language in \citep{GayV10} does enjoy a deadlock freedom
property. This property holds only when the type system is
\emph{linear}, like the original in \citep{GayV10}. 
\citet{PerezCPT12} and \citet{CairesPPT13} give logical relations for
session-typed languages, which they use to prove strong normalization
and contextual equivalence results.  Their logical relation is defined
``directly'', instead of translating into an
intermediary logic.
Early versions of another
session-typed system~\cite{Willsey16} used a ring-buffer to represent channels instead of linked 
lists, which would be interesting to verify. 

\paragraph{Logics for Concurrency, Termination, and Refinement.}

There is a vast literature on program logics for
concurrency~\cite{ohearn:csl,Brookes06a,caresl,iris,iris2,cap,icap,tada,total-tada,fcsl,views,rgsep,lrg,liang:refinement14,liang:termination14,liang:lili,hoffmann:lock-freedom}.
Indeed, the reason for constructing a logical relation on top of a program logic, as in \citet{iris-effects}, is so that we can
take advantage of the many ideas that have proliferated in this community.

Focusing on logics for refinement and termination
properties: \citet{benton:popl04} pioneered the use of a
\emph{relational} Hoare logic for showing the correctness of compiler
transformations in the sequential setting. \citet{yang:relational}
generalized this to relational separation logic. We have already
described \cite{caresl}, which
developed a higher-order concurrent separation logic for
termination-insensitive refinement. 
\citet{liang:refinement14} also allow non-terminating
programs to refine terminating ones. This was extended
in~\cite{liang:termination14} for a termination-preserving refinement,
but this deals with termination-preservation \emph{without} fairness.
Most recently \citet{liang:lili} addressed fair termination-preserving
refinement. In their logic,
threads can explicitly reason about how their actions may or
may not further delay other threads, which is more general than our approach and may be needed for verifying
some of the examples they consider.
It would be interesting to
adapt this more explicit fairness reasoning to the higher-order setting.

\citet{hoffmann:lock-freedom} features a concurrent separation logic
for total correctness. Threads own resources called ``tokens'', which
must be ``used up'' every time a thread repeats a while loop. 
This ``using up'' of
tokens inspired our step shifts. Later, \citet{total-tada}
generalized this by using ordinals instead of tokens: threads decrease the
ordinal they own as they repeat a loop. This is useful for
languages with unbounded non-determinism.
Our technique for coping with step-indexing in
\Sref{sec:extensions} relied on bounded non-determinism.
It may be possible to remove this limitation by using
\emph{transfinite} step-indexing~\citep{BirkedalBS13,SvendsenSB16} instead.%

\paragraph{Acknowledgments.} 
The authors thank Robbert Krebbers, Jeehoon Kang, Max Willsey, Frank
Pfenning, Derek Dreyer, Lars Birkedal, and Jan Hoffmann for helpful
discussions and feedback.  This research was conducted with
U.S. Government support under and awarded by DoD, Air Force Office of
Scientific Research, National Defense Science and Engineering Graduate
(NDSEG) Fellowship, 32 CFR 168a; and with support
by a European Research Council (ERC) Consolidator Grant for the
project ``RustBelt'', funded under the European Union's Horizon 2020
Framework Programme (grant agreement no.\ 683289). Any opinions,
findings and conclusions or recommendations expressed in this material
are those of the authors and do not necessarily reflect the views of
these funding agencies.

\bibliographystyle{splncsnat}
\bibliography{bib}

\ifdefined\INCLUDEAPPENDIX
\clearpage
\appendix
\newcommand{\myauthcount}[1]{}
{
\let\authcount\myauthcount \begingroup
\startcontents
\printcontents{}{1}{\textbf{Contents of Appendices}\vskip3pt\hrule\vskip5pt}
\vskip3pt\hrule\vskip5pt
\section{Extensions to Iris 2.0}

This section is a reproduction of the manual for the Iris 2.0 logic
(taken from a revised version of the technical appendix of
\citet{iris2}), with the modifications needed for our extension. This
reproduction is done with permission of the authors of
\citet{iris2}. Our modifications are highlighted in blue, \isnew{like
  so}. Hence, this section is best viewed in color. If a section or paragraph
heading is highlighted in blue, everything in that section is new.

\emph{Disclaimer:} The Coq development of the logic is taken to be
authoritative. Any discrepancies between this document and the Coq
code should therefore be regarded as errors. Some important rules
developed in the Coq proof may have been left out. Moreover, any
mistakes in this document may have been introduced by the present
authors: readers interested in the original Iris should consult its
documentation. Also, the reader should be aware that the latest
version of Iris has departed significantly from the version with which
we began our extensions.

{
\let\section\subsection
\let\subsection\subsubsection\begingroup
\section{Algebraic Structures}

\subsection{COFE}

The model of Iris lives in the category of \emph{Complete Ordered Families of Equivalences} (COFEs).
This definition varies slightly from the original one in~\cite{catlogic}.

\begin{defn}[Chain]
  Given some set $\cofe$ and an indexed family $({\nequiv{n}} \subseteq \cofe \times \cofe)_{n \in \mathbb{N}}$ of equivalence relations, a \emph{chain} is a function $c : \mathbb{N} \to \cofe$ such that $\All n, m. n \leq m \Ra c (m) \nequiv{n} c (n)$.
\end{defn}

\begin{defn}
  A \emph{complete ordered family of equivalences} (COFE) is a tuple $(\cofe, ({\nequiv{n}} \subseteq \cofe \times \cofe)_{n \in \mathbb{N}}, \lim : \chain(\cofe) \to \cofe)$ satisfying
  \begin{align*}
    \All n. (\nequiv{n}) ~& \text{is an equivalence relation} \tagH{cofe-equiv} \\
    \All n, m.& n \geq m \Ra (\nequiv{n}) \subseteq (\nequiv{m}) \tagH{cofe-mono} \\
    \All x, y.& x = y \Lra (\All n. x \nequiv{n} y) \tagH{cofe-limit} \\
    \All n, c.& \lim(c) \nequiv{n} c(n) \tagH{cofe-compl}
  \end{align*}
\end{defn}

The key intuition behind COFEs is that elements $x$ and $y$ are $n$-equivalent, notation $x \nequiv{n} y$, if they are \emph{equivalent for $n$ steps of computation}, \ie if they cannot be distinguished by a program running for no more than $n$ steps.
In other words, as $n$ increases, $\nequiv{n}$ becomes more and more refined (\ruleref{cofe-mono})---and in the limit, it agrees with plain equality (\ruleref{cofe-limit}).
In order to solve the recursive domain equation in \Sref{sec:model} it is also essential that COFEs are \emph{complete}, \ie that any chain has a limit (\ruleref{cofe-compl}).

\begin{defn}
  An element $x \in \cofe$ of a COFE is called \emph{discrete} if
  \[ \All y \in \cofe. x \nequiv{0} y \Ra x = y\]
  A COFE $A$ is called \emph{discrete} if all its elements are discrete.
  For a set $X$, we write $\Delta X$ for the discrete COFE with $x \nequiv{n} x' \eqdef x = x'$

\end{defn}

\begin{defn}
  A function $f : \cofe \to \cofeB$ between two COFEs is \emph{non-expansive} (written $f : \cofe \nfn \cofeB$) if
  \[\All n, x \in \cofe, y \in \cofe. x \nequiv{n} y \Ra f(x) \nequiv{n} f(y) \]
  It is \emph{contractive} if
  \[ \All n, x \in \cofe, y \in \cofe. (\All m < n. x \nequiv{m} y) \Ra f(x) \nequiv{n} f(y) \]
\end{defn}
Intuitively, applying a non-expansive function to some data will not suddenly introduce differences between seemingly equal data.
Elements that cannot be distinguished by programs within $n$ steps remain indistinguishable after applying $f$.
The reason that contractive functions are interesting is that for every contractive $f : \cofe \to \cofe$ with $\cofe$ inhabited, there exists a \emph{unique} fixed-point $\fix(f)$ such that $\fix(f) = f(\fix(f))$.

\begin{defn}
  The category $\COFEs$ consists of COFEs as objects, and non-expansive functions as arrows.
\end{defn}

Note that $\COFEs$ is cartesian closed. In particular:
\begin{defn}
  Given two COFEs $\cofe$ and $\cofeB$, the set of non-expansive functions $\set{f : \cofe \nfn \cofeB}$ is itself a COFE with
  \begin{align*}
    f \nequiv{n} g \eqdef{}& \All x \in \cofe. f(x) \nequiv{n} g(x)
  \end{align*}
\end{defn}

\begin{defn}
  A (bi)functor $F : \COFEs \to \COFEs$ is called \emph{locally non-expansive} if its action $F_1$ on arrows is itself a non-expansive map.
  Similarly, $F$ is called \emph{locally contractive} if $F_1$ is a contractive map.
\end{defn}
The function space $(-) \nfn (-)$ is a locally non-expansive bifunctor.
Note that the composition of non-expansive (bi)functors is non-expansive, and the composition of a non-expansive and a contractive (bi)functor is contractive.
The reason contractive (bi)functors are interesting is that by America and Rutten's theorem~\cite{America-Rutten:JCSS89,birkedal:metric-space}, they have a unique\footnote{Uniqueness is not proven in Coq.} fixed-point.

\subsection{RA}

\begin{defn}
  A \emph{resource algebra} (RA) is a tuple \\
  $(\monoid, \mval \subseteq \monoid, \mcore{{-}}:
  \monoid \to \maybe\monoid, (\mtimes) : \monoid \times \monoid \to \monoid, \isnew{(\mstep) \subseteq \monoid \times \monoid})$ satisfying:
  \begin{align*}
    \All \melt, \meltB, \meltC.& (\melt \mtimes \meltB) \mtimes \meltC = \melt \mtimes (\meltB \mtimes \meltC) \tagH{ra-assoc} \\
    \All \melt, \meltB.& \melt \mtimes \meltB = \meltB \mtimes \melt \tagH{ra-comm} \\
    \All \melt.& \mcore\melt \in \monoid \Ra \mcore\melt \mtimes \melt = \melt \tagH{ra-core-id} \\
    \All \melt.& \mcore\melt \in \monoid \Ra \mcore{\mcore\melt} = \mcore\melt \tagH{ra-core-idem} \\
    \All \melt, \meltB.& \mcore\melt \in \monoid \land \melt \mincl \meltB \Ra \mcore\meltB \in \monoid \land \mcore\melt \mincl \mcore\meltB \tagH{ra-core-mono} \\
    \isnew{\All \melt, \meltB.}& \isnew{(\melt\mtimes\meltB) \in \mval \wedge \mcore\melt \in \monoid%
                         \wedge \mcore\meltB \in \monoid \Ra} \\ &\qquad
                   \isnew{\mcore{\mcore\melt \mtimes \mcore\meltB} = \mcore\melt \mtimes \mcore\meltB}
                          \tagH{ra-core-distrib} \\
    \All \melt, \meltB.& (\melt \mtimes \meltB) \in \mval \Ra \melt \in \mval \tagH{ra-valid-op} \\
    \text{where}\qquad %
    \maybe\monoid \eqdef{}& \monoid \uplus \set{\mnocore} \qquad\qquad\qquad \melt^? \mtimes \mnocore \eqdef \mnocore \mtimes \melt^? \eqdef \melt^? \\
    \melt \mincl \meltB \eqdef{}& \Exists \meltC \in \monoid. \meltB = \melt \mtimes \meltC \tagH{ra-incl}
  \end{align*}
\end{defn}
\noindent
RAs are closely related to \emph{Partial Commutative Monoids} (PCMs), with two key differences:
\begin{enumerate}
\item The composition operation on RAs is total (as opposed to the partial composition operation of a PCM), but there is a specific subset $\mval$ of \emph{valid} elements that is compatible with the composition operation (\ruleref{ra-valid-op}).

This take on partiality is necessary when defining the structure of \emph{higher-order} ghost state, CMRAs, in the next subsection.

\item Instead of a single unit that is an identity to every element, we allow
for an arbitrary number of units, via a function $\mcore{{-}}$ assigning to an element $\melt$ its \emph{(duplicable) core} $\mcore\melt$, as demanded by \ruleref{ra-core-id}.
  We further demand that $\mcore{{-}}$ is idempotent (\ruleref{ra-core-idem}) and monotone (\ruleref{ra-core-mono}) with respect to the \emph{extension order}, defined similarly to that for PCMs (\ruleref{ra-incl}).

  Notice that the domain of the core is $\maybe\monoid$, a set that adds a dummy element $\mnocore$ to $\monoid$.
  Thus, the core can be \emph{partial}: not all elements need to have a unit.
  We use the metavariable $\maybe\melt$ to indicate elements of  $\maybe\monoid$.
  We also lift the composition $(\mtimes)$ to $\maybe\monoid$.
  Partial cores help us to build interesting composite RAs from smaller primitives.

Notice also that the core of an RA is a strict generalization of the unit that any PCM must provide, since $\mcore{{-}}$ can always be picked as a constant function.

\item \isnew{We add an aditional relation $\mstep$ that captures
  ``taking a step'' on a resource. In most cases this will be the
  full relation, as we have no useful notion of ``stepping'' such
  resources.But, for resources used for establishing refinements, this
  will correspond to taking some kind of step in the source program.}

\end{enumerate}

\begin{defn}
  It is possible to do a \emph{frame-preserving update} from $\melt \in \monoid$ to $\meltsB \subseteq \monoid$, written $\melt \mupd \meltsB$, if
  \[ \All \maybe{\melt_\f} \in \maybe\monoid. \melt \mtimes \maybe{\melt_\f} \in \mval \Ra \Exists \meltB \in \meltsB. \meltB \mtimes \maybe{\melt_\f} \in \mval \]

  We further define $\melt \mupd \meltB \eqdef \melt \mupd \set\meltB$.
\end{defn}
The assertion $\melt \mupd \meltsB$ says that every element $\maybe{\melt_\f}$ compatible with $\melt$ (we also call such elements \emph{frames}), must also be compatible with some $\meltB \in \meltsB$.
Notice that $\maybe{\melt_\f}$ could be $\mnocore$, so the frame-preserving update can also be applied to elements that have \emph{no} frame.
Intuitively, this means that whatever assumptions the rest of the program is making about the state of $\gname$, if these assumptions are compatible with $\melt$, then updating to $\meltB$ will not invalidate any of these assumptions.
Since Iris ensures that the global ghost state is valid, this means that we can soundly update the ghost state from $\melt$ to a non-deterministically picked $\meltB \in \meltsB$.

\isnew{\begin{defn}
It is possible to do a \emph{frame-preserving step update} from $\meltB \in \monoid$  with $\melt \in \monoid$ to $\meltsB \subseteq \monoid \times \monoid$, written $\melt, \meltB \msupd \meltsB$, if
  \[ \All \maybe{\melt_\f} \in \maybe\monoid. \melt \mtimes \meltB \mtimes \maybe{\melt_\f} \in \mval \Ra \Exists (\melt', \meltB') \in \meltsB. \melt' \mtimes \meltB' \mtimes \maybe{\melt_\f} \in \mval \wedge \meltB \mstep \meltB'  \]

  We further define $\melt, \meltB \msupd \melt', \meltB' \eqdef \melt, \meltB \msupd \set{(\melt', \meltB')}$.
\end{defn}}

\isnew{We can regard this as a transformation of two compatible resources $\melt$ and $\meltB$ in which we do a frame-preserving update on the first component $a$ and a step on the second component $b$.}

\subsection{CMRA}

\begin{defn}
  A \emph{CMRA} is a tuple $(\monoid : \COFEs, (\mval_n \subseteq \monoid)_{n \in \mathbb{N}},\\ \mcore{{-}}: \monoid \nfn \maybe\monoid, (\mtimes) : \monoid \times \monoid \nfn \monoid, \isnew{((\mstepN{n}) \subseteq \monoid \times \monoid)_{n \in \mathbb{N}}})$ satisfying:
  \begin{align*}
    \All n, \melt, \meltB.& \melt \nequiv{n} \meltB \land \melt\in\mval_n \Ra \meltB\in\mval_n \tagH{cmra-valid-ne} \\
    \All n, m.& n \geq m \Ra \mval_n \subseteq \mval_m \tagH{cmra-valid-mono} \\
    \isnew{\All n, m.}& \isnew{n \geq m \Ra \mstepN{n} \subseteq \mstepN{m} \tagH{cmra-step-mono}} \\
    \All \melt, \meltB, \meltC.& (\melt \mtimes \meltB) \mtimes \meltC = \melt \mtimes (\meltB \mtimes \meltC) \tagH{cmra-assoc} \\
    \All \melt, \meltB.& \melt \mtimes \meltB = \meltB \mtimes \melt \tagH{cmra-comm} \\
    \All \melt.& \mcore\melt \in \monoid \Ra \mcore\melt \mtimes \melt = \melt \tagH{cmra-core-id} \\
    \All \melt.& \mcore\melt \in \monoid \Ra \mcore{\mcore\melt} = \mcore\melt \tagH{cmra-core-idem} \\
    \All \melt, \meltB.& \mcore\melt \in \monoid \land \melt \mincl \meltB \Ra \mcore\meltB \in \monoid \land \mcore\melt \mincl \mcore\meltB \tagH{cmra-core-mono} \\
    \isnew{\All n, \melt, \meltB.}& \isnew{(\melt\mtimes\meltB) \in \mval_n
                                            \wedge \mcore\melt \in \monoid%
                                            \wedge \mcore\meltB \in \monoid \Ra} \\ &\qquad
    \isnew{\mcore{\mcore\melt \mtimes \mcore\meltB} \nequiv{n}
      \mcore\melt \mtimes \mcore\meltB}
                          \tagH{cmra-core-distrib} \\
    \All n, \melt, \meltB.& (\melt \mtimes \meltB) \in \mval_n \Ra \melt \in \mval_n \tagH{cmra-valid-op} \\
    \All n, \melt, \meltB_1, \meltB_2.& \omit\rlap{$\melt \in \mval_n \land \melt \nequiv{n} \meltB_1 \mtimes \meltB_2 \Ra {}$} \\
    &\Exists \meltC_1, \meltC_2. \melt = \meltC_1 \mtimes \meltC_2 \land \meltC_1 \nequiv{n} \meltB_1 \land \meltC_2 \nequiv{n} \meltB_2 \tagH{cmra-extend} \\
    \text{where}\qquad\qquad\\
    \melt \mincl \meltB \eqdef{}& \Exists \meltC. \meltB = \melt \mtimes \meltC \tagH{cmra-incl} \\
    \melt \mincl[n] \meltB \eqdef{}& \Exists \meltC. \meltB \nequiv{n} \melt \mtimes \meltC \tagH{cmra-inclN}
  \end{align*}
\end{defn}

This is a natural generalization of RAs over COFEs.
All operations have to be non-expansive, and the validity predicate $\mval$ can now also depend on the step-index.
We define the plain $\mval$ as the ``limit'' of the $\mval_n$:
\[ \mval \eqdef \bigcap_{n \in \mathbb{N}} \mval_n \]

\paragraph{The extension axiom (\ruleref{cmra-extend}).}
Notice that the existential quantification in this axiom is \emph{constructive}, \ie it is a sigma type in Coq.
The purpose of this axiom is to compute $\melt_1$, $\melt_2$ completing the following square:

\begin{center}
\begin{tikzpicture}[every edge/.style={draw=none}]
  \node (a) at (0, 0) {$\melt$};
  \node (b) at (1.7, 0) {$\meltB$};
  \node (b12) at (1.7, -1) {$\meltB_1 \mtimes \meltB_2$};
  \node (a12) at (0, -1) {$\melt_1 \mtimes \melt_2$};

  \path (a) edge node {$\nequiv{n}$} (b);
  \path (a12) edge node {$\nequiv{n}$} (b12);
  \path (a) edge node [rotate=90] {$=$} (a12);
  \path (b) edge node [rotate=90] {$=$} (b12);
\end{tikzpicture}\end{center}
where the $n$-equivalence at the bottom is meant to apply to the pairs of elements, \ie we demand $\melt_1 \nequiv{n} \meltB_1$ and $\melt_2 \nequiv{n} \meltB_2$.
In other words, extension carries the decomposition of $\meltB$ into $\meltB_1$ and $\meltB_2$ over the $n$-equivalence of $\melt$ and $\meltB$, and yields a corresponding decomposition of $\melt$ into $\melt_1$ and $\melt_2$.
This operation is needed to prove that $\later$ commutes with separating conjunction:
\begin{mathpar}
  \axiom{\later (\prop * \propB) \Lra \later\prop * \later\propB}
\end{mathpar}

\begin{defn}
  An element $\munit$ of a CMRA $\monoid$ is called the \emph{unit} of $\monoid$ if it satisfies the following conditions:
  \begin{enumerate}[itemsep=0pt]
  \item $\munit$ is valid: \\ $\All n. \munit \in \mval_n$
  \item $\munit$ is a left-identity of the operation: \\
    $\All \melt \in M. \munit \mtimes \melt = \melt$
  \item $\munit$ is a discrete COFE element
  \item $\munit$ is its own core: \\ $\mcore\munit = \munit$
  \end{enumerate}
\end{defn}

\begin{lem}\label{lem:cmra-unit-total-core}
  If $\monoid$ has a unit $\munit$, then the core $\mcore{{-}}$ is total, \ie $\All\melt. \mcore\melt \in \monoid$.
\end{lem}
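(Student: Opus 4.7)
The plan is to directly apply the core monotonicity axiom \ruleref{cmra-core-mono} after establishing that the unit $\munit$ is below every element in the extension order. Concretely, let $\melt \in \monoid$ be arbitrary; I want to show $\mcore\melt \in \monoid$.

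First I would observe that, by the left-identity property of the unit, $\melt = \munit \mtimes \melt$. Unfolding the definition of the extension order \ruleref{cmra-incl}, this immediately gives $\munit \mincl \melt$, with $\melt$ itself serving as the witness for the existentially quantified ``completing'' element.

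Second, I would note that $\mcore\munit \in \monoid$: this is trivial, since the unit property explicitly demands $\mcore\munit = \munit$, and $\munit$ is by assumption an element of $\monoid$ (in particular, not $\mnocore$).

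Finally, combining $\mcore\munit \in \monoid$ and $\munit \mincl \melt$, the core monotonicity axiom \ruleref{cmra-core-mono} yields $\mcore\melt \in \monoid$ (along with the bonus $\mcore\munit \mincl \mcore\melt$, which we do not need). Since $\melt$ was arbitrary, the core is total.

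There is no real obstacle here: the lemma is essentially a one-line corollary of the axioms, and the only thing to be careful about is making sure the unit laws give $\munit \mincl \melt$ in the appropriate direction (which they do, because composition is commutative, so a left-identity is also a right-identity and $\munit$ sits below every element). No step-indexing or extension axiom reasoning is required.
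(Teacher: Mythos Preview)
Your proof is correct and is exactly the standard argument: use the unit's left-identity to get $\munit \mincl \melt$, use $\mcore\munit = \munit \in \monoid$, and conclude via \ruleref{cmra-core-mono}. The paper in fact states this lemma without proof, so there is no alternative argument to compare against; your proposal fills the gap in the expected way.
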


\begin{defn}
  It is possible to do a \emph{frame-preserving update} from $\melt \in \monoid$ to $\meltsB \subseteq \monoid$, written $\melt \mupd \meltsB$, if
  \[ \All n, \maybe{\melt_\f}. \melt \mtimes \maybe{\melt_\f} \in \mval_n \Ra \Exists \meltB \in \meltsB. \meltB \mtimes \maybe{\melt_\f} \in \mval_n \]

  We further define $\melt \mupd \meltB \eqdef \melt \mupd \set\meltB$.
\end{defn}

\isnew{
\begin{defn}
  It is possible to do a \emph{frame-preserving step update} from $\meltB  \in \monoid$ with $\melt \in \monoid$ to $\meltsB \subseteq \monoid \times \monoid$, written $\melt \msupd \meltsB$, if
  \[ \All n, \maybe{\melt_\f}. \melt \mtimes \meltB \mtimes \maybe{\melt_\f} \in \mval_n \Ra \Exists (\melt', \meltB') \in \meltsB. \melt' \mtimes \meltB' \mtimes \maybe{\melt_\f} \in \mval_n \wedge \meltB \mstepN{n} \meltB' \]

  We further define $\melt, \meltB \msupd \melt', \meltB' \eqdef
  \melt \msupd \set{(\melt', \meltB')}$.
\end{defn}
}

Note that for RAs, this and the RA-based definition of a frame-preserving update \isnew{and frame-preserving step update} coincide.

\begin{defn}
  A CMRA $\monoid$ is \emph{discrete} if it satisfies the following conditions:
  \begin{enumerate}[itemsep=0pt]
  \item $\monoid$ is a discrete COFE
  \item $\mval$ ignores the step-index: \\
    $\All \melt \in \monoid. \melt \in \mval_0 \Ra \All n, \melt \in \mval_n$
  \end{enumerate}
\end{defn}
Note that every RA is a discrete CMRA, by picking the discrete COFE for the equivalence relation.
Furthermore, discrete CMRAs can be turned into RAs by ignoring their COFE structure, as well as the step-index of $\mval$.

\begin{defn}
  A function $f : \monoid_1 \to \monoid_2$ between two CMRAs is \emph{monotone} (written $f : \monoid_1 \monra \monoid_2$) if it satisfies the following conditions:
  \begin{enumerate}[itemsep=0pt]
  \item $f$ is non-expansive
  \item $f$ preserves validity: \\
    $\All n, \melt \in \monoid_1. \melt \in \mval_n \Ra f(\melt) \in \mval_n$
  \item $f$ preserves CMRA inclusion:\\
    $\All \melt \in \monoid_1, \meltB \in \monoid_1. \melt \mincl \meltB \Ra f(\melt) \mincl f(\meltB)$
  \end{enumerate}
\end{defn}

\begin{defn}
  The category $\CMRAs$ consists of CMRAs as objects, and monotone functions as arrows.
\end{defn}
Note that every object/arrow in $\CMRAs$ is also an object/arrow of $\COFEs$.
The notion of a locally non-expansive (or contractive) bifunctor naturally generalizes to bifunctors between these categories.
 \endgroup\begingroup
\section{COFE constructions}

\subsection{Next (type-level later)}

Given a COFE $\cofe$, we define $\latert\cofe$ as follows (using a datatype-like notation to define the type):
\begin{align*}
  \latert\cofe \eqdef{}& \latertinj(x:\cofe) \\
  \latertinj(x) \nequiv{n} \latertinj(y) \eqdef{}& n = 0 \lor x \nequiv{n-1} y
\end{align*}
Note that in the definition of the carrier $\latert\cofe$, $\latertinj$ is a constructor (like the constructors in Coq), \ie this is short for $\setComp{\latertinj(x)}{x \in \cofe}$.

$\latert(-)$ is a locally \emph{contractive} functor from $\COFEs$ to $\COFEs$.

\subsection{Uniform Predicates}

Given a CMRA $\monoid$, we define the COFE $\UPred(\monoid)$ of \emph{uniform predicates} over $\monoid$ as follows:
\begin{align*}
  \UPred(\monoid) \eqdef{} \setComp{\pred: \mathbb{N} \times \monoid \isnew{\times \monoid}
    \to \mProp}{
  \begin{inbox}[c]
    \isnew{(\All n, x, y, x', y'. \pred(n, x, y) \land x \mincl x' \land y \nequiv{n} y'
       \Ra \pred(n, x', \isnew{y'})) \land} {}\\
    \isnew{(\All n, m, x, y. \pred(n, x, y) \land x \in \mval_m \wedge y \in \mval_m \Ra \pred(m, x, y))}
  \end{inbox}
}
\end{align*}

One way to understand this definition is to re-write it a little.
We start by defining the COFE of \emph{step-indexed propositions}: For every step-index, the proposition either holds or does not hold.
\begin{align*}
  \SProp \eqdef{}& \psetdown{\mathbb{N}} \\
    \eqdef{}& \setComp{X \in \pset{\mathbb{N}}}{ \All n, m. n \geq m \Ra n \in X \Ra m \in X } \\
  X \nequiv{n} Y \eqdef{}& \All m \leq n. m \in X \Lra m \in Y
\end{align*}
Notice that this notion of $\SProp$ is already hidden in the validity predicate $\mval_n$ of a CMRA:
We could equivalently require every CMRA to define $\mval_{-}(-) : \monoid \nfn \SProp$, replacing \ruleref{cmra-valid-ne} and \ruleref{cmra-valid-mono}.

\isnew{Now we can rewrite $\UPred(\monoid)$ as step-indexed predicates over pairs of $\monoid$ eleements, which is ``monotone'' in a certain sense with respect to the first element:}
\begin{align*}
  \UPred(\monoid) \cong{}& \monoid \monra \isnew{\monoid \ra} \SProp \\
     \eqdef{}& \setComp{\pred: \monoid \nfn \isnew{\monoid \nfn} \SProp}{\begin{aligned}%
\All n, m, x, y, x'.& n \in \pred(x) \land x \mincl x' \land m \leq n \land x' \in \mval_m%
\\& \quad {} \Ra m \in \pred(x', y)\end{aligned}}
\end{align*}

\section{RA and CMRA constructions}

\isnew{When describing a CMRA construction, unless specified otherwise, the step relation is taken to be the full relation.}

\subsection{Product}
\label{sec:prodm}

Given a family $(M_i)_{i \in I}$ of CMRAs ($I$ finite), we construct a CMRA for the product $\prod_{i \in I} M_i$ by lifting everything pointwise.

Frame-preserving updates on the $M_i$ lift to the product:
\begin{mathpar}
  \inferH{prod-update}
  {\melt \mupd_{M_i} \meltsB}
  {f[i \mapsto \melt] \mupd \setComp{ f[i \mapsto \meltB]}{\meltB \in \meltsB}}
\end{mathpar}

\subsection{Finite partial function}
\label{sec:fpfnm}

Given some infinite countable $K$ and some CMRA $\monoid$, the set of finite partial functions $K \fpfn \monoid$ is equipped with a COFE and CMRA structure by lifting everything pointwise.

We obtain the following frame-preserving updates:
\begin{mathpar}
  \inferH{fpfn-alloc-strong}
  {\text{$G$ infinite} \and \melt \in \mval}
  {\emptyset \mupd \setComp{[\gname \mapsto \melt]}{\gname \in G}}

  \inferH{fpfn-alloc}
  {\melt \in \mval}
  {\emptyset \mupd \setComp{[\gname \mapsto \melt]}{\gname \in K}}

  \inferH{fpfn-update}
  {\melt \mupd_\monoid \meltsB}
  {f[i \mapsto \melt] \mupd \setComp{ f[i \mapsto \meltB]}{\meltB \in \meltsB}}
\end{mathpar}
Above, $\mval$ refers to the validity of $\monoid$.

$K \fpfn (-)$ is a locally non-expansive functor from $\CMRAs$ to $\CMRAs$.

\subsection{Agreement}

Given some COFE $\cofe$, we define $\agm(\cofe)$ as follows:
\begin{align*}
  \agm(\cofe) \eqdef{}& \set{(c, V) \in (\mathbb{N} \to \cofe) \times \SProp}/\ {\sim} \\[-0.2em]
  \textnormal{where }& \melt \sim \meltB \eqdef{} \melt.V = \meltB.V \land 
    \All n. n \in \melt.V \Ra \melt.c(n) \nequiv{n} \meltB.c(n)  \\
  \melt \nequiv{n} \meltB \eqdef{}& (\All m \leq n. m \in \melt.V \Lra m \in \meltB.V) \land (\All m \leq n. m \in \melt.V \Ra \melt.c(m) \nequiv{m} \meltB.c(m)) \\
  \mval_n \eqdef{}& \setComp{\melt \in \agm(\cofe)}{ n \in \melt.V \land \All m \leq n. \melt.c(n) \nequiv{m} \melt.c(m) } \\
  \mcore\melt \eqdef{}& \melt \\
  \melt \mtimes \meltB \eqdef{}& \left(\melt.c, \setComp{n}{n \in \melt.V \land n \in \meltB.V \land \melt \nequiv{n} \meltB }\right)
\end{align*}

$\agm(-)$ is a locally non-expansive functor from $\COFEs$ to $\CMRAs$.

You can think of the $c$ as a \emph{chain} of elements of $\cofe$ that has to converge only for $n \in V$ steps.
The reason we store a chain, rather than a single element, is that $\agm(\cofe)$ needs to be a COFE itself, so we need to be able to give a limit for every chain of $\agm(\cofe)$.
However, given such a chain, we cannot constructively define its limit: Clearly, the $V$ of the limit is the limit of the $V$ of the chain.
But what to pick for the actual data, for the element of $\cofe$?
Only if $V = \mathbb{N}$ we have a chain of $\cofe$ that we can take a limit of; if the $V$ is smaller, the chain ``cancels'', \ie stops converging as we reach indices $n \notin V$.
To mitigate this, we apply the usual construction to close a set; we go from elements of $\cofe$ to chains of $\cofe$.

We define an injection $\aginj$ into $\agm(\cofe)$ as follows:
\[ \aginj(x) \eqdef \record{\mathrm c \eqdef \Lam \any. x, \mathrm V \eqdef \mathbb{N}} \]
There are no interesting frame-preserving updates for $\agm(\cofe)$, but we can show the following:
\begin{mathpar}
  \axiomH{ag-val}{\aginj(x) \in \mval_n}

  \axiomH{ag-dup}{\aginj(x) = \aginj(x)\mtimes\aginj(x)}
  
  \axiomH{ag-agree}{\aginj(x) \mtimes \aginj(y) \in \mval_n \Ra x \nequiv{n} y}
\end{mathpar}

\subsection{Exclusive CMRA}

Given a COFE $\cofe$ equipped with a step-indexed relation $(\mstepN{n})$, we define a CMRA $\exm(\cofe)$ such that at most one $x \in \cofe$ can be owned:
\begin{align*}
  \exm(\cofe) \eqdef{}& \exinj(\cofe) + \bot \\
  \mval_n \eqdef{}& \setComp{\melt\in\exm(\cofe)}{\melt \neq \bot}
\end{align*}
All cases of composition go to $\bot$.
\begin{align*}
  \mcore{\exinj(x)} \eqdef{}& \mnocore &
  \mcore{\bot} \eqdef{}& \bot
\end{align*}
Remember that $\mnocore$ is the ``dummy'' element in $\maybe\monoid$ indicating (in this case) that $\exinj(x)$ has no core.

The step-indexed equivalence is inductively defined as follows:
\begin{mathpar}
  \infer{x \nequiv{n} y}{\exinj(x) \nequiv{n} \exinj(y)}

  \axiom{\bot \nequiv{n} \bot}
\end{mathpar}
$\exm(-)$ is a locally non-expansive functor from $\COFEs$ to $\CMRAs$.

We obtain the following frame-preserving update:
\begin{mathpar}
  \inferH{ex-update}{}
  {\exinj(x) \mupd \exinj(y)}
\end{mathpar}

We lift the stepping relation to:

\isnew{
\begin{mathpar}
  \infer{x \mstepN{n} y}{\exinj(x) \mstepN{n} \exinj(y)}
\end{mathpar}
}

\subsection{STS with tokens}
\label{sec:stsmon}

Given a state-transition system~(STS, \ie a directed graph) $(\STSS, {\stsstep} \subseteq \STSS \times \STSS)$, a set of tokens $\STST$, and a labeling $\STSL: \STSS \ra \wp(\STST)$ of \emph{protocol-owned} tokens for each state, we construct an RA modeling an authoritative current state and permitting transitions given a \emph{bound} on the current state and a set of \emph{locally-owned} tokens.

The construction follows the idea of STSs as described in CaReSL \cite{caresl}.
We first lift the transition relation to $\STSS \times \wp(\STST)$ (implementing a \emph{law of token conservation}) and define a stepping relation for the \emph{frame} of a given token set:
\begin{align*}
 (s, T) \stsstep (s', T') \eqdef{}& s \stsstep s' \land \STSL(s) \uplus T = \STSL(s') \uplus T' \\
 s \stsfstep{T} s' \eqdef{}& \Exists T_1, T_2. T_1 \disj \STSL(s) \cup T \land (s, T_1) \stsstep (s', T_2)
\end{align*}

We further define \emph{closed} sets of states (given a particular set of tokens) as well as the \emph{closure} of a set:
\begin{align*}
\STSclsd(S, T) \eqdef{}& \All s \in S. \STSL(s) \disj T \land \left(\All s'. s \stsfstep{T} s' \Ra s' \in S\right) \\
\upclose(S, T) \eqdef{}& \setComp{ s' \in \STSS}{\Exists s \in S. s \stsftrans{T} s' }
\end{align*}

The STS RA is defined as follows
\begin{align*}
  \monoid \eqdef{}& \setComp{\STSauth((s, T) \in \STSS \times \wp(\STST))}{\STSL(s) \disj T} +{}\\& \setComp{\STSfrag((S, T) \in \wp(\STSS) \times \wp(\STST))}{\STSclsd(S, T) \land S \neq \emptyset} + \bot \\
  \STSfrag(S_1, T_1) \mtimes \STSfrag(S_2, T_2) \eqdef{}& \STSfrag(S_1 \cap S_2, T_1 \cup T_2) \qquad\qquad\qquad \text{if $T_1 \disj T_2$ and $S_1 \cap S_2 \neq \emptyset$} \\
\STSfrag(S, T) \mtimes \STSauth(s, T') \eqdef{}& \STSauth(s, T') \mtimes \STSfrag(S, T) \eqdef \STSauth(s, T \cup T') \qquad \text{if $T \disj T'$ and $s \in S$} \\
  \mcore{\STSfrag(S, T)} \eqdef{}& \STSfrag(\upclose(S, \emptyset), \emptyset) \\
  \mcore{\STSauth(s, T)} \eqdef{}& \STSfrag(\upclose(\set{s}, \emptyset), \emptyset) \\
  \isnew{\melt \mstep \melt' \eqdef{}}& \isnew{\exists s, T, \meltB, s', T', \meltB'.\,
  \melt = \STSauth(s, T) \mtimes \meltB \land
  \melt' = \STSauth(s', T') \mtimes \meltB' \land {}}\\
  & \quad \isnew{(s, T) \ststrans (s', T')}
\end{align*}
The remaining cases are all $\bot$.

We will need the following frame-preserving update:
\begin{mathpar}
  \inferH{sts-step}{(s, T) \ststrans (s', T')}
  {\STSauth(s, T) \mupd \STSauth(s', T')}

  \inferH{sts-weaken}
  {\STSclsd(S_2, T_2) \and S_1 \subseteq S_2 \and T_2 \subseteq T_1}
  {\STSfrag(S_1, T_1) \mupd \STSfrag(S_2, T_2)}
\end{mathpar}

\isnew{At the moment we do not make use of the non-trivial step structure on STS's -- all instances of STS's that are used in our present proofs are wrapped in another construction that makes the step structure trivial.}

\paragraph{The core is not a homomorphism.}
The core of the STS construction is only satisfying the RA axioms because we are \emph{not} demanding the core to be a homomorphism---all we demand is for the core to be monotone with respect to the \ruleref{ra-incl} \isnew{and have \ruleref{ra-core-distrib} property. This last rule is kind of like homomorphism \emph{for elements that are already cores}, which is weaker than normal homomorphism.}

In other words, the following does \emph{not} hold for the STS core as defined above:
\[ \mcore\melt \mtimes \mcore\meltB = \mcore{\melt\mtimes\meltB} \]

To see why, consider the following STS:
\newcommand\st{\textlog{s}}
\newcommand\tok{\textmon{t}}
\begin{center}
  \begin{tikzpicture}[sts]
    \node at (0,0)   (s1) {$\st_1$};
    \node at (3,0)  (s2) {$\st_2$};
    \node at (9,0) (s3) {$\st_3$};
    \node at (6,0)  (s4) {$\st_4$\\$[\tok_1, \tok_2]$};
    
    \path[sts_arrows] (s2) edge  (s4);
    \path[sts_arrows] (s3) edge  (s4);
  \end{tikzpicture}
\end{center}
Now consider the following two elements of the STS RA:
\[ \melt \eqdef \STSfrag(\set{\st_1,\st_2}, \set{\tok_1}) \qquad\qquad
  \meltB \eqdef \STSfrag(\set{\st_1,\st_3}, \set{\tok_2}) \]

We have:
\begin{mathpar}
  {\melt\mtimes\meltB = \STSfrag(\set{\st_1}, \set{\tok_1, \tok_2})}

  {\mcore\melt = \STSfrag(\set{\st_1, \st_2, \st_4}, \emptyset)}

  {\mcore\meltB = \STSfrag(\set{\st_1, \st_3, \st_4}, \emptyset)}

  {\mcore\melt \mtimes \mcore\meltB = \STSfrag(\set{\st_1, \st_4}, \emptyset) \neq
    \mcore{\melt \mtimes \meltB} = \STSfrag(\set{\st_1}, \emptyset)}
\end{mathpar}

 \endgroup\begingroup
\section{Language}

A \emph{language} $\Lang$ consists of a set \textdom{Expr} of \emph{expressions} (metavariable $\expr$), a set \textdom{Val} of \emph{values} (metavariable $\val$), and a set \textdom{State} of \emph{states} (metvariable $\state$) such that
\begin{itemize}
\item There exist functions $\ofval : \textdom{Val} \to \textdom{Expr}$ and $\toval : \textdom{Expr} \pfn \textdom{val}$ (notice the latter is partial), such that
\begin{mathpar} {\All \expr, \val. \toval(\expr) = \val \Ra \ofval(\val) = \expr} \and {\All\val. \toval(\ofval(\val)) = \val} 
\end{mathpar}
\item There exists a \emph{primitive reduction relation} \[(-,- \step -,-,-) \subseteq \textdom{Expr} \times \textdom{State} \times \textdom{Expr} \times \textdom{State} \times (\textdom{Expr} \uplus \set{\bot})\]
  We will write $\expr_1, \state_1 \step \expr_2, \state_2$ for $\expr_1, \state_1 \step \expr_2, \state_2, \bot$. \\
  A reduction $\expr_1, \state_1 \step \expr_2, \state_2, \expr_\f$ indicates that, when $\expr_1$ reduces to $\expr_2$, a \emph{new thread} $\expr_\f$ is forked off.
\item All values are stuck:
\[ \expr, \_ \step  \_, \_, \_ \Ra \toval(\expr) = \bot \]
\end{itemize}

\begin{defn}
  An expression $\expr$ and state $\state$ are \emph{reducible} (written $\red(\expr, \state)$) if
  \[ \Exists \expr_2, \state_2, \expr_\f. \expr,\state \step \expr_2,\state_2,\expr_\f \]
\end{defn}

\begin{defn}
  An expression $\expr$ is said to be \emph{atomic} if it reduces in one step to a value:
  \[ \All\state_1, \expr_2, \state_2, \expr_\f. \expr, \state_1 \step \expr_2, \state_2, \expr_\f \Ra \Exists \val_2. \toval(\expr_2) = \val_2 \]
\end{defn}

\begin{defn}[Context]
  A function $\lctx : \textdom{Expr} \to \textdom{Expr}$ is a \emph{context} if the following conditions are satisfied:
  \begin{enumerate}[itemsep=0pt]
  \item $\lctx$ does not turn non-values into values:\\
    $\All\expr. \toval(\expr) = \bot \Ra \toval(\lctx(\expr)) = \bot $
  \item One can perform reductions below $\lctx$:\\
    $\All \expr_1, \state_1, \expr_2, \state_2, \expr_\f. \expr_1, \state_1 \step \expr_2,\state_2,\expr_\f \Ra \lctx(\expr_1), \state_1 \step \lctx(\expr_2),\state_2,\expr_\f $
  \item Reductions stay below $\lctx$ until there is a value in the hole:\\
    $\All \expr_1', \state_1, \expr_2, \state_2, \expr_\f. \toval(\expr_1') = \bot \land \lctx(\expr_1'), \state_1 \step \expr_2,\state_2,\expr_\f \Ra \Exists\expr_2'. \expr_2 = \lctx(\expr_2') \land \expr_1', \state_1 \step \expr_2',\state_2,\expr_\f $
  \end{enumerate}
\end{defn}

\subsection{Concurrent language}

For any language $\Lang$, we define the corresponding thread-pool semantics. \isnew{The step relation for thread pool configurations is indexed by the number of the thread that performed a step.}

\paragraph{Machine syntax}
\[
	\tpool \in \textdom{ThreadPool} \eqdef \bigcup_n \textdom{Expr}^n
\]
\[
	\cfgvar \in \textdom{Config} \eqdef \textdom{ThreadPool} \times \textdom{State}
\]

\judgment[Machine reduction]{\cfg{\tpool}{\state} \istep{\isnew{i}}
  \cfg{\tpool'}{\state'}}
\begin{mathpar}
\infer
  {\expr_1, \state_1 \step \expr_2, \state_2, \expr_\f \and \expr_\f \neq \bot \and \isnew{|T| = i}}
  {\cfg{\tpool \dplus [\expr_1] \dplus \tpool'}{\state_1} \istep{\isnew{i}}
     \cfg{\tpool \dplus [\expr_2] \dplus \tpool' \dplus [\expr_\f]}{\state_2}}
\and\infer
  {\expr_1, \state_1 \step \expr_2, \state_2 \and \isnew{|T| = i}}
  {\cfg{\tpool \dplus [\expr_1] \dplus \tpool'}{\state_1} \istep{\isnew{i}}
     \cfg{\tpool \dplus [\expr_2] \dplus \tpool'}{\state_2}}
\end{mathpar}

\isnew{
\begin{defn} We say thread index $i$ is \emph{enabled} in $\cfg{\tpool}{\state}$ if there exists $\tpool'$ and $\state'$ such that $\cfg{\tpool}{\state} \istep{i} \cfg{\tpool'}{\state'}$.
\end{defn}
}

\isnew{
\begin{defn}
A diverging execution\footnote{This may also be defined co-inductively; in the Coq formalization we use a co-inductive definition and give the definition here as a derived one.} of $\cfg{\tpool}{\state}$ is a function ${F: \mathbb{N} \rightarrow \textdom{Config} \times \mathbb{N}}$ such that:
\begin{enumerate} 
 \item $F(0) = (\cfg{\tpool}{\state}, i)$ for some $i$.
 \item For all $n$, if $F(n) = (\cfg{\tpool_n}{\state_n}, j)$ 
                    and $F(n + 1) = (\cfg{\tpool_{n+1}}{\state_{n+1}}, j')$ 
        then $\cfg{\tpool_{n}}{\state_{n}} \istep{j} \cfg{\tpool_{n+1}}{\state_{n+1}}$.
\end{enumerate}
\end{defn}
}

\isnew{
\begin{defn} We say that thread index $i$ is \emph{eventually always enabled} in a diverging execution $F$ if there exists N such that $\forall n \geq N$, $i$ is enabled in $\pi_1(F(n))$.
\end{defn}
}

\isnew{
\begin{defn} We say that thread index $i$ \emph{always eventually steps} in a diverging execution $F$ if for all $n$, there exists $n' \geq n$ such that $\pi_2(F(n')) = i$.
\end{defn}
}

\isnew{
\begin{defn} A diverging execution is \emph{(weakly) fair} if for all $i$, if $i$ is eventually always enabled in $F$, then $i$ always eventually steps in $F$.
\end{defn}
}
\section{Logic}
\label{sec:logic}

To instantiate Iris, you need to define the following parameters:
\begin{itemize}
\item A language $\Lang$, and
\item a locally contractive bifunctor $\iFunc : \COFEs \to \CMRAs$ defining the ghost state, such that for all COFEs $A$, the CMRA $\iFunc(A)$ has a unit. (By \lemref{lem:cmra-unit-total-core}, this means that the core of $\iFunc(A)$ is a total function.)
\end{itemize}

\noindent
As usual for higher-order logics, you can furthermore pick a \emph{signature} $\Sig = (\SigType, \SigFn, \SigAx)$ to add more types, symbols and axioms to the language.
You have to make sure that $\SigType$ includes the base types:
\[
	\SigType \supseteq \{ \textlog{Val}, \textlog{Expr}, \textlog{State}, \textlog{M}, \textlog{InvName}, \textlog{InvMask}, \Prop \}
\]
Elements of $\SigType$ are ranged over by $\sigtype$.

Each function symbol in $\SigFn$ has an associated \emph{arity} comprising a natural number $n$ and an ordered list of $n+1$ types $\type$ (the grammar of $\type$ is defined below, and depends only on $\SigType$).
We write
\[
	\sigfn : \type_1, \dots, \type_n \to \type_{n+1} \in \SigFn
\]
to express that $\sigfn$ is a function symbol with the indicated arity.

Furthermore, $\SigAx$ is a set of \emph{axioms}, that is, terms $\term$ of type $\Prop$.
Again, the grammar of terms and their typing rules are defined below, and depends only on $\SigType$ and $\SigFn$, not on $\SigAx$.
Elements of $\SigAx$ are ranged over by $\sigax$.

\subsection{Grammar}\label{sec:grammar}

\paragraph{Syntax.}
Iris syntax is built up from a signature $\Sig$ and a countably infinite set $\textdom{Var}$ of variables (ranged over by metavariables $x$, $y$, $z$):

\begin{align*}
  \type \bnfdef{}&
      \sigtype \mid
      1 \mid
      \type \times \type \mid
      \type \to \type
\\[0.4em]
  \term, \prop, \pred \bnfdef{}&
      \var \mid
      \sigfn(\term_1, \dots, \term_n) \mid
      () \mid
      (\term, \term) \mid
      \pi_i\; \term \mid
      \Lam \var:\type.\term \mid
      \term(\term)  \mid
      \munit \mid
      \mcore\term \mid
      \term \mtimes \term \mid
\\&
    \FALSE \mid
    \TRUE \mid
    \isnew{\EMP} \mid
    \term =_\type \term \mid
    \prop \Ra \prop \mid
    \prop \land \prop \mid
    \prop \lor \prop \mid
    \prop * \prop \mid
    \prop \wand \prop \mid
\\&
    \MU \var:\type. \term  \mid
    \Exists \var:\type. \prop \mid
    \All \var:\type. \prop \mid
\\&
    \knowInv{\term}{\prop} \mid
    \ownGGhost{\term} \mid
    \isnew{\ownGLGhost{\term} \mid \mval(\term)} \mid
    \isnew{\stopped} \mid
    \mval(\term) \mid
    \ownPhys{\term} \mid
    \always\prop \mid
    \isnew{\aff\prop} \mid
    {\later\prop} \mid
    \pvs[\term][\term] \prop\mid
    \isnew{\psvs[\term][\term] \prop}\mid
    \wpre{\term}[\term]{\Ret\var.\term}
\end{align*}
Recursive predicates must be \emph{guarded}: in $\MU \var. \term$, the variable $\var$ can only appear under the later $\later$ modality.

Note that $\always$, $\later$ bind more tightly than $*$, $\wand$, $\land$, $\lor$, and $\Ra$.
We will write $\pvs[\term] \prop$ for $\pvs[\term][\term] \prop$, \isnew{and similarly for $\psvs[\term] \prop$}.
If we omit the mask, then it is $\top$ for weakest precondition $\wpre\expr{\Ret\var.\prop}$ and $\emptyset$ for primitive view shifts $\pvs \prop$ \isnew{and primitive step shifts $\psvs \prop$}.

Some propositions are \emph{timeless}, which intuitively means that step-indexing does not affect them.
This is a \emph{meta-level} assertion about propositions, defined as follows:

\[ \vctx \proves \timeless{\prop} \eqdef \vctx\mid\later\prop \proves \prop \lor \later\FALSE \]

\isnew{
Similarly, some propositions are \emph{affine timeless}, which means that step-indexing does not affect them \emph{when under an affine modality}:

\[ \vctx \proves \atimeless{\prop} \eqdef \vctx\mid\aff{\later\prop} \proves \aff{\prop} \lor \later\FALSE \]
}

\paragraph{Metavariable conventions.}
We introduce additional metavariables ranging over terms and generally let the choice of metavariable indicate the term's type:
\[
\begin{array}{r|l}
 \text{metavariable} & \text{type} \\\hline
  \term, \termB & \text{arbitrary} \\
  \val, \valB & \textlog{Val} \\
  \expr & \textlog{Expr} \\
  \state & \textlog{State} \\
\end{array}
\qquad\qquad
\begin{array}{r|l}
 \text{metavariable} & \text{type} \\\hline
  \iname & \textlog{InvName} \\
  \mask & \textlog{InvMask} \\
  \melt, \meltB & \textlog{M} \\
  \prop, \propB, \propC & \Prop \\
  \pred, \predB, \predC & \type\to\Prop \text{ (when $\type$ is clear from context)} \\
\end{array}
\]

\paragraph{Variable conventions.}
We assume that, if a term occurs multiple times in a rule, its free variables are exactly those binders which are available at every occurrence.

\subsection{Types}\label{sec:types}

Iris terms are simply-typed.
The judgment $\vctx \proves \wtt{\term}{\type}$ expresses that, in variable context $\vctx$, the term $\term$ has type $\type$.

A variable context, $\vctx = x_1:\type_1, \dots, x_n:\type_n$, declares a list of variables and their types.
In writing $\vctx, x:\type$, we presuppose that $x$ is not already declared in $\vctx$.

\judgment[Well-typed terms]{\vctx \proves_\Sig \wtt{\term}{\type}}
\begin{mathparpagebreakable}
	\axiom{x : \type \proves \wtt{x}{\type}}
\and
	\infer{\vctx \proves \wtt{\term}{\type}}
		{\vctx, x:\type' \proves \wtt{\term}{\type}}
\and
	\infer{\vctx, x:\type', y:\type' \proves \wtt{\term}{\type}}
		{\vctx, x:\type' \proves \wtt{\term[x/y]}{\type}}
\and
	\infer{\vctx_1, x:\type', y:\type'', \vctx_2 \proves \wtt{\term}{\type}}
		{\vctx_1, x:\type'', y:\type', \vctx_2 \proves \wtt{\term[y/x,x/y]}{\type}}
\and
	\infer{
		\vctx \proves \wtt{\term_1}{\type_1} \and
		\cdots \and
		\vctx \proves \wtt{\term_n}{\type_n} \and
		\sigfn : \type_1, \dots, \type_n \to \type_{n+1} \in \SigFn
	}{
		\vctx \proves \wtt {\sigfn(\term_1, \dots, \term_n)} {\type_{n+1}}
	}
\and
	\axiom{\vctx \proves \wtt{()}{1}}
\and
	\infer{\vctx \proves \wtt{\term}{\type_1} \and \vctx \proves \wtt{\termB}{\type_2}}
		{\vctx \proves \wtt{(\term,\termB)}{\type_1 \times \type_2}}
\and
	\infer{\vctx \proves \wtt{\term}{\type_1 \times \type_2} \and i \in \{1, 2\}}
		{\vctx \proves \wtt{\pi_i\,\term}{\type_i}}
\and
	\infer{\vctx, x:\type \proves \wtt{\term}{\type'}}
		{\vctx \proves \wtt{\Lam x. \term}{\type \to \type'}}
\and
	\infer
	{\vctx \proves \wtt{\term}{\type \to \type'} \and \wtt{\termB}{\type}}
	{\vctx \proves \wtt{\term(\termB)}{\type'}}
\and
        \infer{}{\vctx \proves \wtt\munit{\textlog{M}}}
\and
	\infer{\vctx \proves \wtt\melt{\textlog{M}}}{\vctx \proves \wtt{\mcore\melt}{\textlog{M}}}
\and
	\infer{\vctx \proves \wtt{\melt}{\textlog{M}} \and \vctx \proves \wtt{\meltB}{\textlog{M}}}
		{\vctx \proves \wtt{\melt \mtimes \meltB}{\textlog{M}}}
\\
	\axiom{\vctx \proves \wtt{\FALSE}{\Prop}}
\and
	\axiom{\vctx \proves \wtt{\TRUE}{\Prop}}
\and
	\isnew{\axiom{\vctx \proves \wtt{\EMP}{\Prop}}}
\and
	\infer{\vctx \proves \wtt{\term}{\type} \and \vctx \proves \wtt{\termB}{\type}}
		{\vctx \proves \wtt{\term =_\type \termB}{\Prop}}
\and
	\infer{\vctx \proves \wtt{\prop}{\Prop} \and \vctx \proves \wtt{\propB}{\Prop}}
		{\vctx \proves \wtt{\prop \Ra \propB}{\Prop}}
\and
	\infer{\vctx \proves \wtt{\prop}{\Prop} \and \vctx \proves \wtt{\propB}{\Prop}}
		{\vctx \proves \wtt{\prop \land \propB}{\Prop}}
\and
	\infer{\vctx \proves \wtt{\prop}{\Prop} \and \vctx \proves \wtt{\propB}{\Prop}}
		{\vctx \proves \wtt{\prop \lor \propB}{\Prop}}
\and
	\infer{\vctx \proves \wtt{\prop}{\Prop} \and \vctx \proves \wtt{\propB}{\Prop}}
		{\vctx \proves \wtt{\prop * \propB}{\Prop}}
\and
	\infer{\vctx \proves \wtt{\prop}{\Prop} \and \vctx \proves \wtt{\propB}{\Prop}}
		{\vctx \proves \wtt{\prop \wand \propB}{\Prop}}
\and
	\infer{
		\vctx, \var:\type \proves \wtt{\term}{\type} \and
		\text{$\var$ is guarded in $\term$}
	}{
		\vctx \proves \wtt{\MU \var:\type. \term}{\type}
	}
\and
	\infer{\vctx, x:\type \proves \wtt{\prop}{\Prop}}
		{\vctx \proves \wtt{\Exists x:\type. \prop}{\Prop}}
\and
	\infer{\vctx, x:\type \proves \wtt{\prop}{\Prop}}
		{\vctx \proves \wtt{\All x:\type. \prop}{\Prop}}
\and
	\infer{
		\vctx \proves \wtt{\prop}{\Prop} \and
		\vctx \proves \wtt{\iname}{\textlog{InvName}}
	}{
		\vctx \proves \wtt{\knowInv{\iname}{\prop}}{\Prop}
	}
\and
	\infer{\vctx \proves \wtt{\melt}{\textlog{M}}}
		{\vctx \proves \wtt{\ownGGhost{\melt}}{\Prop}}
\and
	\infer{\vctx \proves \wtt{\melt}{\type} \and \text{$\type$ is a CMRA}}
		{\vctx \proves \wtt{\mval(\melt)}{\Prop}}
\and
	\infer{\vctx \proves \wtt{\state}{\textlog{State}}}
		{\vctx \proves \wtt{\ownPhys{\state}}{\Prop}}
\and
	\infer{\vctx \proves \wtt{\prop}{\Prop}}
		{\vctx \proves \wtt{\always\prop}{\Prop}}
\and
        \isnew{\infer{\vctx \proves \wtt{\prop}{\Prop}}
	  {\vctx \proves \wtt{\aff{\prop}}{\Prop}}}
\and
	\infer{\vctx \proves \wtt{\prop}{\Prop}}
		{\vctx \proves \wtt{\later\prop}{\Prop}}
\and
	\infer{
		\vctx \proves \wtt{\prop}{\Prop} \and
		\vctx \proves \wtt{\mask}{\textlog{InvMask}} \and
		\vctx \proves \wtt{\mask'}{\textlog{InvMask}}
	}{
		\vctx \proves \wtt{\pvs[\mask][\mask'] \prop}{\Prop}
	}
\and
        \isnew{
	\infer{
		\vctx \proves \wtt{\prop}{\Prop} \and
		\vctx \proves \wtt{\mask}{\textlog{InvMask}} \and
		\vctx \proves \wtt{\mask'}{\textlog{InvMask}}
	}{
		\vctx \proves \wtt{\psvs[\mask][\mask'] \prop}{\Prop}
	}}
\and
	\infer{
		\vctx \proves \wtt{\expr}{\textlog{Expr}} \and
		\vctx,\var:\textlog{Val} \proves \wtt{\term}{\Prop} \and
		\vctx \proves \wtt{\mask}{\textlog{InvMask}}
	}{
		\vctx \proves \wtt{\wpre{\expr}[\mask]{\Ret\var.\term}}{\Prop}
	}
\end{mathparpagebreakable}

\subsection{Proof rules}
\label{sec:proof-rules}

The judgment $\vctx \mid \pfctx \proves \prop$ says that with free variables $\vctx$, proposition $\prop$ holds whenever all assumptions $\pfctx$ hold.
We implicitly assume that an arbitrary variable context, $\vctx$, is added to every constituent of the rules.
Furthermore, an arbitrary \emph{boxed} assertion context $\always\pfctx$ may be added to every constituent.
Axioms $\vctx \mid \prop \provesIff \propB$ indicate that both $\vctx \mid \prop \proves \propB$ and $\vctx \mid \propB \proves \prop$ can be derived.

\judgment{\vctx \mid \pfctx \proves \prop}
\paragraph{Laws of intuitionistic higher-order logic with equality.}
This is entirely standard.
\begin{mathparpagebreakable}
\infer[Asm]
  {\prop \in \pfctx}
  {\pfctx \proves \prop}
\and
\infer[Eq]
  {\pfctx \proves \prop \\ \pfctx \proves \term =_\type \term'}
  {\pfctx \proves \prop[\term'/\term]}
\and
\infer[Refl]
  {}
  {\pfctx \proves \term =_\type \term}
\and
\infer[$\bot$E]
  {\pfctx \proves \FALSE}
  {\pfctx \proves \prop}
\and
\infer[$\top$I]
  {}
  {\pfctx \proves \TRUE}
\and
\infer[$\wedge$I]
  {\pfctx \proves \prop \\ \pfctx \proves \propB}
  {\pfctx \proves \prop \wedge \propB}
\and
\infer[$\wedge$EL]
  {\pfctx \proves \prop \wedge \propB}
  {\pfctx \proves \prop}
\and
\infer[$\wedge$ER]
  {\pfctx \proves \prop \wedge \propB}
  {\pfctx \proves \propB}
\and
\infer[$\vee$IL]
  {\pfctx \proves \prop }
  {\pfctx \proves \prop \vee \propB}
\and
\infer[$\vee$IR]
  {\pfctx \proves \propB}
  {\pfctx \proves \prop \vee \propB}
\and
\infer[$\vee$E]
  {\pfctx \proves \prop \vee \propB \\
   \pfctx, \prop \proves \propC \\
   \pfctx, \propB \proves \propC}
  {\pfctx \proves \propC}
\and
\infer[$\Ra$I]
  {\pfctx, \prop \proves \propB}
  {\pfctx \proves \prop \Ra \propB}
\and
\infer[$\Ra$E]
  {\pfctx \proves \prop \Ra \propB \\ \pfctx \proves \prop}
  {\pfctx \proves \propB}
\and
\infer[$\forall$I]
  { \vctx,\var : \type\mid\pfctx \proves \prop}
  {\vctx\mid\pfctx \proves \forall \var: \type.\; \prop}
\and
\infer[$\forall$E]
  {\vctx\mid\pfctx \proves \forall \var :\type.\; \prop \\
   \vctx \proves \wtt\term\type}
  {\vctx\mid\pfctx \proves \prop[\term/\var]}
\and
\infer[$\exists$I]
  {\vctx\mid\pfctx \proves \prop[\term/\var] \\
   \vctx \proves \wtt\term\type}
  {\vctx\mid\pfctx \proves \exists \var: \type. \prop}
\and
\infer[$\exists$E]
  {\vctx\mid\pfctx \proves \exists \var: \type.\; \prop \\
   \vctx,\var : \type\mid\pfctx , \prop \proves \propB}
  {\vctx\mid\pfctx \proves \propB}
\end{mathparpagebreakable}
Furthermore, we have the usual $\eta$ and $\beta$ laws for projections, $\lambda$ and $\mu$.

\paragraph{Laws of bunched implications.}
\begin{mathpar}
\begin{array}{rMcMl}
  \isnew{\EMP} * \prop &\provesIff& \prop \\
  \prop * \propB &\provesIff& \propB * \prop \\
  (\prop * \propB) * \propC &\provesIff& \prop * (\propB * \propC)
\end{array}
\and
\infer[$*$-mono]
  {\prop_1 \proves \propB_1 \and
   \prop_2 \proves \propB_2}
  {\prop_1 * \prop_2 \proves \propB_1 * \propB_2}
\and
\inferB[$\wand$I-E]
  {\prop * \propB \proves \propC}
  {\prop \proves \propB \wand \propC}
\end{mathpar}

\paragraph{Laws for ghosts and physical resources.}
\begin{mathpar}
\begin{array}{rMcMl}
\ownGGhost{\melt} * \ownGGhost{\meltB} &\provesIff&  \ownGGhost{\melt \mtimes \meltB} \\
\ownGGhost{\melt} &\proves& \mval(\melt) \\
\TRUE &\proves&  \ownGGhost{\munit}
\end{array}
\and
\begin{array}{c}
\ownPhys{\state} * \ownPhys{\state'} \proves \FALSE
\end{array}
\and
\isnew{
\infer{\forall n.\, \nexists \melt'.\, \melt \mstepN{n} \melt'}{\ownGLGhost{\melt} \proves \stopped}
}
\end{mathpar}

\isnew{Similar rules hold for $\ownGLGhost{\melt}$}. 

\paragraph{Laws for the later modality.}
\begin{mathpar}
\infer[$\later$-mono]
  {\pfctx \proves \prop}
  {\pfctx \proves \later{\prop}}
\and
\infer[L{\"o}b]
  {}
  {(\later\prop\Ra\prop) \proves \prop}
\and
\isnew{\infer[U-L{\"o}b]
  {}
  {\aff{\always{(\aff{\always{\later\prop}}\wand\aff{\always{\prop}})}} \proves \always\prop}}
\and
\infer[$\later$-$\exists$]
  {\text{$\type$ is inhabited}}
  {\later{\Exists x:\type.\prop} \proves \Exists x:\type. \later\prop}
\\\\
\begin{array}[c]{rMcMl}
  \later{(\prop \wedge \propB)} &\provesIff& \later{\prop} \wedge \later{\propB}  \\
  \later{(\prop \vee \propB)} &\provesIff& \later{\prop} \vee \later{\propB} \\
\end{array}
\and
\begin{array}[c]{rMcMl}
  \later{\All x.\prop} &\provesIff& \All x. \later\prop \\
  \Exists x. \later\prop &\proves& \later{\Exists x.\prop}  \\
  \later{(\prop * \propB)} &\provesIff& \later\prop * \later\propB
\end{array}
\end{mathpar}
A type $\type$ being \emph{inhabited} means that $ \proves \wtt{\term}{\type}$ is derivable for some $\term$.

\begin{mathpar}
\infer
{\text{$\term$ or $\term'$ is a discrete COFE element}}
{\timeless{\term =_\type \term'}}

\infer
{\text{$\melt$ is a discrete COFE element}}
{\timeless{\ownGGhost\melt}}

\infer
{\text{$\melt$ is an element of a discrete CMRA}}
{\timeless{\mval(\melt)}}

\infer{}
{\timeless{\ownPhys\state}}

\infer
{\vctx \proves \timeless{\propB}}
{\vctx \proves \timeless{\prop \Ra \propB}}

\infer
{\vctx \proves \timeless{\propB}}
{\vctx \proves \timeless{\prop \wand \propB}}

\infer
{\vctx,\var:\type \proves \timeless{\prop}}
{\vctx \proves \timeless{\All\var:\type.\prop}}

\infer
{\vctx,\var:\type \proves \timeless{\prop}}
{\vctx \proves \timeless{\Exists\var:\type.\prop}}
\end{mathpar}

\isnew{
\begin{mathpar}
\infer
{\text{$\term$ or $\term'$ is a discrete COFE element}}
{\atimeless{\term =_\type \term'}}

\infer
{\text{$\melt$ is a discrete COFE element}}
{\atimeless{\ownGGhost\melt}}

\infer
{\text{$\melt$ is an element of a discrete CMRA}}
{\atimeless{\mval(\melt)}}

\infer{}
{\atimeless{\ownPhys\state}}

\infer
{\vctx \proves \atimeless{\propB}}
{\vctx \proves \atimeless{\prop \Ra \propB}}

\infer
{\vctx \proves \atimeless{\propB}}
{\vctx \proves \atimeless{\prop \wand \propB}}

\infer
{\vctx \proves \atimeless{\prop} \and \vctx \proves \atimeless{\propB}}
{\vctx \proves \atimeless{\prop \land \propB}}

\infer
{\vctx \proves \atimeless{\aff{\prop}} * \vctx \proves \atimeless{\aff{\propB}}}
{\vctx \proves \atimeless{\aff{\prop} * \aff{\propB}}}

\infer
{\vctx \proves \atimeless{\prop} \and \vctx \proves \atimeless{\propB}}
{\vctx \proves \atimeless{\prop \lor \propB}}

\infer
{\vctx,\var:\type \proves \atimeless{\prop}}
{\vctx \proves \atimeless{\All\var:\type.\prop}}

\infer
{\vctx,\var:\type \proves \atimeless{\prop}}
{\vctx \proves \atimeless{\Exists\var:\type.\prop}}
\end{mathpar}
}

\paragraph{Laws for the always/relevant modality.}
\begin{mathpar}
\infer[$\always$I]
  {\always{\pfctx} \proves \prop}
  {\always{\pfctx} \proves \always{\prop}}
\and
\infer[$\always$E]{}
  {\always{\prop} \proves \prop}
\and
\begin{array}[c]{rMcMl}
  \always{(\prop \land \propB)} &\proves& \always{(\prop * \propB)} \\
  \always{\prop} \land \propB &\proves& \always{\prop} * \propB \\
  \isnew{\always{\later\prop}} &\isnew{\provesIff}& \isnew{\always\later\always{\prop}} \\
  \isnew{\always{\later\prop}} &\isnew{\proves}& \isnew{\later\always{\prop}} \\
\end{array}
\and
\begin{array}[c]{rMcMl}
  \isnew{\always\prop * \always\propB} &\isnew{\proves}& \isnew{\always{\prop * \propB}}  \\
  \isnew{\always\prop} &\isnew{\proves}& \isnew{\always{\prop} * \always{\prop}} \\
  \always{(\prop \land \propB)} &\provesIff& \always{\prop} \land \always{\propB} \\
  \always{(\prop \lor \propB)} &\provesIff& \always{\prop} \lor \always{\propB} \\
  \always{\Exists x. \prop} &\provesIff& \Exists x. \always{\prop} \\
\end{array}
\and
{ \term =_\type \term' \proves \always \term =_\type \term'}
\and
{ \knowInv\iname\prop \proves \always \knowInv\iname\prop}
\and
{ \ownGGhost{\mcore\melt} \proves \always \ownGGhost{\mcore\melt}}
\and
{ \mval(\melt) \proves \always \mval(\melt)}
\and
\isnew{\infer{\text{$\type$ is inhabited}}
  {\always{\All x : \type. \prop} \provesIff \All x. \always{\prop}}} \\
\end{mathpar}

\paragraph{\isnew{Laws for the affine modality.}}
\begin{mathpar}
\begin{array}[c]{rMcMl}
  \aff{\prop} &\proves& \prop \\
  \aff{\TRUE} &\provesIff& \EMP \\
  \aff{\aff{\prop}} &\provesIff& \aff{\prop} \\
  \prop * \aff{\propB} &\proves& \prop \\
  \aff{\aff{\prop} * \aff{\propB}} &\provesIff& \aff{\prop} * \aff{\propB} \\
  \aff{\prop} \land \propB &\provesIff& \aff{\prop \land \propB} \\
  {\aff{\later\prop}} &\provesIff& \aff{\later\aff{\prop}} \\
  {\aff{\later\prop}} &\proves& \later\aff{\prop} \\
  {\aff{\later(\aff{\prop} * \aff{\propB})}} &\proves& \aff{\later\prop} * \aff{\later\prop} \\
  {\aff{\always\prop}} &\provesIff& \always{\aff{\prop}} \\
  { \term =_\type \term'} &\proves& {\aff{\term =_\type \term'}} \\
  { \knowInv\iname\prop} &\proves& {\aff{\knowInv\iname\prop}} \\
  { \ownGGhost{\mcore\melt}} &\proves&\aff{\ownGGhost{\mcore\melt}}  \\
  { \mval(\melt)} &\proves& \aff{\mval(\melt)}
\end{array}
\end{mathpar}
\begin{mathpar}
{\infer{\text{$\type$ is inhabited}}
  {\aff{\All x : \type. \prop} \provesIff \All x. \aff{\prop}}}
\and
{\infer{\text{$\type$ is inhabited}}
  {\aff{\Exists x : \type. \prop} \provesIff \Exists x. \aff{\prop}}}
\end{mathpar}

\paragraph{Laws of primitive view shifts.}
\begin{mathpar}
\infer[pvs-intro]
{}{\prop \proves \pvs[\mask] \prop}

\infer[pvs-mono]
{\prop \proves \propB}
{\pvs[\mask_1][\mask_2] \prop \proves \pvs[\mask_1][\mask_2] \propB}

\infer[pvs-timeless]
{\timeless\prop}
{\later\prop \proves \pvs[\mask] \prop}

\isnew{
\infer[pvs-atimeless]
{\atimeless\prop}
{\aff{\later\prop} \proves \pvs[\mask] \aff{\prop}}
}

\infer[pvs-trans]
{\mask_2 \subseteq \mask_1 \cup \mask_3}
{\pvs[\mask_1][\mask_2] \pvs[\mask_2][\mask_3] \prop \proves \pvs[\mask_1][\mask_3] \prop}

\infer[pvs-mask-frame]
{}{\pvs[\mask_1][\mask_2] \prop \proves \pvs[\mask_1 \uplus \mask_\f][\mask_2 \uplus \mask_\f] \prop}

\infer[pvs-frame]
{}{\propB * \pvs[\mask_1][\mask_2]\prop \proves \pvs[\mask_1][\mask_2] \propB * \prop}

\inferH{pvs-allocI}
{\text{$\mask$ is infinite}}
{\isnew{\aff{\later\prop}} \proves \pvs[\mask] \Exists \iname \in \mask. \knowInv\iname\prop}

\inferH{pvs-openI}
{}{\knowInv\iname\prop \proves \pvs[\set\iname][\emptyset] \isnew{\aff{\later\prop}}}

\inferH{pvs-closeI}
{}{\isnew{\knowInv\iname\prop * \aff{\later\prop} \proves \pvs[\emptyset][\set\iname] \EMP}}

\inferH{pvs-update}
{\melt \mupd \meltsB}
{\ownGGhost\melt \proves \pvs[\mask] \Exists\meltB\in\meltsB. \ownGGhost\meltB}

\isnew{\infer[pvs-affine]
{}{\aff{\pvs[\mask_1][\mask_2]\prop} \provesIff \pvs[\mask_1][\mask_2]\aff{\prop}}}
\end{mathpar}

\paragraph{\isnew{Laws of primitive step shifts.}}
\begin{mathpar}
\infer[psvs-intro]
{}{\prop \proves \pvs[\mask] \prop}

\infer[psvs-mono]
{\prop \proves \propB}
{\psvs[\mask_1][\mask_2] \prop \proves \psvs[\mask_1][\mask_2] \propB}

\infer[pvs-psvs]
{\mask_2 \subseteq \mask_1 \cup \mask_3}
{\pvs[\mask_1][\mask_2] \psvs[\mask_2][\mask_3] \prop \proves \psvs[\mask_1][\mask_3] \prop}

\infer[psvs-pvs]
{\mask_2 \subseteq \mask_1 \cup \mask_3}
{\psvs[\mask_1][\mask_2] \pvs[\mask_2][\mask_3] \prop \proves \psvs[\mask_1][\mask_3] \prop}

\infer[psvs-mask-frame]
{}{\psvs[\mask_1][\mask_2] \prop \proves \psvs[\mask_1 \uplus \mask_\f][\mask_2 \uplus \mask_\f] \prop}

\infer[psvs-frame]
{}{\aff{\propB} * \psvs[\mask_1][\mask_2]\prop \proves \psvs[\mask_1][\mask_2](\aff{\propB} * \prop)}

\inferH{psvs-step}
{\melt, \meltB \msupd \meltsB}
{\ownGGhost\melt * \ownGLGhost\meltB \proves \psvs[\mask] \Exists\melt', \meltB'\in\meltsB. \ownGGhost{\melt'} * \ownGLGhost{\meltB'}}

\end{mathpar}

\paragraph{Laws of weakest preconditions.}
\begin{mathpar}
\infer[wp-value]
{}{\prop[\val/\var] \proves \wpre{\val}[\mask]{\Ret\var.\prop}}

\infer[wp-mono]
{\mask_1 \subseteq \mask_2 \and \var:\textlog{val}\mid\prop \proves \propB}
{\wpre\expr[\mask_1]{\Ret\var.\prop} \proves \wpre\expr[\mask_2]{\Ret\var.\propB}}

\infer[pvs-wp]
{}{\pvs[\mask] \wpre\expr[\mask]{\Ret\var.\prop} \proves \wpre\expr[\mask]{\Ret\var.\prop}}

\infer[wp-pvs]
{}{\wpre\expr[\mask]{\Ret\var.\pvs[\mask] \prop} \proves \wpre\expr[\mask]{\Ret\var.\prop}}

\infer[wp-atomic]
{\mask_2 \subseteq \mask_1 \and \physatomic{\expr}}
{\pvs[\mask_1][\mask_2] \wpre\expr[\mask_2]{\Ret\var. \pvs[\mask_2][\mask_1]\prop}
 \proves \wpre\expr[\mask_1]{\Ret\var.\prop}}

\infer[wp-frame]
{}{\isnew{\aff{\propB}} * \wpre\expr[\mask]{\Ret\var.\prop} 
  \proves \wpre\expr[\mask]{\Ret\var.\isnew{\aff{\propB}}*\prop}}

\infer[wp-frame-step]
{\toval(\expr) = \bot \and \mask_2 \subseteq \mask_1}
{\wpre\expr[\mask]{\Ret\var.\prop} * \isnew{\aff{\pvs[\mask_1][\mask_2]\later\pvs[\mask_2][\mask_1]\propB}} \proves \wpre\expr[\mask \uplus \mask_1]{\Ret\var.\isnew{\aff{\propB}}*\prop}}

\infer[wp-bind]
{\text{$\lctx$ is a context}}
{\wpre\expr[\mask]{\Ret\var. \wpre{\lctx(\ofval(\var))}[\mask]{\Ret\varB.\prop}} \proves \wpre{\lctx(\expr)}[\mask]{\Ret\varB.\prop}}
\end{mathpar}

\paragraph{Lifting of operational semantics.}~
\begin{mathpar}
  \infer[wp-lift-step]
  {\mask_2 \subseteq \mask_1 \and
   \toval(\expr_1) = \bot}
  { {\begin{inbox} %
        ~~\pvs[\mask_1][\mask_2] \Exists \state_1. \isnew{\aff{\red(\expr_1,\state_1) \land \later\ownPhys{\state_1}}} * {}\\\qquad\qquad\qquad \later\All \expr_2, \state_2, \expr_\f. \left( (\expr_1, \state_1 \step \expr_2, \state_2, \expr_\f) \land \ownPhys{\state_2} \right) \wand \isnew{\psvs[\mask_2][\mask_1]} \wpre{\expr_2}[\mask_1]{\Ret\var.\prop} * \wpre{\expr_\f}[\top]{\Ret\any.\isnew{\stopped}}  {}\\\proves \wpre{\expr_1}[\mask_1]{\Ret\var.\prop}
      \end{inbox}} }
\\\\
  \infer[wp-lift-pure-step]
  {\toval(\expr_1) = \bot \and
   \All \state_1. \red(\expr_1, \state_1) \and
   \All \state_1, \expr_2, \state_2, \expr_\f. \expr_1,\state_1 \step \expr_2,\state_2,\expr_\f \Ra \state_1 = \state_2 }
  {\later\All \state, \expr_2, \expr_\f. (\expr_1,\state \step \expr_2, \state,\expr_\f)  \wand \isnew{\psvs[\mask_1]} \wpre{\expr_2}[\mask_1]{\Ret\var.\prop} * \wpre{\expr_\f}[\top]{\Ret\any.\isnew{\stopped}} \proves \wpre{\expr_1}[\mask_1]{\Ret\var.\prop}}
\end{mathpar}
Notice that primitive view shifts cover everything to their right, \ie $\pvs \prop * \propB \eqdef \pvs (\prop * \propB)$, \isnew{and similarly for primitive step shifts}.

Here we define $\wpre{\expr_\f}[\mask]{\Ret\var.\prop} \eqdef \isnew{\EMP}$ if $\expr_\f = \bot$ (remember that our stepping relation can, but does not have to, define a forked-off expression).

\subsection{Adequacy}

\paragraph{Finite Executions and Safety.}
The adequacy statement concerning functional correctness reads as follows:
\begin{align*}
 &\All \mask, \expr, \val, \pred, \state, \melt, \isnew{\meltB}, \state', \tpool'.
 \\&(\All n. \melt \mtimes \isnew{\meltB} \in \mval_n) \Ra
 \\&( \ownPhys\state * \ownGGhost\melt * \isnew{\ownGLGhost\meltB} \proves \wpre{\expr}[\mask]{x.\; \pred(x)}) \Ra
 \\&\cfg{\state}{[\expr]} \step^\ast
     \cfg{\state'}{[\val] \dplus \tpool'} \Ra
     \\&\pred(\val)
\end{align*}
where $\pred$ is a \emph{meta-level} predicate over values, \ie it can mention neither resources nor invariants.

Furthermore, the following adequacy statement shows that our weakest preconditions imply that the execution never gets \emph{stuck}: Every expression in the thread pool either is a value, or can reduce further.
\begin{align*}
 &\All \mask, \expr, \state, \melt, \isnew{\meltB}, \state', \tpool'.
 \\&(\All n. \melt * \isnew{\meltB} \in \mval_n) \Ra
 \\&( \ownPhys\state * \ownGGhost\melt * \isnew{\ownGLGhost\meltB}
 \proves \wpre{\expr}[\mask]{x.\; \pred(x)}) \Ra
 \\&\cfg{\state}{[\expr]} \step^\ast
     \cfg{\state'}{\tpool'} \Ra
     \\&\All\expr'\in\tpool'. \toval(\expr') \neq \bot \lor \red(\expr', \state')
\end{align*}
Notice that this is stronger than saying that the thread pool can reduce; we actually assert that \emph{every} non-finished thread can take a step.

\paragraph{\isnew{Diverging Executions.}}

Remember that our goal is to show that if we have proved
$\wpre{\expr}{x.\;\propB}$ and $\expr$ has a fair diverging execution,
then there is a fair diverging execution of some corresponding
source program. Of course, there is no such notion of a ``source
program'' baked into the logic at this point yet. 

All we have is the step relation $\mstepN{n}$ on CMRA elements. The
rules above for $\wpre{\expr}{x.\; Q}$ suggest that\footnote{We shall see that this is indeed the case when we examine the definition of weakest precondition in the model.} for each step
$\expr$ takes, we are required to perform a step-shift, thereby
performing reduction steps on CMRA resources.

We can lift the $\mstepN{n}$
relation from CMRA elements to lists of CMRA elements, much as we lift
the per-thread step relation to an indexed relation on threadpools:
\begin{mathpar}
\infer
  {\melt_i \mstepN{n} (\melt_i' \mtimes \melt_\f)}
  {[\melt_0, \dots, \melt_i, \dots, \melt_k] \mstepNi{n}{i}
     [\melt_0, \dots, \melt_i', \dots, \melt_k, \melt_\f]}
\and
\infer
  {\melt_i \mstepN{n} \melt_i'}
  {[\melt_0, \dots, \melt_i, \dots, \melt_k] \mstepNi{n}{i}
     [\melt_0, \dots, \melt_i', \dots, \melt_k]}
\end{mathpar}
In general we shall use $\mlist$ and $\mlistB$ as metavariables for such lists of CMRA elements, and write $\bigast \mlist$ to represent the product of the elements of $\mlist$.

\begin{defn} We say index $i$ is $n$-\emph{enabled} in $\mlist$ if there exists $\mlistB$ such that
  $\mlist \mstepNi{n}{i} \mlistB$. 
\end{defn}

Somehow, we want to connect up these CMRA steps to steps in some source language, for a suitably chosen CMRA.
Still working a bit more abstractly than that for the moment, let $\cofeB$ be some COFE equipped with a family of relations 
$(\istep{i} : \maybe\cofeB \times \maybe\cofeB)_{i\in \mathbb{N}}$. We can adapt all of our definitions about fairness to the setting of this reduction on $\maybe\cofeB$, e.g.:

\begin{defn} We say index $i$ is \emph{enabled} in $\celt \in \maybe\cofeB$ if there exists $\celtB$ such that $\celt \istep{i} \celtB$. 
\end{defn}

\begin{defn}
A diverging execution of $\celt$ is a function ${F: \mathbb{N} \rightarrow \maybe\cofeB \times \mathbb{N}}$ such that:
\begin{enumerate} 
 \item $F(0) = (\celt, i)$ for some $i$.
 \item For all $n$, if $F(n) = (\celtB, j)$
                    and $F(n + 1) = (\celtB', j')$ 
        then $\celtB \istep{j} \celtB'$.
\end{enumerate}
\end{defn}
and so on.

\begin{defn}
We say $H: \mathbb{N} \rightarrow \textdom{List \monoid} \rightarrow \maybe\cofeB$ is a step-preserving map if the following conditions hold:
\begin{enumerate}
\item If $i$ is enabled in $H(n, \mlist)$ then $i$ is $n$-enabled in $\mlist$.
\item If $\bigast \mlist \in \mval_{n}$, $\bigast \mlistB \in \mval_{n}$,
  $\mlist \mstepNi{n}{i} \mlistB$ and $H(n, \mlist) \in \cofeB$, then $H(n, \mlistB) \in \cofeB$ and $H(n, \mlist) \istep{i} H(n, \mlistB)$
\item If $\bigast \mlist \in \mval_{n}$ then $\forall n' \leq n$, $H(n, \mlist) = H(n', \mlist)$
\item If $H(n, \mlist) \istep{i} H(n, \mlistB)$ then $\forall n'\leq n$, $H(n', \mlist) \istep{i} H(n', \mlistB)$
\end{enumerate}
\end{defn}

\begin{defn}
$\cofeB$ has \emph{bounded non-determinism} if $\forall \celt \in \cofeB$, the set $\{ \celtB \ | \exists i.\, \celt \istep{i} \celtB\}$ is finite.
\end{defn}

We are now able to state the infinite adequacy theorem:
\begin{thm}
Assume $\cofeB$ has bounded non-determinism under a step relation
$(\istep{i} : \maybe\cofeB \times \maybe\cofeB)_{i \in \mathbb{N}}$. 
Let $H$ be a step-preserving map to $\cofeB$. If
$\cfg{\expr}{\state}$ has a diverging execution, and all of the following hold:
\begin{enumerate}
 \item $n > 2$,
 \item $\All n'. \melt \mtimes \meltB \in \mval_{n'}$,
 \item $\ownPhys\state * \ownGGhost\melt * \ownGLGhost\meltB \proves \wpre{\expr}[\mask]{x.\; \pred(x)})$,
 \item $\forall x.\, \pred(x) \proves \stopped$,
 \item $H(n, [\meltB]) \in \cofeB$
\end{enumerate}
then there exists a diverging execution of $H(n, [\meltB])$. Moreover, if the execution of $\cfg{\expr}{\state}$ was fair, so too is the execution of $H(n, [\meltB])$.
\end{thm}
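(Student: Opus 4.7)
The plan is to build the diverging execution of $H(n,[\meltB])$ by induction on the target execution, using a concurrent generalization of the bootstrap lemma (\lemref{lem:bootstrap}) combined with the quantifier inversion lemma (\lemref{lem:quantifier-inversion}). Throughout the induction, I maintain the invariant that after $k$ target steps reaching configuration $\cfg{\tpool_k}{\state_k}$, there is a CMRA state list $\mlist_k$ such that (a) $H(n,\mlist_k)\in\cofeB$, (b) $\bigast\mlist_k$ is valid at all step-indices and is compatible with a ``physical state'' resource for $\state_k$, and (c) for every step-index $m$, each thread $\expr_j$ in $\tpool_k$ satisfies a weakest precondition triple $\wpre{\expr_j}{x.\,\pred_j(x)}$ at index $m$, with $\pred_j$ being $\pred$ for the main thread and $\stopped$ for forked threads. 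The base case uses $\mlist_0 = [\meltB]$ and the assumption that $\bigast\mlist_0\mtimes\melt$ is always valid, together with hypothesis~(3) to obtain the initial WP at every index. Hypothesis~(5) gives $H(n,[\meltB])\in\cofeB$.

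For the inductive step, suppose the target takes a step at thread index $i$, reducing $\expr_i$ via the per-thread semantics (possibly forking $\expr_\f$). Unfolding the WP via \ruleref{wp-lift-step}, at each step-index $m>0$ the resource $\mlist_k(i)$ admits a frame-preserving step-update to some next resource (and possibly a unit-initialized forked resource), producing a new list $\mlistB_m$ with $\mlist_k \mstepNi{m}{i} \mlistB_m$, such that the resulting WP assertions hold at index $m-1$ for the updated thread-pool. Because $\cofeB$ has bounded non-determinism and $H$ is step-preserving, the set of possible successors $\{H(n,\mlistB) \mid \mlist_k \mstepNi{n}{i} \mlistB\}$ lies inside the finite successor set of $H(n,\mlist_k)$ in $\cofeB$. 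Applying the quantifier inversion lemma to this finite set, indexed over $m$, yields a single CMRA successor $\mlist_{k+1}$ that witnesses the step at every index simultaneously and preserves the WP-at-every-index invariant on the new thread-pool. Condition (2) of step-preservation then gives the $\cofeB$-step $H(n,\mlist_k) \istep{i} H(n,\mlist_{k+1})$, which is our $(k{+}1)$-st step in the constructed execution $F$. Gluing these together yields the required diverging execution of $H(n,[\meltB])$.

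For the fairness clause, suppose the target execution is fair and that some index $i$ is eventually always enabled in $F$, i.e.\ in $H(n,\mlist_k)$ for all sufficiently large $k$. By condition (1) of step-preservation, $i$ is $n$-enabled in $\mlist_k$ for all such $k$, which in turn (by the invariant linking $\mlist_k$ to $\tpool_k$, and by the safety consequence of the per-thread WP that non-value expressions are reducible) means that target thread $i$ is enabled in $\cfg{\tpool_k}{\state_k}$ for all such $k$. Fairness of the target then guarantees that thread $i$ is scheduled infinitely often at the target level; by our construction each such target step at index $i$ produces an $i$-indexed step in $F$, so $i$ always eventually steps in $F$ as required.

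\textbf{Main obstacle.} The subtle point is the step-indexing book-keeping in the concurrent setting: after $k$ steps the quantifier inversion must be applied to a predicate whose truth at index $m$ gives a successor that preserves WP \emph{at all indices $m'$ for the entire thread-pool, not just the stepping thread}. This requires phrasing the inductive predicate as ``$\exists \mlist_{k+1}.\,\mlist_k\mstepNi{m}{i}\mlist_{k+1}\land$ every thread's WP holds at the cut-off index derived from $m$'' and then using downward closure of step-indexed truth, together with the $n'\leq n$ monotonicity clause (3) of step-preservation, to stabilize the chosen successor across all $m$. Handling delay steps and the wp-frame-step rule adds a finite amount of further slack but does not change the shape of the argument, since each target reduction is still bounded by a finite burst of resource steps.
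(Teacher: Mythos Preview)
Your overall strategy matches the paper's: iterate a bootstrap step (the concurrent analogue of \lemref{lem:bootstrap}) using quantifier inversion over the finite successor set provided by bounded non-determinism, and then argue fairness separately. That is exactly the skeleton the paper sketches in \Sref{sec:extensions}.

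There is, however, a genuine gap in how you phrase the inductive invariant. You carry a \emph{single} CMRA list $\mlist_k$ and ask that the weakest-precondition assertions hold at \emph{every} step-index with that one list. But quantifier inversion (\lemref{lem:quantifier-inversion}) only lets you fix an element of the finite set $X$, and here $X$ is the set of $\cofeB$-successors of $H(n,\mlist_k)$, not the set of CMRA successor lists. After applying it you obtain a single $\cofeB$ element $c'$ such that for all $m$ there exists some $\mlist^{(m)}$ with $H(\,\cdot\,,\mlist^{(m)})=c'$ and the WP holding at index $m-1$; the lists $\mlist^{(m)}$ need not agree across $m$, so you cannot re-establish your invariant with a single $\mlist_{k+1}$. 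This is precisely why the paper's \lemref{lem:bootstrap} is stated with the quantifier shape $\forall n.\,\exists \prop_n$: the source expression $\Expr'$ (playing the role of the $\cofeB$ element) is fixed across step-indices, but the remaining resources $\prop_n'$ are allowed to vary with $n$. The correct invariant in the general setting is: a fixed $\cofeB$ element $c_k$, together with, for each step-index $m$, a witness list $\mlist_k^{(m)}$ mapping to $c_k$ under $H$ (conditions (3)--(4) of step-preservation let you stabilise this across indices) such that the per-thread WPs hold at $m$. With that weaker invariant the inductive step goes through unchanged, and the $\cofeB$ execution is built from the $c_k$.

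For the fairness clause, your argument is missing the use of hypothesis~(4). Condition~(1) of step-preservation takes you from ``$i$ enabled in $H(n,\mlist_k)$'' to ``$i$ is $n$-enabled in $\mlist_k$'', i.e.\ the $i$-th linear resource can step. To conclude that \emph{target} thread $i$ is enabled you need to rule out that it is a value: if it were, its postcondition (either $\pred$ for the main thread or $\stopped$ for forked threads) would hold, and by hypothesis~(4) this entails $\stopped$, which says the linear resource \emph{cannot} step, contradicting what you just derived. Only then does WP safety give reducibility of target thread $i$, and fairness of the target execution finishes the argument as you describe.
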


 \endgroup\begingroup
\section{Model and semantics}
\label{sec:model}

The semantics closely follows the ideas laid out in~\cite{catlogic}.

\subsection{Generic model of base logic}
\label{sec:upred-logic}

The base logic including equality, later, always, and a notion of ownership is defined on $\UPred(\monoid)$ for any CMRA $\monoid$.

\typedsection{Interpretation of base assertions}{\Sem{\vctx \proves \term : \Prop} : \Sem{\vctx} \nfn \UPred(\monoid)}
The type $\UPred(\monoid)$ is isomorphic to $\monoid \monra \isnew{\monoid \ra} \SProp$.
We are thus going to define the assertions as mapping \isnew{pairs of} CMRA elements to sets of step-indices.

We introduce an additional logical connective $\ownM\melt$ and \isnew{$\ownMl\melt$}, which will later be used to encode all of $\knowInv\iname\prop$, $\ownGGhost\melt$, $\isnew{\ownGLGhost\melt}$ and $\ownPhys\state$.

\begin{align*}
	\Sem{\vctx \proves t =_\type u : \Prop}_\gamma &\eqdef
	\Lam \any, \meltB. \setComp{n}{\Sem{\vctx \proves t : \type}_\gamma \nequiv{n} \Sem{\vctx \proves u : \type}_\gamma \wedge \isnew{\meltB \nequiv{n} \munit}} \\
	\Sem{\vctx \proves \FALSE : \Prop}_\gamma &\eqdef \Lam \any, \any. \emptyset \\
	\Sem{\vctx \proves \TRUE : \Prop}_\gamma &\eqdef \Lam \any, \any. \mathbb{N} \\
	\isnew{\Sem{\vctx \proves \EMP : \Prop}_\gamma } & \isnew{\ \eqdef \Lam \any, \meltB. \setComp{n}{\meltB \nequiv{n} \munit}} \\
	\Sem{\vctx \proves \prop \land \propB : \Prop}_\gamma &\eqdef
	\Lam \melt, \meltB. \Sem{\vctx \proves \prop : \Prop}_\gamma(\melt, \meltB) \cap \Sem{\vctx \proves \propB : \Prop}_\gamma(\melt, \meltB) \\
	\Sem{\vctx \proves \prop \lor \propB : \Prop}_\gamma &\eqdef
	\Lam \melt, \meltB. \Sem{\vctx \proves \prop : \Prop}_\gamma(\melt, \meltB) \cup \Sem{\vctx \proves \propB : \Prop}_\gamma(\melt, \meltB) \\
	\Sem{\vctx \proves \prop \Ra \propB : \Prop}_\gamma &\eqdef
	\Lam \melt, \meltB. \setComp{n}{\begin{aligned}
            \All m, \melt'.& m \leq n \land \melt \mincl \melt' \land \melt' \in \mval_m  \land \meltB \in \mval_{m} \Ra {} \\
            & m \in \Sem{\vctx \proves \prop : \Prop}_\gamma(\melt', \meltB) \Ra {}\\& m \in \Sem{\vctx \proves \propB : \Prop}_\gamma(\melt', \meltB)\end{aligned}}\\
	\Sem{\vctx \proves \All x : \type. \prop : \Prop}_\gamma &\eqdef
	\Lam \melt, \meltB. \setComp{n}{ \All v \in \Sem{\type}. n \in \Sem{\vctx, x : \type \proves \prop : \Prop}_{\gamma[x \mapsto v]}(\melt, \meltB) } \\
	\Sem{\vctx \proves \Exists x : \type. \prop : \Prop}_\gamma &\eqdef
        \Lam \melt, \meltB. \setComp{n}{ \Exists v \in \Sem{\type}. n \in \Sem{\vctx, x : \type \proves \prop : \Prop}_{\gamma[x \mapsto v]}(\melt, \meltB) } \\
  ~\\
	\Sem{\vctx \proves \always{\prop} : \Prop}_\gamma &\eqdef \Lam\melt,\meltB. \Sem{\vctx \proves \prop : \Prop}_\gamma(\mcore\melt, \mcore\meltB) \cap 
        \isnew{\setComp{n}{\meltB \nequiv{n} \mcore\meltB}} \\
	\isnew{\Sem{\vctx \proves \aff{\prop} : \Prop}_\gamma} &\isnew{\ \eqdef \Lam\melt,\meltB. \Sem{\vctx \proves \prop : \Prop}_\gamma(\melt, \meltB) \cap 
        \setComp{n}{\meltB \nequiv{n} \munit}} \\
	\Sem{\vctx \proves \later{\prop} : \Prop}_\gamma &\eqdef \Lam\melt,\meltB. \setComp{n}{n = 0 \lor n-1 \in \Sem{\vctx \proves \prop : \Prop}_\gamma(\melt,\meltB)}\\
	\Sem{\vctx \proves \prop * \propB : \Prop}_\gamma &\eqdef \Lam\melt,\meltB. \setComp{n}{\begin{aligned} &\Exists \melt_1, \melt_2, \meltB_1, \meltB_2. \melt \nequiv{n} \melt_1 \mtimes \melt_2 \land \isnew{\meltB \nequiv{n} \meltB_1 \mtimes \meltB_2} \land {}\\& \quad n \in \Sem{\vctx \proves \prop : \Prop}_\gamma(\melt_1, \meltB_1) \land n \in \Sem{\vctx \proves \propB : \Prop}_\gamma(\melt_2, \meltB_2)\end{aligned}}
\\
	\Sem{\vctx \proves \prop \wand \propB : \Prop}_\gamma &\eqdef
	\Lam \melt,\meltB. \setComp{n}{\begin{aligned}
            \All m, \melt', \meltB'.& m \leq n \land  \melt\mtimes\melt' \in \mval_m \land \isnew{\meltB\mtimes\meltB' \in \mval_m} \Ra {} \\
            & m \in \Sem{\vctx \proves \prop : \Prop}_\gamma(\melt', \meltB') \Ra {}\\& m \in \Sem{\vctx \proves \propB : \Prop}_\gamma(\melt\mtimes\melt', \meltB\mtimes\meltB')\end{aligned}} \\
        \Sem{\vctx \proves \ownM{\melt} : \Prop}_\gamma &\eqdef \Lam\melt',\meltB. \setComp{n}{\Sem{\vctx \proves \melt' : \textlog{M}} \mincl[n] \melt \land \isnew{\meltB \nequiv{n} \munit}}
        \\
        \isnew{\Sem{\vctx \proves \ownMl{\meltB} : \Prop}}_\gamma &\eqdef \isnew{\Lam\melt,\meltB'. \setComp{n}{\isnew{\meltB \nequiv{n} \meltB'}}}
        \\
        \Sem{\vctx \proves \mval(\melt) : \Prop}_\gamma &\eqdef \Lam\any, \meltB. \setComp{n}{\Sem{\vctx \proves \melt : \type} \in \mval_n \land \isnew{\meltB \nequiv{n} \munit}} \\
\end{align*}

\isnew{Note: There are slight differences between the definition here and the version in the Coq development. For instance, $\ownM{a}$ in the Coq development does not stipulate that the second component is in fact equivalent to $\munit$, which makes it non-affine; but in practice we almost always use $\aff{\ownM{a}}$, so here we just present a version equivalent to that.}

For every definition, we have to show all the side-conditions: The maps have to be non-expansive and monotone.

\subsection{Iris model}

\paragraph{Semantic domain of assertions.}
The first complicated task in building a model of full Iris is defining the semantic model of $\Prop$.
We start by defining the functor that assembles the CMRAs we need to the global resource CMRA:
\begin{align*}
  \textdom{ResF}(\cofe^\op, \cofe) \eqdef{}& \record{\wld: \mathbb{N} \fpfn \agm(\latert \cofe), \pres: \maybe{\exm(\textdom{State})}, \ghostRes: \iFunc(\cofe^\op, \cofe)}
\end{align*}
Above, $\maybe\monoid$ is the monoid obtained by adding a unit to $\monoid$.
(It's not a coincidence that we used the same notation for the range of the core; it's the same type either way: $\monoid + 1$.)
Remember that $\iFunc$ is the user-chosen bifunctor from $\COFEs$ to $\CMRAs$ (see~\Sref{sec:logic}).
$\textdom{ResF}(\cofe^\op, \cofe)$ is a CMRA by lifting the individual CMRAs pointwise.
Furthermore, since $\Sigma$ is locally contractive, so is $\textdom{ResF}$.

Now we can write down the recursive domain equation:
\[ \iPreProp \cong \UPred(\textdom{ResF}(\iPreProp, \iPreProp)) \]
$\iPreProp$ is a COFE defined as the fixed-point of a locally contractive bifunctor.
This fixed-point exists and is unique by America and Rutten's theorem~\cite{America-Rutten:JCSS89,birkedal:metric-space}.
We do not need to consider how the object is constructed. 
We only need the isomorphism, given by
\begin{align*}
  \Res &\eqdef \textdom{ResF}(\iPreProp, \iPreProp) \\
  \iProp &\eqdef \UPred(\Res) \\
	\wIso &: \iProp \nfn \iPreProp \\
	\wIso^{-1} &: \iPreProp \nfn \iProp
\end{align*}

We then pick $\iProp$ as the interpretation of $\Prop$:
\[ \Sem{\Prop} \eqdef \iProp \]

\paragraph{Interpretation of assertions.}
$\iProp$ is a $\UPred$, and hence the definitions from \Sref{sec:upred-logic} apply.
We only have to define the interpretation of the missing connectives, the most interesting bits being primitive view shifts and weakest preconditions.

\typedsection{World satisfaction}{\wsat{-}{-}{-}{-} : 
	\Delta\textdom{State} \times
	\Delta\pset{\mathbb{N}} \times
	\textdom{Res} \times \isnew{\textdom{Res}} \nfn \SProp }
\begin{align*}
  \wsatpre(n, \mask, \state, \rss, \rs, \rsB) & \eqdef \begin{inbox}[t]
    \isnew{(\rs \mtimes \rsB)} \in \mval_{n+1} \land \rs.\pres = \exinj(\sigma) \land 
    \dom(\rss) \subseteq \mask \cap \dom( \rs.\wld) \land {}\\
    \All\iname \in \mask, \prop \in \iProp. (\rs.\wld)(\iname) \nequiv{n+1} \aginj(\latertinj(\wIso(\prop))) \Ra n \in \prop(\rss(\iname), \isnew{\munit})
  \end{inbox}\\
	\wsat{\state}{\mask}{\rs, \rsB} &\eqdef \set{0}\cup\setComp{n+1}{\Exists \rss : \mathbb{N} \fpfn \textdom{Res}. \wsatpre(n, \mask, \state, \rss, \rs \mtimes \prod_\iname \rss(\iname), \isnew{\rsB})}
\end{align*}

\isnew{Notice that the assertion $\prop$ corresponding to the world $\iname$ must be \emph{affine}, in the sense that $n \in \prop(\rss(\iname), \isnew{\munit})$. Without this stipulation, threads would be able to put linear resources inside invariants and STS interpretations. We prevent this in the current formulation of the logic because otherwise the rules for dealing with invariants would have to be stricter if we still wanted to establish fair refinements.}

\typedsection{Primitive view-shift}{\mathit{pvs}_{-}^{-}(-) : \Delta(\pset{\mathbb{N}}) \times \Delta(\pset{\mathbb{N}}) \times \iProp \nfn \iProp}
\begin{align*}
	\mathit{pvs}_{\mask_1}^{\mask_2}(\prop) &= \Lam \rs, \rsB. \setComp{n}{\begin{aligned}
            \All \rs_\f, \rsB_\f, k, \mask_\f, \state.& 0 < k \leq n \land (\mask_1 \cup \mask_2) \disj \mask_\f \land k \in \wsat\state{\mask_1 \cup \mask_\f}
                               {\rs \mtimes \rs_\f, \isnew{\rsB \mtimes \rsB_\f}} 
            \Ra {}\\&
            \Exists \rs'. k \in \prop(\rs', \rsB) \land k \in \wsat\state{\mask_2 \cup \mask_\f}{\rs' \mtimes \rs_\f, \isnew{\rsB \mtimes \rsB_\f}}
          \end{aligned}}
\end{align*}

\typedsection{\isnew{Primitive step-shift}}{\mathit{psvs}_{-}^{-}(-) : \Delta(\pset{\mathbb{N}}) \times \Delta(\pset{\mathbb{N}}) \times \iProp \nfn \iProp}
\begin{align*}
	\mathit{psvs}_{\mask_1}^{\mask_2}(\prop) &= \Lam \rs, \rsB. \setComp{n}{\begin{aligned}
            \All \rs_\f, \rsB_\f, k, \mask_\f, \state.& 0 < k \leq n \land (\mask_1 \cup \mask_2) \disj \mask_\f \land k \in \wsat\state{\mask_1 \cup \mask_\f}
                               {\rs \mtimes \rs_\f, \rsB \mtimes \rsB_\f} 
            \Ra {}\\&
            \Exists \rs', \rsB'. k \in \prop(\rs', \rsB') \land%
            k \in \wsat\state{\mask_2 \cup \mask_\f}{\rs' \mtimes \rs_\f, \rsB' \mtimes \rsB_\f} \land%
            \rsB \mstepN{k} \rsB'
          \end{aligned}}
\end{align*}

\typedsection{Weakest precondition}{\mathit{wp}_{-}(-, -) : \Delta(\pset{\mathbb{N}}) \times \Delta(\textdom{Exp}) \times (\Delta(\textdom{Val}) \nfn \iProp) \nfn \iProp}

$\textdom{wp}$ is defined as the fixed-point of a contractive function.
{
\scriptsize{
\begin{align*}
  \textdom{pre-wp}(\textdom{wp})(\mask, \expr, \pred) &\eqdef \Lam\rs,\rsB.\setComp{n}{\begin{aligned}
        \All &\rs_\f,\rsB_\f, m, \mask_\f, \state. 0 \leq m < n \land \mask \disj \mask_\f \land m+1 \in \wsat\state{\mask \cup \mask_\f}{\rs \mtimes \rs_\f, \rsB \mtimes \rsB_\f} \Ra {}\\
        &(\All\val. \toval(\expr) = \val \Ra \Exists \rs'. m+1 \in \pred(\val)(\rs', \isnew{\rsB}) \land m+1 \in \wsat\state{\mask \cup \mask_\f}{\rs' \mtimes \rs_\f, \isnew{\rsB \mtimes \rsB_\f}})
        \land {}\\
        &(\toval(\expr) = \bot \land 0 < m \Ra \red(\expr, \state) \land \All \expr_2, \state_2, \expr_\f. \expr,\state \step \expr_2,\state_2,\expr_\f \Ra {}\\
        &\qquad \Exists \rs_1, \rs_2, \isnew{\rsB_1, \rsB_2}. m \in \wsat\state{\mask \cup \mask_\f}{\rs_1 \mtimes \rs_2 \mtimes \rs_\f} \land  m \in \textdom{wp}(\mask, \expr_2, \pred)(\rs_1, \rsB_1) 
        \land {}&\\
        &\qquad\qquad ((\expr_\f = \bot \wedge \isnew{\rsB_2 \nequiv{m} \munit})
        \lor m \in \textdom{wp}(\top, \expr_\f, \Lam\any.\Lam\any.\mathbb{N})(\rs_2, \isnew{\rsB_2}))
        \land {}&\\
        &\qquad\qquad\isnew{\rsB \mstepN{m} \rsB_1 \mtimes \rsB_2}
    \end{aligned}} \\
  \textdom{wp}_\mask(\expr, \pred) &\eqdef \mathit{fix}(\textdom{pre-wp})(\mask, \expr, \pred)
\end{align*}
}
}

\typedsection{Interpretation of program logic assertions}{\Sem{\vctx \proves \term : \Prop} : \Sem{\vctx} \nfn \iProp}

$\knowInv\iname\prop$, $\ownGGhost\melt$ and $\ownPhys\state$ are just syntactic sugar for forms of $\ownM{-}$.
\begin{align*}
	\knowInv{\iname}{\prop} &\eqdef \ownM{[\iname \mapsto \aginj(\latertinj(\wIso(\prop)))], \munit, \munit} \\
	\ownGGhost{\melt} &\eqdef \ownM{\munit, \munit, \melt} \\
	\isnew{\ownGLGhost{\melt}} &\eqdef \isnew{\ownMl{\munit, \munit, \melt}} \\
	\ownPhys{\state} &\eqdef \ownM{\munit, \exinj(\state), \munit} \\
~\\
	\Sem{\vctx \proves \pvs[\mask_1][\mask_2] \prop : \Prop}_\gamma &\eqdef
	\textdom{pvs}^{\Sem{\vctx \proves \mask_2 : \textlog{InvMask}}_\gamma}_{\Sem{\vctx \proves \mask_1 : \textlog{InvMask}}_\gamma}(\Sem{\vctx \proves \prop : \Prop}_\gamma) \\
	\isnew{\Sem{\vctx \proves \psvs[\mask_1][\mask_2] \prop : \Prop}_\gamma} &\eqdef
	\isnew{\textdom{psvs}^{\Sem{\vctx \proves \mask_2 : \textlog{InvMask}}_\gamma}_{\Sem{\vctx \proves \mask_1 : \textlog{InvMask}}_\gamma}(\Sem{\vctx \proves \prop : \Prop}_\gamma)} \\
	\Sem{\vctx \proves \wpre{\expr}[\mask]{\Ret\var.\prop} : \Prop}_\gamma &\eqdef
	\textdom{wp}_{\Sem{\vctx \proves \mask : \textlog{InvMask}}_\gamma}(\Sem{\vctx \proves \expr : \textlog{Expr}}_\gamma, \Lam\val. \Sem{\vctx \proves \prop : \Prop}_{\gamma[\var\mapsto\val]})
\end{align*}

\paragraph{Remaining semantic domains, and interpretation of non-assertion terms.}

The remaining domains are interpreted as follows:
\[
\begin{array}[t]{@{}l@{\ }c@{\ }l@{}}
\Sem{\textlog{InvName}} &\eqdef& \Delta \mathbb{N}  \\
\Sem{\textlog{InvMask}} &\eqdef& \Delta \pset{\mathbb{N}} \\
\Sem{\textlog{M}} &\eqdef& F(\iProp)
\end{array}
\qquad\qquad
\begin{array}[t]{@{}l@{\ }c@{\ }l@{}}
\Sem{\textlog{Val}} &\eqdef& \Delta \textdom{Val} \\
\Sem{\textlog{Expr}} &\eqdef& \Delta \textdom{Expr} \\
\Sem{\textlog{State}} &\eqdef& \Delta \textdom{State} \\
\end{array}
\qquad\qquad
\begin{array}[t]{@{}l@{\ }c@{\ }l@{}}
\Sem{1} &\eqdef& \Delta \{ () \} \\
\Sem{\type \times \type'} &\eqdef& \Sem{\type} \times \Sem{\type} \\
\Sem{\type \to \type'} &\eqdef& \Sem{\type} \nfn \Sem{\type} \\
\end{array}
\]
For the remaining base types $\type$ defined by the signature $\Sig$, we pick an object $X_\type$ in $\COFEs$ and define
\[
\Sem{\type} \eqdef X_\type
\]
For each function symbol $\sigfn : \type_1, \dots, \type_n \to \type_{n+1} \in \SigFn$, we pick a function $\Sem{\sigfn} : \Sem{\type_1} \times \dots \times \Sem{\type_n} \nfn \Sem{\type_{n+1}}$.

\typedsection{Interpretation of non-propositional terms}{\Sem{\vctx \proves \term : \type} : \Sem{\vctx} \nfn \Sem{\type}}
\begin{align*}
	\Sem{\vctx \proves x : \type}_\gamma &\eqdef \gamma(x) \\
	\Sem{\vctx \proves \sigfn(\term_1, \dots, \term_n) : \type_{n+1}}_\gamma &\eqdef \Sem{\sigfn}(\Sem{\vctx \proves \term_1 : \type_1}_\gamma, \dots, \Sem{\vctx \proves \term_n : \type_n}_\gamma) \\
	\Sem{\vctx \proves \Lam \var:\type. \term : \type \to \type'}_\gamma &\eqdef
	\Lam \termB : \Sem{\type}. \Sem{\vctx, \var : \type \proves \term : \type}_{\gamma[\var \mapsto \termB]} \\
	\Sem{\vctx \proves \term(\termB) : \type'}_\gamma &\eqdef
	\Sem{\vctx \proves \term : \type \to \type'}_\gamma(\Sem{\vctx \proves \termB : \type}_\gamma) \\
	\Sem{\vctx \proves \MU \var:\type. \term : \type}_\gamma &\eqdef
	\mathit{fix}(\Lam \termB : \Sem{\type}. \Sem{\vctx, x : \type \proves \term : \type}_{\gamma[x \mapsto \termB]}) \\
  ~\\
	\Sem{\vctx \proves () : 1}_\gamma &\eqdef () \\
	\Sem{\vctx \proves (\term_1, \term_2) : \type_1 \times \type_2}_\gamma &\eqdef (\Sem{\vctx \proves \term_1 : \type_1}_\gamma, \Sem{\vctx \proves \term_2 : \type_2}_\gamma) \\
	\Sem{\vctx \proves \pi_i(\term) : \type_i}_\gamma &\eqdef \pi_i(\Sem{\vctx \proves \term : \type_1 \times \type_2}_\gamma) \\
  ~\\
	\Sem{\vctx \proves \munit : \textlog{M}}_\gamma &\eqdef \munit \\
	\Sem{\vctx \proves \mcore\melt : \textlog{M}}_\gamma &\eqdef \mcore{\Sem{\vctx \proves \melt : \textlog{M}}_\gamma} \\
	\Sem{\vctx \proves \melt \mtimes \meltB : \textlog{M}}_\gamma &\eqdef
	\Sem{\vctx \proves \melt : \textlog{M}}_\gamma \mtimes \Sem{\vctx \proves \meltB : \textlog{M}}_\gamma
\end{align*}

An environment $\vctx$ is interpreted as the set of
finite partial functions $\rho$, with $\dom(\rho) = \dom(\vctx)$ and
$\rho(x)\in\Sem{\vctx(x)}$.

\paragraph{Logical entailment.}
We can now define \emph{semantic} logical entailment.

\typedsection{Interpretation of entailment}{\Sem{\vctx \mid \pfctx \proves \prop} : \mProp}

\[
\Sem{\vctx \mid \pfctx \proves \prop} \eqdef
\begin{aligned}[t]
\MoveEqLeft
\forall n \in \mathbb{N}.\;
\forall \rs, \isnew{\rsB} \in \textdom{Res}.\; 
\forall \gamma \in \Sem{\vctx},\;
\\&
\bigl(\All \propB \in \pfctx. n \in \Sem{\vctx \proves \propB : \Prop}_\gamma(\rs, \isnew{\rsB})\bigr)
\Ra n \in \Sem{\vctx \proves \prop : \Prop}_\gamma(\rs, \isnew{\rsB})
\end{aligned}
\]

The soundness statement of the logic reads
\[ \vctx \mid \pfctx \proves \prop \Ra \Sem{\vctx \mid \pfctx \proves \prop} \]

 \endgroup\begingroup
\section{Derived proof rules and other constructions}

We will below abuse notation, using the \emph{term} meta-variables like $\val$ to range over (bound) \emph{variables} of the corresponding type.
We omit type annotations in binders and equality, when the type is clear from context.
We assume that the signature $\Sig$ embeds all the meta-level concepts we use, and their properties, into the logic.
(The Coq formalization is a \emph{shallow embedding} of the logic, so we have direct access to all meta-level notions within the logic anyways.)

\paragraph{Persistent/Relevant assertions.}
\begin{defn}
  An assertion $\prop$ is \emph{persistent} or \isnew{\emph{relevant}}
  if $\prop \proves \always\prop$.
\end{defn}

Of course, $\always\prop$ is persistent for any $\prop$.
Furthermore, by the proof rules given in \Sref{sec:proof-rules}, $t = t'$ as well as $\ownGGhost{\mcore\melt}$, $\mval(\melt)$ and $\knowInv\iname\prop$ are persistent.
Persistence is preserved by conjunction, disjunction, separating conjunction as well as universal and existential quantification.

In our proofs, we will implicitly add and remove $\always$ from persistent assertions as necessary.

\paragraph{\isnew{Affine assertions.}}
\begin{defn}
  An assertion $\prop$ is \emph{affine} if $\prop \proves \aff{\prop}$.
\end{defn}

In our proofs, we will implicitly add and remove $\aff{-}$ from persistent assertions as necessary.

\paragraph{Timeless assertions.}

We can show that the following additional closure properties hold for timeless assertions:

\begin{mathparpagebreakable}
  \infer
  {\vctx \proves \timeless{\prop} \and \vctx \proves \timeless{\propB}}
  {\vctx \proves \timeless{\prop \land \propB}}

  \infer
  {\vctx \proves \timeless{\prop} \and \vctx \proves \timeless{\propB}}
  {\vctx \proves \timeless{\prop \lor \propB}}

  \infer
  {\vctx \proves \timeless{\prop} \and \vctx \proves \timeless{\propB}}
  {\vctx \proves \timeless{\prop * \propB}}

  \infer
  {\vctx \proves \timeless{\prop}}
  {\vctx \proves \timeless{\always\prop}}
\end{mathparpagebreakable}

\isnew{Some similar rules apply for \atimeless{-}.}

\subsection{Program logic}

Hoare triples and view shifts are syntactic sugar for weakest (liberal) preconditions and primitive view shifts, respectively:
\[
\hoare{\prop}{\expr}{\Ret\val.\propB}[\mask] \eqdef \aff{\always{(\prop \wand \wpre{\expr}[\mask]{\lambda\Ret\val.\propB})}}
\qquad\qquad
\begin{aligned}
\prop \vs[\mask_1][\mask_2] \propB &\eqdef \always{(\prop \Ra \pvs[\mask_1][\mask_2] {\propB})} \\
\prop \svs[\mask_1][\mask_2] \propB &\eqdef \always{(\prop \Ra \psvs[\mask_1][\mask_2] {\propB})} \\
\prop \vsE[\mask_1][\mask_2] \propB &\eqdef \prop \vs[\mask_1][\mask_2] \propB \land \propB \vs[\mask2][\mask_1] \prop
\end{aligned}
\]
We write just one mask for a view shift when $\mask_1 = \mask_2$.
Clearly, all of these assertions are persistent.
The convention for omitted masks is similar to the base logic:
An omitted $\mask$ is $\top$ for Hoare triples and $\emptyset$ for view shifts.

\subsection{Derived Rules} \isnew{We omit many of the derived rules for Hoare triples, view shifts, and step shifts. The interested reader can consult the Coq mechanization.}

\begin{align*}
 (s, T) \stsstep (s', T') \eqdef{}& s \stsstep s' \land \STSL(s) \uplus T = \STSL(s') \uplus T' \\
 s \stsfstep{T} s' \eqdef{}& \Exists T_1, T_2. T_1 \disj \STSL(s) \cup T \land (s, T_1) \stsstep (s', T_2)
\end{align*}
  \begin{mathpar}
    \inferH{sts-alloc}{}
    {\aff{\pred(s)} \vs \Exists \iname, \gname.   \STSCtx^{\gname}(\STSS,\pred) * \STSSt^\gname{(s, \STST \setminus \STSL(s))}}

    \inferH{sts-st-split}{}
    {\STSSt^\gname(s, T_1 \uplus T_2) \Lra \STSSt^\gname(s, T_1) * \STSSt^\gname(s, T_2)}

    \inferH{sts-open}
    {\text{(additional side conditions omitted)} \and  \physatomic{\expr} \\ \All s. s_0 \stsftrans{T} s. \hoareHV[t]{\isnew{\aff{\pred(s)}} * P}{\expr}{\Ret \val. \Exists s', T'. (s, T) \ststrans (s', T') * \isnew{\aff{\pred(s')}} * Q}}
  {  \STSCtx^{\gname}(\STSS,\pred) \vdash \hoareHV[t]{\STSSt^\gname(s_0, T) * P}{\expr}{\Ret \val. \Exists s', T'. \STSSt^\gname(s', T') * Q}}
  \end{mathpar}

\subsection{Global functor, ghost ownership, and namespaces}

\isnew{For composability reasons, Iris makes it possible to combine a collection of CMRAs to get a larger ``global'' CMRA. This makes it possible to combine proofs that are done using a certain CMRA $\monoid$ with ones that are done doing another CMRA $\monoid'$. We do omit the descriptions of these mechanisms; the interested reader should consult the original Iris 2.0 documentation.}

 \endgroup\begingroup
\section{Refinement RA}

\isnew{We now briefly describe the RA used to model $\ownThreadNoArg$ assetions. This entire section is new.} 

Fix a source language $\Lang$. We say that a list of configurations,
$\cfglist$ is compatible with a list of thread indices, $\idxlist$,
written $\hcompat(\cfglist, \idxlist) L$, if:

\begin{mathpar}
  \infer{}{\hcompat([], [])}
\and
  \infer{}{\hcompat([\cfgvar], [])}
\and
  \infer{\hcompat(\cfglist \dplus [\cfgvar], \idxlist) \\
         \cfgvar \istep{i} \cfgvar'}
        {\hcompat(\cfglist \dplus [\cfgvar, \cfgvar'], \idxlist \dplus [i])}
\end{mathpar}

We define an RA $\refinem(\Lang)$:

\begin{align*}
  \refinem(\Lang) \eqdef{}& \textdom{View} \times \mathcal{P}^{\textrm{fin}}(\mathbb{N}) \times \textdom {List Config} \times \textdom{List Nat}  \\[-0.2em]
  \textnormal{where }& \textdom{View} \eqdef \set{\refmaster, \refsnapshot} \\
  \mval \eqdef{}& \setComp{(\view, \tidset, \cfglist, \idxlist) \in \refinem(\Lang)}{
    \begin{aligned}
      & (\cfglist = [] \land \idxlist = [] \land \tidset = \emptyset) \lor {} \\&
      (\exists \cfglist',\tpool,\state.\, 
      \cfglist = \cfglist',(\cfg{\tpool}{\state})
      \land{} \\ &\quad (\forall i\in \tidset, i < |\tpool|) \land {} \\&\quad
       \hcompat(\cfglist, \idxlist))
    \end{aligned}} \\
  \mcore{(\view, \tidset, \cfglist, \idxlist)} \eqdef{}& (\refsnapshot, \emptyset, \cfglist, \idxlist)
  \\
  \ (\view, \tidset, \cfglist, \idxlist) \mtimes (\view', \tidset', \cfglist', \idxlist')
  \eqdef{}& \left(\max(\view, \view'), \tidset \uplus \tidset', \max(\cfglist, \cfglist'), \max(\idxlist, \idxlist)' \right)
\end{align*}
where $\max(\view, \view')$ is $\refmaster$ if either $\view$ or $\view'$ is $\refmaster$, and the maximum of two lists is just the longer of the two. 
 We have an additional proviso stating that multiplication is only defined if all of the following hold:
 \begin{enumerate}
   \item Either $\view = \refsnapshot$ or $\view' = \refsnapshot$

   \item If $\view = \view' = \refsnapshot$ then $\exists \cfglist'', \idxlist''$
     such that either:
       \begin{enumerate}
         \item  $\cfglist = \cfglist' \dplus \cfglist''$, 
                $\idxlist = \idxlist' \dplus \idxlist''$
           and, $\forall i \in \tidset'$, $i \not\in \idxlist''$, or
         \item $\cfglist' = \cfglist \dplus \cfglist''$,
               $\idxlist' = \idxlist \dplus \idxlist''$
           and $\forall i \in \tidset$, $i \not\in \idxlist''$.
       \end{enumerate}

   \item If $\view = \refsnapshot$ and  $\view' = \refmaster$ then $\exists \cfglist'', \idxlist''$
     such that $\cfglist' = \cfglist \dplus \cfglist''$, $\idxlist' = \idxlist \dplus \idxlist''$
     and $\forall i \in \tidset'$, $i \not\in \idxlist''$.

   \item If $\view = \refmaster$ and  $\view' = \refsnapshot$ then $\exists \cfglist'', \idxlist''$
     such that $\cfglist = \cfglist' \dplus \cfglist''$, $\idxlist = \idxlist' \dplus \idxlist''$
     and, $\forall i \in \tidset$, $i \not\in \cfglist''$.
 \end{enumerate}

Intuitively, the second component of an element, $\tidset$ represents
a set of thread ID's ``owned'' by this element, and the $\cfglist$
and $\idxlist$ are some prefix of an execution of a source program.
Then, condition one for multiplication being defined says there can be at most one
master. Condition two says that, among two snapshots, one can be
longer than the other, but the longer one cannot contain any additional steps
by threads owned by the other. Condition three and four say that a snapshot must
be a prefix of master, subject to the constraint that the master cannot
contain any extra steps by threads owned by the snapshot.

The definition of $\mstep$ for this RA is somewhat complicated. Let us
motivate it in words -- a reader that wants details is advised to
consult the Coq formalization. Only snapshots may take
steps. Intuitively, the snapshot is obligated to step every thread it
controls (i.e. every index in $\tidset$) which can possibly take a
step. But, since the snapshot is only a partial prefix of the program
execution, other threads not controlled by this snapshot may have
taken steps. Thus, we first non-deterministically speculate some steps
performed by other threads and then perform all of the required steps
for the owned threads. Finally, since performing those steps may fork
off new threads, we add the thread ids of the new threads to
$\tidset$.

Of course, for extra flexibility, we are allowed to step each thread
more than once. We do not bake delay steps into this monoid. Rather,
if we want delay steps, we first transform $\Lang$ into a language
$\Lang'$ that has additional ``stutter'' steps for delay.

Finally, we can interpret $\ownThreadNoArg$ as:
\begin{align*}
    \ownThread{i}{\expr} \eqdef 
    \exists \tpool, \state, \cfglist, \idxlist.\,
    \ownGLGhost{(\refsnapshot, \set{i}, \cfglist \dplus [\cfg{\tpool}{\state}], \idxlist)}
                             \wedge (\tpool[i] = \expr)
\end{align*}

We also get an assertion $\ownSPhys{\state}$ for talking about the state of the assertion:
\begin{align*}
    \ownSPhys{\state} \eqdef 
    \exists \tpool, \cfglist, \idxlist.\,
    \ownGGhost{(\refmaster, \emptyset, \cfglist \dplus [\cfg{\tpool}{\state}], \idxlist)}
\end{align*}

We can use this ``large footprint'' assertion about source programs to derive smaller assertions, just as we do for the $\ownPhys{\state}$ assertion in Iris.

Finally, with some effort we can use this CMRA with the infinite adequacy theorem to get the refinement results stated in the body of the paper.
 \endgroup
}
\section{Case Studies}
\subsection{Session-Typed Language Translation}

In this appendix we develop the logical relation used for our compiler
correctness proof. The body of the paper has already explained the meaning of session types. But it
avoided the crucial issue of showing that this relation is
well-defined. What actually happens is we develop a state transition
system for sessions which is \emph{parameterized} by an interpretation
of types. As we'll see, we then take a fixed point which is defined using this
construction.

In our actual proof though, we do not work
with the Hoare triple versions of our rules. Instead, we work with a
more primitive form called weakest precondition, since it is much easier
to work with in a proof assistant. We first sketch the
connection between these and Hoare triples.

Then, we give this parameterized STS construction and state the
mechanized weakest precondition proof rules using this STS for the
message passing primitives.  This time around we will be more explicit
about the delay constants at first, but then show why it is OK to hide
them. Next, we describe some additional Iris features we'll need and
use them to define the logical relation (in particular, the $\later{}$
modality we mentioned without explanation in the main text).  We then
prove that our logical relation is sound (that is, it implies
refinement for closed terms).  Finally, we prove the fundamental
lemma, which shows that the logical relation holds between well-typed
expressions and their translation. This completes the proof.

\subsubsection{Weakest Precondition}

 In Iris, Hoare triples are not a primitive form.
 Instead, they are defined in terms of a \emph{weakest pre-condition} primitive as follows:
 \[ \hoare\prop\expr{\Ret\var.\propB} \eqdef \aff{\always(\prop \wand \wpre\expr{\Ret\var.\propB}} ) \]
 The assertion $\wpre\expr{\Ret\var.\propB}$ expresses ownership of resources that is strong enough to justify the safe execution of $\expr$, such that when $\expr$ terminates, $\propB$ holds.
 This is an \emph{ephemeral} assertion in the sense that, like \emph{e.g.} $l \mapsto v$, it can be used at most once and is the invalidated.

 The magic wand $\prop \wand \wpre\expr{\Ret\var.\propB}$ says that if we are given resources satisfying $\prop$, we have enough resources to satisfy the weakest pre-condition.
 The wand $\wand$ works like an implication, but is right adjoint to the separating conjunction $*$ instead of plain conjunction $\land$.

 Notice that in the example in \secref{sec:iris}, when we summarized
 the current state of the proof, we often said things like ``Our
 current resources are $\prop$ and we have to verify the following
 code $\expr$ (with post-condition $\propB$)''.  This exactly
 corresponds to a proof-state where our current logical context is
 $\prop$, and the goal is
 $\wpre\expr\propB$: \[ \prop \vdash \wpre\expr\propB \] 

It should not be surprising that in carrying out Iris proofs in Coq, we
 generally work with weakest pre-conditions.

 Next, we wrap the \emph{always} modality $\always$ around the wand.
 This is to enforce that proofs of Hoare triples
 be \emph{persistent}, \ie the modality makes sure that a Hoare
 triple, once established, will remain valid throughout the remaining
 verification.  By default, assertions in Iris are ephemeral and hence
 can be used only once.

Finally, there is an affine modality wrapping the whole thing -- there
 should be no other $\ownThreadNoArg$ hidden within such a Hoare
 triple. We only get to use ones in $\prop$.
 
\subsubsection{The Session STS}

The states, tokens, and transitions we gave in the main text
did not mention the logical relation, so there  is no concern
about circularity in their definition. The only problem was
the definition of the state interpretation, which was given as:

\begin{align*}
&\sessionInv{\styp}{\chanloc}{n_\scl, n_\scr, \heaploc_\scl, \heaploc_\scr} \eqdef 
 \exists L_{\scc}, L_{\sch}.\, 
   \nonumber \\
&\qquad \Big(\chanloc \smapsto (L_\scc, []) * \linklist{L_\sch}{\heaploc_\scl}{\heaploc_\scr} * {}
    \\ 
&\qquad ~~~(\interpListRel{L_\sch}{L_\scc}{}{S^{n_{\scl}}}) * n_{\scl} + \llength{L_\scc} = n_\scr \Big) \lor \dots
  \end{align*}

This implicitly relied on the definition of the logical relation via the lifting of the relation to lists of values:

\begin{mathpar}
\infer{}{\interpListRel{[]}{[]}{}{\styp}}

\infer{\later{(\interpValRel{\val}{\Val}{}{\typ})} *
       \interpListRel{L_\sch}{L_\scc}{}{\styp}}
      {\interpListRel{\val L_\sch }{\Val L_\scc}{}{\recvtyp{\typ}{\styp}}}
\end{mathpar}

What we now do is have the lifting of lists take a \emph{parameter}, $\Theta$, which
is a pre-existing interpretation of types (\ie a map from types to a relation between
values of the target and source):

\begin{mathpar}
\infer{}{\interpListRel{[]}{[]}{\typInterp}{\styp}}

\infer{\typInterp(\typ)(\val, \Val) *
       \interpListRel{L_\sch}{L_\scc}{\typInterp}{\styp}}
      {\interpListRel{\val L_\sch }{\Val L_\scc}{\typInterp}{\recvtyp{\typ}{\styp}}}
\end{mathpar}

Then, $\sessionInvNoArg$ will also take this parameter and pass it to the list relation:

\begin{align*}
&\sessionInvPre{\typInterp}{\styp}{\chanloc}{n_\scl, n_\scr, \heaploc_\scl, \heaploc_\scr} \eqdef 
 \exists L_{\scc}, L_{\sch}.\,  \\
&\quad \Big(\chanloc \smapsto (L_\scc, []) *
 * \linklist{L_\sch}{\heaploc_\scl}{\heaploc_\scr} * {} \\
& \quad \quad \interpListRel{L_\sch}{L_\scc}{\typInterp}{S^{n_{\scl}}}
 * n_{\scl} + \llength{L_\scc} = n_\scr\Big) \vee \dots
  \end{align*}

We write $\uparrow (n_\scl, -, l, -)$ for the set of all states that have first component $n_\scl$ and third component $\heaploc$,
and symmetrically for the right counts and heap pointer.

Finally, we can define an assertion $\SessionProt{\typInterp}{\heaploc}{\locside{\chanloc}{\side}}{\styp}$ which
asserts  (1) existence of an STS governing $\heaploc$ and $\chanloc$, (2) comes equpped with the tokens
needed for manipulating the end-point indicated by $\side$, and (3) ensures that the current type of the end-point for $\side$ has type $\styp$:
\begin{align*}
& \SessionProt{\typInterp}{\heaploc}{\locside{\chanloc}{\side}}{\styp} \eqdef \\
& \quad
  \left(\side = \lside \Ra
  \exists \styp_0, n_\scl, \gamma,  \STSCtx^\gname(\STSS, \sessionInvPre{\typInterp}{\styp}{\chanloc}{-})\right. \\ 
& \quad\quad * \left.\STSSt^\gname(\uparrow (n_\scl, -, l, -), \{\leftTok{n} \ | \ n > n_\scl) * \styp_0^n = \styp\right) \\ 
& \quad \vee
  \left(\side = \rside \Ra
  \exists \styp_0, n_\scr, \gamma,  \STSCtx^\gname(\STSS, \sessionInvPre{\typInterp}{\styp}{\chanloc}{-})\right. \\ 
& \quad\quad\quad * \left.\STSSt^\gname(\uparrow (-, n_\scr, -, l), \{\rightTok{n} \ | \ n > n_\scl) * \styp_0^n = \dual{\styp}\right) \\
\end{align*}

Of course, we've now only deferred the problem -- eventually we do
need to plug in the desired logical relation for $\typInterp$. For
now, the key is that we can still prove things about channels which
use this invariant; its just that the proof rules are also parameterized by $\typInterp$.

To show these rules, we need to be a bit more precise about delay
constants for a moment.  The triple about the receive primitive
mentioned in the body of the paper omitted delay constants, and in
general we have not been explicit about such constants in the main
text. Let us first be a bit more precise about this, so that we can
justify why it is safe to ignore them subsequently.

As we said when explaing the $\ownThreadNoArg$ assertion when we do a
proof we need to fix a number $\dmax$ that will be an upper bound
throughout for all delay constants.  In the Coq development,
we somewhat profligately proved things involving the above
STS assuming that this upper bound was at least $\geq 100$. So for
concreteness, let us just fix this $D$ now to be 100. Then we have
proved the following rules (written in the weakest precondition style
instead of Hoare triples):

\begin{mathpar}
\infer{1 < \delay \leq \dmax \\ 
        \delay' \leq \dmax}
{\ownThreadD{i}{\lctx[\newch]}{\delay} \proves
  \wpreL{\heapnewch}
       {
        \begin{aligned}[t]&
          (\heaploc, \heaploc). \exists \chanloc.\,
        \ownThreadD{i}{\lctx[(\locside{\chanloc}{\lside}, \locside{\chanloc}{\rside})]}{\delay'} 
        \\ & 
        * \SessionProt{\typInterp}{\heaploc}{\locside{\chanloc}{\lside}}{\styp}
        * \SessionProt{\typInterp}{\heaploc}{\locside{\chanloc}{\rside}}{\dual{\styp}}
        \end{aligned}
       }
}

\infer{1 < \delay \leq \dmax - 2 \\
          \delay' \leq \dmax - 2 \\
          \forall \val, \Val.\, \typInterp(\typ)(\val, \Val) \vdash \later{\prop(\val, \Val)}}
{
\begin{array}[t]{l}
\ownThreadD{i}{\lctx[\recv{\locside{\chanloc}{\side}}]}{\delay}
  * \SessionProt{\typInterp}{\heaploc}{\locside{\chanloc}{\side}}{\recvtyp{\tau}{\styp}} 
  \proves \arcr
\quad  \wpre{\heaprecv\ \heaploc}
       {\Ret(\heaploc', \val). \exists \Val.\,
        \ownThreadD{i}{\lctx[(\locside{\chanloc}{\side}, \Val)]}{\delay'}
         * \prop(\val, \Val)
         * \SessionProt{\typInterp}{\heaploc}{\locside{\chanloc}{\side}}{\styp}}
\end{array}
}

\infer{4 < \delay \leq \dmax \\
          d' \leq \dmax - 1}
{ 
\begin{array}[t]{l}
\ownThreadD{i}{\lctx[\send{\locside{\chanloc}{\side}}{\Val}]}{\delay}
         * \typInterp(\typ)(\val, \Val)
          * \SessionProt{\typInterp}{\heaploc}{\locside{\chanloc}{\side}}{\sendtyp{\tau}{\styp}}
  \proves \arcr
\quad
  \wpre{\heapsend\ \heaploc\ \val}
       {\Ret\heaploc'.
        \ownThreadD{i}{\lctx[\locside{\chanloc}{\side}]}{\delay'}
         * \SessionProt{\typInterp}{\heaploc'}{\locside{\chanloc}{\side}}{\styp}}
\end{array}
}

\end{mathpar}

In addition, to the explicit delays, we also have this business about
$\later{\prop}$ in the second rule, which we can ignore for now.  Note
that if $4 < k \leq \dmax - 2$, then $k$ satisfies all of the side
conditions placed on the delay constants that we start with in each
rule (\ie $\delay$). Moreover, such a $\delay$ satisfies all of the
constraints placed on the \emph{ending} delay constant (\ie
$\delay'$). That means if we start with such a $k$, at the end we can
continue with the same $k$. In particular, a choice of $k=50$ works.
It's also the case that $0$ satisfies the conditions on the $\delay'$
above; so whatever delay we start with, we can always end up with $0$,
if we wish.

Putting this together, we will define $\ownThread{i}{\Expr}$ (that is, \emph{without} the delay constant) as:

\begin{mathpar}
\ownThread{i}{\Expr} \eqdef \ownThreadD{i}{\Expr}{50}

\ownThread{i}{\Val} \eqdef \ownThreadD{i}{\Val}{0}
\end{mathpar}

That is, when we are working with an expression, we assume an implicit
delay constant of 50; when the source thread is a value, it is $0$'d
out.  Rewriting the rules above with this new form, we can ignore the delay constants:

\begin{mathpar}
\infer{}
{\ownThread{i}{\lctx[\newch]} \proves
  \wpreL{\heapnewch}
       {
        \begin{aligned}[t]&
          (\heaploc, \heaploc). \exists \chanloc.\,
        \ownThread{i}{\lctx[(\locside{\chanloc}{\lside}, \locside{\chanloc}{\rside})]} 
        \\ & 
        * \SessionProt{\typInterp}{\heaploc}{\locside{\chanloc}{\lside}}{\styp}
        * \SessionProt{\typInterp}{\heaploc}{\locside{\chanloc}{\rside}}{\dual{\styp}}
        \end{aligned}
       }
}

\infer{\forall \val, \Val.\, \typInterp(\typ)(\val, \Val) \vdash \later{\prop(\val, \Val)}}
{
\begin{array}[t]{l}
\ownThread{i}{\lctx[\recv{\locside{\chanloc}{\side}}]}
  * \SessionProt{\typInterp}{\heaploc}{\locside{\chanloc}{\side}}{\recvtyp{\tau}{\styp}} 
  \proves \arcr
\quad  \wpre{\heaprecv\ \heaploc}
       {\Ret(\heaploc', \val). \exists \Val.\,
        \ownThread{i}{\lctx[(\locside{\chanloc}{\side}, \Val)]}
         * \prop(\val, \Val)
         * \SessionProt{\typInterp}{\heaploc}{\locside{\chanloc}{\side}}{\styp}}
\end{array}
}

\infer{}
{
\begin{array}[t]{l}
\ownThread{i}{\lctx[\send{\locside{\chanloc}{\side}}{\Val}]}
         * \typInterp(\typ)(\val, \Val)
          * \SessionProt{\typInterp}{\heaploc}{\locside{\chanloc}{\side}}{\sendtyp{\tau}{\styp}}
  \proves \arcr
\quad
  \wpre{\heapsend\ \heaploc\ \val}
       {\Ret\heaploc'.
        \ownThread{i}{\lctx[\locside{\chanloc}{\side}]}
         * \SessionProt{\typInterp}{\heaploc'}{\locside{\chanloc}{\side}}{\styp}}
\end{array}
}
\end{mathpar}

To see that these implicit rules follow from the explicit delay constant form,
observe that if (1) $\lctx[\Expr]$ is a value, $\Expr$ must in fact be
a value, and (2) if $\lctx[\Val]$ is a value, then for all $\Val'$,
$\lctx[\Val']$ is a value. This means for each of the above rules, we
first determine whether the evaluation $\lctx$ will be a value after
we substitute the return values in the post-condition; if it is, we
apply the corresponding original rule taking $d' = 0$. If not, we take
$d' = 50$.

The reason we chose the ``implicit'' delay for values to be $0$ is so
that $\ownThread{i}{\Val} \vdash \stopped$, and so that our refinement
rule can also be written as:

\begin{mathpar}
  \infer
  {\hoare{\ownThread{i}{\Expr}}{\expr}{\Ret \var. \Exists\Val. \ownThread{i}{\Val} * \var \valobsrel \Val}[]}
  {\expr \refines \Expr}
\end{mathpar}

\subsubsection{Logical Relation}

Now that we have the STS defined in this parameterized way, we can
follow up by defining the logical relation in a non-circular way.  We
follow the standard set-up of defining an interpretation of types that
relates values, then lifting this to a relation on closed expressions,
and then using that to define a relation on open expressions. In
general, given a function $\typInterp$ which maps types to Iris
relations on values of the target and source, we write
$\interpValRel{\val}{\Val}{\typInterp}{\tau}$ to say that $\val$ and
$\Val$ are related at the interpretation of $\tau$ under $\typInterp$.

\paragraph{Lifting relations on values to expressions.}
The expression lifting operation is generic with respect to the final interpretation on values, 
so let us first define that. Given $\typInterp$, we define its lifting
between expressions of the target and source as:

\[ \interpExprRel{\expr}{\Expr}{\typInterp}{\tau} \eqdef
    \forall i, \lctx.\, \aff{\ownThread{i}{\lctx[\Expr]} \wand
          \wpre{\expr}{\val.\, \exists \Val.\, \aff{\interpValRel{\val}{\Val}{\typInterp}{\tau}} * \ownThread{i}{\lctx[\Val]}}}
\]

This is an assertion saying that for any choice of thread $i$ and
evaluation context $\lctx$, if we are given
$\ownThread{i}{\lctx[\Expr]}$ we can prove a weakest-precondition for
$\expr$ in which the executions end in related values according to
$\typInterp$ at the appropriate type. We wrap this wand in an affine
modality to ensure there are no implicit other threads owned. Also, in
the post-condition, the proof that the values are related must be
affine for a similar reason.

Let us note that the standard way such lifting relations are defined
(in for instance, the sequential case) is more or less to say that
$\expr$ and $\Expr$ are related if they evaluate to related
values. Indeed, that's essentially what we are saying here, but using
the weakest precondition and all of the other machinery we have built
up so that we're implicitly talking about \emph{concurrent} executions
and fairness.

Note that so long as $\interpValRel{\val}{\Val}{\typInterp}{\tau}$ is affine,
$\interpValRel{\val}{\Val}{\typInterp}{\tau} \proves
 \interpExprRel{\val}{\Val}{\typInterp}{\tau}$

\paragraph{Logical relation on values.}

Iris has two important features which we did not explain in the main
text.  First, there is the modality $\later \prop$ (called ``later''),
which we have asked the reader to ignore a few times.  This describes
resources which satisfy $\prop$ at \emph{one lower step-index}. So,
$\prop \proves \later \prop$, but not conversely. Second, we can
construct fixed-points of arbitrary recursive definition of a
predicate, so long as all recursive occurences are under the ``later''
modality.  Such recursive definitions are called ``guarded''.

We will define our interpretation of types by taking a fixed point of
a guarded recursive definition. First, we define the function $F$
which we are going to take the fixed-point of. Given an interpretation
of types, $\typInterp$ (that is, a map from types to  an Iris relation
on values), we define $F(\typInterp)$ as yet another interpretation of
types, defined by structural induction on types:

\begin{mathpar}
\infer{}{\interpValRel{n}{n}{F(\typInterp)}{\inttyp}}

\infer{}{\interpValRel{\vunit}{\vunit}{F(\typInterp)}{\unit}}

\infer
{(\interpValRel{\val_1}{\Val_1}{F(\typInterp)}{\tau_1}) *
(\interpValRel{\val_2}{\Val_2}{F(\typInterp)}{\tau_2})}
{\interpValRel{(\val_1, \val_2)}{(\Val_1, \Val_2)}{F(\typInterp)}{\tau_1 \otensor \tau_2}}

\infer
{\forall \val', \Val'.\, \aff{\interpValRel{\val'}{\Val'}{F(\typInterp)}{\tau'}
  \wand \interpExprRel{(\lambda x.\, \expr)\ \val'}{(\lambda x.\, \Expr) \Val'}{F(\typInterp)}{\tau}}}
{\interpValRel{\lambda x.\, \expr}{\lambda x.\, \Expr}{F(\typInterp)}{\tau' \lolli \tau}}

\infer
{\SessionProt{\later \typInterp}{\heaploc}{\locsidevar}{S}}
{\interpValRel{\heaploc}{\locsidevar}{F(\typInterp)}{S}}
\end{mathpar}

The first three rules are straight-forward. The third says that two
values are related at $\tau' \lolli \tau$, if, when we apply them to
values related at the intepretation of $\tau'$, the applications are
related at $\tau$. Since the applications are expressions and not
values, we use the lifting of the type interpretation.\footnote{A
  lifting of an interpretation of types is defined as the point-wise
  lifting at each type.} By $\later{\typInterp}$ we mean the type
interpretation which applies a later after applying $\typInterp$ to
its arguments.  Note that each occurence of $F(\typInterp)$ in a
premise is at a smaller type, so $F(\typInterp)$ is
well-defined. Second, the only occurence of $\typInterp$ is in the
rule for session types, where it occurs guarded by a later
modality. Thus, the fixed point of $F$ exists -- call it
$\thefp$. Then, our logical relation at values is this
$\interpValRel{-}{-}{\thefp}{\typ}$ (which in the text we wrote
without $\thefp$ annotation)

The fact that $\thefp$ is a fixed point of $F$ means that

\[ \interpValRel{\val}{\Val}{\thefp}{\tau} \provesIff 
   \interpValRel{\val}{\Val}{F(\thefp)}{\tau} \]
   
Note that $\interpValRel{\val}{\Val}{\thefp}{\typ}$ is affine
for all $\val$, $\Val$, and $\typ$, and so is the lifting of $\thefp$ to expressions.

Now we can explain the additional premise in the rule for receive. By
default when we have
$\interpValRel{\heaploc}{\locsidevar}{\thefp}{\recvtyp{\typ}{\styp}}$,
this would unfold to
$\SessionProt{\later{\thefp}}{\heaploc}{\locsidevar}{\recv{\typ}{\styp}}$
-- looking at the definition of state interpretation, we might be
afraid that the returned message values $\val$ and $\Val$, might
merely be related at $\later$ of $\thefp(\tau)$ in the postcondition, not at $\thefp(\tau)$. The key is that the premise above the line lets
us ``strip'' off such a later: take $\prop$ in that premise to be
$\interpValRel{-}{-}{\thefp}{\tau}$. Then, the premise of the rule
holds when $\typInterp$ is $\later \thefp$, because $\later{(\interpValRel{\val}{\Val}{\thefp}{\tau})} \vdash \later{(\interpValRel{\val}{\Val}{\thefp}{\tau})}$, and so in the post
condition we get out $\interpValRel{\val}{\Val}{\thefp}{\tau}$ without
the later.  This justifies the rule we presented in the paper (and
which we use below in the proof of the fundamental lemma).

We now lift this relation to one on open expressions in a given
context. Given a map $\substh$ from a finite domain of variables to
closed expressions in the target language, we write $[\substh]\expr$
for the result of simultaneously substituting each variable in the
domain of $\substh$ for its image under $\substh$. $[\substc]\Expr$ is
the analogous thing for source expressions.

   We define the relation $\interpOpenRel{\Gamma}{\expr}{\Expr}{\typ}$,
   where $\Gamma$ is a typing context, by:
\begin{align*}
&\interpOpenRel{\Gamma}{\expr}{\Expr}{\typ} \eqdef \forall \substh, \substc.\,
(\dom(\substh) = \dom(\substc) = \Gamma \wedge 
\fv{\Expr} = \fv{\expr} \subseteq \dom(\Gamma)) \rightarrow \\
& \quad
\left(\underset{x \in \Gamma}{\bigast} \interpExprRel{\substh(x)}{\substc(x)}{\thefp}{\Gamma(x)}\right)
  \vdash \interpExprRel{[\substh]\expr}{[\substc]\Expr}{\thefp}{\typ}
\end{align*}

where we take $\bigast$ over an empty set to be $\aff{\TRUE}$. We now show that the logical relation is \emph{sound}:

\begin{lem}\label{lem:logrel-soundness}
If $\interpOpenRel{\emptyset}{\expr}{\Expr}{\typ}$, then
$\expr \refines \Expr$.
\end{lem}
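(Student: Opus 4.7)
The plan is to unfold the definitions and reduce the claim to an application of \ruleref{Ht-refine}. The key observation is that once we instantiate the open logical relation with the empty context, what we get is essentially the Hoare triple required by \ruleref{Ht-refine}, modulo weakening the post-condition from the value relation $\interpValRel{\val}{\Val}{\thefp}{\typ}$ to the observational equivalence $\val \valobsrel \Val$.

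First, I would unfold $\interpOpenRel{\emptyset}{\expr}{\Expr}{\typ}$. With $\Gamma = \emptyset$, the substitutions $\substh$ and $\substc$ are both empty, the side-condition on free variables is vacuous (closed programs), and the big separating conjunction over $\Gamma$ collapses to $\aff{\TRUE}$. So the hypothesis gives us $\aff{\TRUE} \vdash \interpExprRel{\expr}{\Expr}{\thefp}{\typ}$, and further unfolding the expression relation at the empty evaluation context $\lctx = []$ yields the Hoare triple
\[
\hoare{\ownThread{i}{\Expr}}{\expr}{\Ret\val.\, \Exists \Val.\, \aff{\interpValRel{\val}{\Val}{\thefp}{\typ}} * \ownThread{i}{\Val}}.
\]

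Next, I would prove the key auxiliary lemma: for all types $\typ$ and values $\val, \Val$, one has $\interpValRel{\val}{\Val}{\thefp}{\typ} \vdash \val \valobsrel \Val$. Since $\thefp$ is a fixed point of $F$, I can unfold $\interpValRel{\val}{\Val}{\thefp}{\typ}$ into the corresponding $F(\thefp)$ clause and proceed by induction on $\typ$. At $\inttyp$ and $\unit$ the relation forces $\val = \Val$ of the right shape, matching the first two clauses of $\valobsrel$; at $\typ_1 \otensor \typ_2$ the conclusion follows by the two inductive hypotheses on the components; at $\typ' \lolli \typ$ the relation forces both sides to be lambdas, which automatically satisfy $\lambda x.\expr \valobsrel \lambda x.\Expr$; and at a session type $\styp$ the relation forces the target value to be a heap location $\heaploc$ and the source value to be an end-point $\locsidevar$, again matching $\valobsrel$ directly. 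Note that this lemma requires no recursion on $\thefp$ itself (the guarded occurrence at session types is discarded), so no use of L\"ob induction is needed.

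Using this lemma together with \ruleref{Ht-csq}, I can weaken the post-condition of the Hoare triple above to
\[
\Ret\val.\, \Exists \Val.\, \ownThread{i}{\Val} * \val \valobsrel \Val,
\]
which is precisely the premise of \ruleref{Ht-refine}. Applying that rule concludes $\expr \refines \Expr$. The main ``obstacle'' is really only conceptual: making sure the value-relation-implies-observational-equivalence lemma goes through without tripping over the $\later$ modality in the session-type case, but since $\valobsrel$ at session types makes no demand on the identity of the end-point beyond its shape, the $\later$-guarded recursive occurrence of $\thefp$ simply does not need to be unfolded.
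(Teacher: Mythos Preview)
Your proposal is correct and takes essentially the same approach as the paper. The paper organizes the argument as an induction on $\typ$ at the top level, redoing the unfolding and the reduction to \ruleref{Ht-refine} in each case before checking that the value relation at that type implies $\valobsrel$; you instead factor out the reduction to the Hoare triple once and isolate the value-relation-implies-$\valobsrel$ step as a standalone lemma proved by induction on $\typ$, which is a cleaner decomposition of the same argument.
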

\begin{proof}
By induction on $\typ$:
\begin{itemize}
\item $\tau = \inttyp$: By assumption, we have that $\emp \vdash \interpExprRel{\expr}{\Expr}{\thefp}{\inttyp}$. 

By the refinement rule, it suffices to prove:

\[ \emp \vdash {\hoare{\ownThread{i}{\Expr}}{\expr}{\Ret \var. \Exists\Val. \ownThread{i}{\Val} * \var \valobsrel \Val}[]} \]

Since $\emp$ is affine and persistent, we can clear the implicit modalities in the Hoare triple, and move the pre-condition to the context by using the wand intro rule, so that we need to show:

\[ \ownThread{i}{\Expr} \vdash \wpre{\expr}{\Ret \var. \Exists\Val. \ownThread{i}{\Val} * \var \valobsrel \Val} \]

By our assumption, we can rewrite this as:

\[ \ownThread{i}{\Expr} * \interpExprRel{\expr}{\Expr}{\thefp}{\inttyp} \vdash \wpre{\expr}{\Ret \var. \Exists\Val. \ownThread{i}{\Val} * \var \valobsrel \Val} \]

Unfolding the definition of the lifting of $\thefp$, instantiating it with $i$ and $[]$, clearing the afine modality, and then eliminating the resulting wand with $\ownThread{i}{\expr}$, we need

\[ \wpre{\expr}{\Ret \var. \Exists\Val. \ownThread{i}{\Val} * \Exists n.\, \var = \Val = n} \vdash \wpre{\expr}{\Ret \var. \Exists\Val. \ownThread{i}{\Val} * \var \valobsrel \Val} \]

By the rule of consequence of weakest precondition (\ie, weakest precondition is covariant in the post condition), it suffices to show for aribtrary $\var$,

\[
\Exists\Val.\, \ownThread{i}{\Val} * \exists n.\, \var = \Val = n \vdash
\Exists\Val.\, \ownThread{i}{\Val} * \var \valobsrel \Val
\]

Which follows from the definition of $\valobsrel$.

\end{itemize}

The other cases are the same: in each case, we take the assumption, apply the refinement proof rule; then the ownership we get in the pre-condition of the refinement triple is the antecedent of the wand in the definition of the lifting of the type relation to expressions; eliminating that gives a weakest precondition, where the post condition implies that values are logically related; our logical relation implies $\valobsrel$, so we're done.

\end{proof}

We now prove what is called the fundamental theorem:
\begin{lem}\label{lem:fundamental} If $\Gamma \vdash \Expr : \typ$ then $\interpOpenRel{\Gamma}{\compile{\Expr}}{\Expr}{\typ}$
\end{lem}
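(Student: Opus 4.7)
The plan is to proceed by induction on the typing derivation $\Gamma \vdash \Expr : \typ$. For each typing rule, we unfold the definition of $\interpOpenRel{\Gamma}{\compile\Expr}{\Expr}{\typ}$, fix closing substitutions $\substh$ and $\substc$ mapping each $x \in \dom(\Gamma)$ to logically related closed target/source expressions, and establish $\interpExprRel{[\substh]\compile\Expr}{[\substc]\Expr}{\thefp}{\typ}$. The latter is itself a weakest precondition obligation after assuming $\ownThread{i}{\lctx[[\substc]\Expr]}$ for arbitrary $i, \lctx$, so in practice each case amounts to giving a symbolic execution of the target code while taking matching step shifts on the source.

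The routine cases are variables (read off the context), integers/$\vunit$ (trivial from the rules for base types), $\lambda$-introduction (unfold the function clause of $F(\thefp)$ and appeal to the IH under the extended context), application and pair intro/elim (bind the subcomputations with \ruleref{wp-bind} and chain the IHs; the affine-context splitting in the typing rules lines up with the separating conjunction splits in the lifted relation), and \ruleref{Fork} (use \ruleref{ml-fork} together with \ruleref{src-fork}; the forked thread's post-condition $\stopped$ follows because the IH for $\Expr_\f$ yields $\ownThread{j}{\Val_\f}$ at a value, which discharges $\stopped$ via \ruleref{src-stopped}, using the convention $\ownThread{j}{\Val}\eqdef\ownThreadD{j}{\Val}{0}$).

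The interesting cases are the message-passing primitives, which is where the proof really uses the parameterized STS and the mechanized triples for $\heapnewch$, $\heapsend$, $\heaprecv$ stated earlier in this appendix. For \ruleref{NewChTyp}, apply the $\heapnewch$ triple: the source step is justified by \ruleref{src-newch}, and the two returned $\SessionProt{\later\thefp}{\heaploc}{\chanloc_\lside}{\styp}$ and $\SessionProt{\later\thefp}{\heaploc}{\chanloc_\rside}{\dual\styp}$ assertions are exactly the unfolding of $F(\thefp)(\styp)$ and $F(\thefp)(\dual\styp)$, so by the fixed-point equation they are the value relation at $\styp\otensor\dual\styp$. For \ruleref{Send}, the IHs give us (after binding subcomputations) $\interpValRel{l}{\locsidevar}{\thefp}{\sendtyp\typ\styp}$ and $\interpValRel{\val}{\Val}{\thefp}{\typ}$; unfolding the first to a $\SessionProt{\later\thefp}{-}{-}{\sendtyp\typ\styp}$ lets us apply the $\heapsend$ triple, whose side-condition on the message value is met because the value relation is affine and so can be ``weakened under later'' to fit the type-interpretation shape the triple expects.

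The main obstacle is \ruleref{Recv}, where the $\heaprecv$ triple has the premise $\typInterp(\typ)(\val,\Val) \vdash \later\prop(\val,\Val)$, whereas the post-condition we want for the conclusion of the case is $\interpValRel{\val}{\Val}{\thefp}{\typ}$ \emph{without} any later. As hinted in the text, we instantiate $\prop$ to $\interpValRel{-}{-}{\thefp}{\typ}$ and $\typInterp$ to $\later\thefp$: the premise then becomes $\later(\interpValRel{\val}{\Val}{\thefp}{\typ}) \vdash \later(\interpValRel{\val}{\Val}{\thefp}{\typ})$, which is trivial, and the evolved end-point is returned at $\SessionProt{\later\thefp}{\heaploc'}{\locsidevar}{\styp}$, \ie $F(\thefp)(\styp)$, which by the fixed-point equation is $\thefp(\styp)$. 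Thus the tensored post-condition matches $\thefp(\styp \otensor \typ)$. With these cases in hand, closing each one under \ruleref{wp-bind} and the appropriate affine/separation bookkeeping completes the induction.
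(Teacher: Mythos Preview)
Your proposal is correct and follows essentially the same approach as the paper: both proceed by induction on the typing derivation, handle the structural rules by routine unfolding plus \ruleref{wp-bind} and affine context-splitting, and dispatch the channel cases by appealing to the mechanized weakest-precondition triples for $\heapnewch$, $\heapsend$, and $\heaprecv$ with $\typInterp$ instantiated to $\later\thefp$. Your treatment of the \ruleref{Recv} case---instantiating $\prop \eqdef \interpValRel{-}{-}{\thefp}{\typ}$ so that the premise $(\later\thefp)(\typ)(\val,\Val) \vdash \later\prop(\val,\Val)$ becomes trivial---is exactly the trick the paper uses, and your remark for \ruleref{Send} about weakening the message value under $\later$ makes explicit a step the paper leaves implicit.
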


\begin{proof}
The proof is by induction on the derivation of $\Gamma \vdash \Expr : \typ$.

\begin{itemize}

\item Case Var: $\compile{x} = x$, so given closing subsitutions $\substh$ and $\substc$, we need to show

\[
\left(\underset{x \in \Gamma}{\bigast} \interpExprRel{\substh(x)}{\substc(x)}{\thefp}{\Gamma(x)}\right)
  \vdash \interpExprRel{[\substh]x}{[\substc]x}{\thefp}{\typ}
\]
  
Then $[\substh]x = \substh(x)$ and similarly for $\substc$, so this becomes

\[
\left(\underset{x \in \Gamma}{\bigast} \interpExprRel{\substh(x)}{\substc(x)}{\thefp}{\Gamma(x)}\right)
  \vdash \interpExprRel{\substh(x)}{\substc(x)}{\thefp}{\typ}
\]

Note that $\Gamma(x) = \typ$, so, manipulating $\bigast$, this just becomes

\[
\interpExprRel{\substh(x)}{\substc(x)}{\thefp}{\typ} * \left(\underset{x \in \Gamma \setminus \{x\}}{\bigast} \interpExprRel{\substh(x)}{\substc(x)}{\thefp}{\Gamma(x)}\right)
  \vdash \interpExprRel{\substh(x)}{\substc(x)}{\thefp}{\typ}
\]

Since the interpretation is affine, we can throw away 
$\left(\underset{x \in \Gamma \setminus \{x\}}{\bigast} \interpExprRel{\substh(x)}{\substc(x)}{\thefp}{\Gamma(x)}\right)$, and we are done.

\item Case Int: Again, $\compile{n} = n$, and these are closed, so we just have to show, given substitutions:

\[
\left(\underset{x \in \Gamma}{\bigast} \interpExprRel{\substh(x)}{\substc(x)}{\thefp}{\Gamma(x)}\right)
  \vdash \interpExprRel{n}{n}{\thefp}{\typ}
\]

We can throw away the context, since it is affine and not needed. Then, our goal follows
immediately from the fact that $\interpValRel{n}{n}{\thefp}{\inttyp}$, and we have already said that related values are related under
the lifting, since $\thefp$ is affine.

\item Case {Fun-Intro}: Once more, $\compile{(\lambda x.\, \Expr)} = \lambda x.\, \compile{\Expr}$. 
 By our induction hypothesis, we have that $\interpOpenRel{\Gamma, x: \typ_1}{\compile{\Expr}}{\Expr}{\typ_2}$. Given $\substh$ and
$\substc$, we know $[\substc]\lambda x .\, \Expr = \lambda x.\,
[\substc]\Expr$ and $[\substh]\lambda x .\, \compile{\Expr} = \lambda
x.\, [\substh]\compile{\Expr}$. We can push the substitutions under the binders because
$x$ is explicitly not in $\Gamma$ hence not in the domain of the substitutions, and capture is not possible because
they are substituting closed expressions. Then it suffices to show

\[
\left(\underset{x \in \Gamma}{\bigast} \interpExprRel{\substh(x)}{\substc(x)}{\thefp}{\Gamma(x)}\right)
  \vdash \interpValRel{\lambda x .\, [\substh]\compile{\Expr}}{\lambda x .\, [\substc]\Expr}{\thefp}{\typ_1 \lolli \typ_2}
\]

Unfolding the definition on the right, we are given arbitrary $\val'$, $\Val'$ which are related at $\typ_1$,
which we put in the context using the rule for for all introduction, affine introduction, and wand introduction; we then need to show:

\begin{align*}
& \interpValRel{\val'}{\Val'}{\thefp}{\typ_1} * \left(\underset{x \in \Gamma}{\bigast} \interpExprRel{\substh(x)}{\substc(x)}{\thefp}{\Gamma(x)}\right) \\
& \quad \vdash
\interpExprRel{(\lambda x.\, [\substh]\compile{\Expr})\ \val'}{(\lambda x.\, [\substc]\Expr)\ \Val'}{\thefp}{\typ_1 \lolli \typ_2}
\end{align*}

After yet more unfolding, we need to show

\begin{align*}
& \ownThread{i}{\lctx[(\lambda x.\, [\substc]\Expr)\ \Val']}
   * \interpValRel{\val'}{\Val'}{\thefp}{\typ_1} 
   * \left(\underset{x \in \Gamma}{\bigast} \interpExprRel{\substh(x)}{\substc(x)}{\thefp}{\Gamma(x)}\right) \\
&\qquad \vdash
\wpre{(\lambda x.\, [\substh]\compile{\Expr})\ \val'}
     {\val''.\, \exists \Val''.\, \ownThread{i}{\lctx[\Val'']} * \interpValRel{\val''}{\Val''}{\thefp}{\typ_2}}
\end{align*}

We step the source and target to do the beta reductions:

\begin{align*}
& \ownThread{i}{\lctx[[\Val'/x][\substc]{\Expr}]]}
   * \interpValRel{\val'}{\Val'}{\thefp}{\typ_1} 
   * \left(\underset{x \in \Gamma}{\bigast} \interpExprRel{\substh(x)}{\substc(x)}{\thefp}{\Gamma(x)}\right) \\
&\qquad \vdash
\wpre{[\Val'/x][\substh]\compile{\Expr}}
     {\val''.\, \exists \Val''.\, \ownThread{i}{\lctx[\Val'']} * \interpValRel{\val''}{\Val''}{\thefp}{\typ_2}}
\end{align*}

We can then extend the substititons $\substh$ and $\substc$ to also map $x$ to $\val'$ and $\Val'$ respectively; call these $\substh'$ and $\substc'$. Then, rewriting the above and regrouping
our assertions to the left of the turnstile, we get:

\begin{align*}
& \ownThread{i}{\lctx[[\substc']{\Expr}]}
   * \left(\underset{y \in \Gamma, x:\typ_1}{\bigast} \interpExprRel{\substh'(y)}{\substc'(y)}{\thefp}{\Gamma(y)}\right) \\
&\qquad \vdash
\wpre{[\substh']\compile{\Expr}}
     {\val''.\, \exists \Val''.\, \ownThread{i}{\lctx[\Val'']} * \interpValRel{\val''}{\Val''}{\thefp}{\typ_2}}
\end{align*}

But now we can apply our induction hypothesis (suitably unrolling and instantiating everything).

\item Case \ruleref{Fun-Elim}: Our induction hypothesis says:
 $\interpOpenRel{\Gamma}{\compile{\Expr}}{\Expr}{\typ_1 \lolli \typ_2}$ and
 $\interpOpenRel{\Gamma}{\compile{\Expr'}}{\Expr'}{\typ_1}$ and
 we need to show, given closing substitutions $\substh$ and $\substc$:

\[
\left(\underset{x \in \Gamma \uplus \Gamma'}{\bigast} \interpExprRel{\substh(x)}{\substc(x)}{\thefp}{(\Gamma \uplus \Gamma')(x)}\right)
  \vdash \interpExprRel{[\substh]\compile{\Expr}\, [\substh]\compile{\Expr'}}
                       {[\substc]\Expr\, [\substc]\Expr'}{\thefp}{\typ_2}
\]

But we note now that $\Expr$ and $\Expr'$ (and their translations) must only have free variables within $\Gamma$ and $\Gamma'$ respectively.
Let $\substhB$ and $\substhB'$ be the restrictions of $\substh$ to $\Gamma$ and $\Gamma'$ respectively;
similarly for $\substcB$ and $\substcB'$. Then $[\substc]\Expr = [\substcB]\Expr$ and $[\substh]\compile{\Expr} = [\substcB]\compile{\Expr}$ and
similarly for the primed versions. Then substituting, and splitting our $\bigast$, the above becomes:

\begin{align*}
& \left(\underset{x \in \Gamma}{\bigast} \interpExprRel{\substhB(x)}{\substcB(x)}{\thefp}{\Gamma(x)}\right) *
\left(\underset{x \in \Gamma'}{\bigast} \interpExprRel{\substhB'(x)}{\substcB'(x)}{\thefp}{\Gamma'(x)}\right) \\
& \quad \vdash \interpExprRel{[\substhB]\compile{\Expr}\, [\substhB]\compile{\Expr'}}{{ [\substcB]\Expr\, [\substcB]\Expr'} }{\thefp}{\typ_2}
\end{align*}

Unfolding this becomes:
\begin{align*}
& \ownThread{i}{\lctx[[\substcB]\Expr\, [\substcB']\Expr']} *
\left(\underset{x \in \Gamma}{\bigast} \interpExprRel{\substhB(x)}{\substcB(x)}{\thefp}{\Gamma(x)}\right) *
\left(\underset{x \in \Gamma'}{\bigast} \interpExprRel{\substhB'(x)}{\substcB'(x)}{\thefp}{\Gamma'(x)}\right)
\\
&\quad \vdash \wpre{[\substhB]\compile{\Expr}\, [\substhB']\compile{\Expr'}}
     {\val''.\, \exists \Val''.\, \ownThread{i}{\lctx[\Val'']} * \interpValRel{\val''}{\Val''}{\thefp}{\typ_2}}
\end{align*}

Using the bind rule to focus on the left expression of each computation, we use the induction hypothesis for $\Expr$ to get
that there exists $\val$, $\Val$, for which it suffices to show that:
\begin{align*}
& \ownThread{i}{\lctx[\Val\, [\substcB']\Expr']} *
\interpValRel{\val}{\Val}{\thefp}{\typ_1 \lolli \typ_2} *
\left(\underset{x \in \Gamma'}{\bigast} \interpExprRel{\substhB'(x)}{\substcB'(x)}{\thefp}{\Gamma'(x)}\right)
\\
& \quad \vdash \wpre{\val\, [\substhB']\compile{\Expr'}}
     {\val''.\, \exists \Val''.\, \ownThread{i}{\lctx[\Val'']} * \interpValRel{\val''}{\Val''}{\thefp}{\typ_2}}
\end{align*}

Doing the same thing for the other induction hypothesis we get that there exists $\val'$, $\Val'$ for which we need to show that:
\begin{align*}
& \ownThread{i}{\lctx[\Val\, \Val'} *
\interpValRel{\val}{\Val}{\thefp}{\typ_1 \lolli \typ_2} *
\interpValRel{\val'}{\Val'}{\thefp}{\typ_1}
\\
& \quad \vdash \wpre{\val\, \val'}
     {\val''.\, \exists \Val''.\, \ownThread{i}{\lctx[\Val'']} * \interpValRel{\val''}{\Val''}{\thefp}{\typ_2}}
\end{align*}

But this follows from the definition of 
$\interpValRel{\val}{\Val}{\thefp}{\typ_1 \lolli \typ_2}$.

\item Case \ruleref{Pair-Intro}: We have $\interpOpenRel{\Gamma_1}{\compile{\Expr_1}}{\Expr_1}{\typ_1}$
and $\interpOpenRel{\Gamma_2}{\compile{\Expr_2}}{\Expr_2}{\typ_2}$, and must show, given $\substh$ and 
$\substc$ that:

\[
\left(\underset{x \in \Gamma_1 \uplus \Gamma_2}{\bigast} \interpExprRel{\substh(x)}{\substc(x)}{\thefp}{(\Gamma_1 \uplus \Gamma_2)(x)}\right)
  \vdash \interpExprRel{([\substh]\compile{\Expr_1}\, [\substh]\compile{\Expr_2})}
                       {([\substc]\Expr_1, [\substc]\Expr_2)}
                       {\thefp}{\typ_1 \otensor \typ_2}
\]

As in the previous case, we can restrict the substitions to $\Gamma_1$
and $\Gamma_2$; for $\substh$, call those $\substhB$ and $\substhB'$
respectively and similarly for $\substc$, and since $E_i$ has free variables only appearing in $\Gamma_i$, we can rewrite the above to be:

\begin{align*}
& \left(\underset{x \in \Gamma_1}{\bigast} 
   \interpExprRel{\substhB(x)}{\substcB(x)}{\thefp}{\Gamma_1(x)}\right) *
\left(\underset{x \in \Gamma_2}{\bigast} 
   \interpExprRel{\substhB'(x)}{\substcB'(x)}{\thefp}{\Gamma_2(x)}\right)
\\ & \quad \vdash 
\interpExprRel{([\substhB]\compile{\Expr_1}, [\substhB']\compile{\Expr_2})}{([\substcB]\Expr_1, [\substcB']\Expr_2)}{\thefp}{\typ_1 \otensor \typ_2}
\end{align*}

Now, we unfold the lifting on the right and introduce the corresponding source ownership assertion:

\begin{align*}
&\ownThread{i}{\lctx[([\substcB]\Expr_1, [\substcB']\Expr_2)]} 
* \left(\underset{x \in \Gamma_1}{\bigast} 
   \interpExprRel{\substhB(x)}{\substcB(x)}{\thefp}{\Gamma_1(x)}\right) *
\left(\underset{x \in \Gamma_2}{\bigast} 
   \interpExprRel{\substhB'(x)}{\substcB'(x)}{\thefp}{\Gamma_2(x)}\right)
\\
&\quad  \vdash 
\wpre{([\substhB]\compile{\Expr_1}, [\substhB']\compile{\Expr_2})}
     {(\val_1, \val_2).\, \exists \Val_1, \Val_2.\, 
     \ownThread{i}{\lctx[(\Val_1, \Val_2)]}
     * \interpValRel{(\val_1, \val_2)}{(\Val_1, \Val_2)}{\thefp}{\typ_1 \otensor \typ_2}}
\end{align*}

As in the previous case, we use the bind rule to focus on the left and then right subexpressions; applying our induction hypothesis about $\Expr_1$ and $\Expr_2$ then lets us trade our context assumptions to end up with some $\val_1, \Val_1, \val_2, \Val_2$ for which it suffices to show:

\begin{align*}
&\ownThread{i}{\lctx[(\Val_1, \Val_2)]} 
*  \interpValRel{\val_1}{\Val_1}{\thefp}{\typ_1} 
*  \interpValRel{\val_2}{\Val_2}{\thefp}{\typ_2} 
\\
&\quad  \vdash 
     \ownThread{i}{\lctx[(\Val_1, \Val_2)]}
     * \interpValRel{(\val_1, \val_2)}{(\Val_1, \Val_2)}{\thefp}{\typ_1 \otensor \typ_2}
\end{align*}

But this follows immediately from the unfolding of $\thefp$ at pair type.

\item Case \ruleref{Pair-Elim}: The argument is similar to {Fun-Intro}, in the way that \ruleref{Pair-Intro} corresponds to \ruleref{Fun-Elim}, so we omit it.
  
\item Case \ruleref{Fork}: Surprisingly, this too is similar to the
  cases for pair and function!  The reader might have expected that
  this would in fact be a hard case; but it is not \emph{precisely}
  because our lifting from value relation to expression relation is
  usiing the weakest precondition, so we'll be able to use the fork
  rule. To wit, we have by induction hypothesis that
  $\interpOpenRel{\Gamma_1}{\compile{\Expr_\f}}{\Expr_\f}{\typ'}$ and
  $\interpOpenRel{\Gamma_2}{\compile{\Expr}}{\Expr}{\typ}$. We must show:

\[
\left(\underset{x \in \Gamma_1 \uplus \Gamma_2}{\bigast} \interpExprRel{\substh(x)}{\substc(x)}{\thefp}{(\Gamma_1 \uplus \Gamma_2)(x)}\right)
  \vdash \interpExprRel{\fork{[\substh]\compile{\Expr_\f}}; \compile{[\substh]\Expr}}
                       {\fork{[\substc]\Expr_\f}; {[\substc]\Expr}}
                       {\thefp}{\typ}
\]

Doing the same routine of restricting the substitutions, splitting the context assumptions, unfolding the lifting relation, etc. we get that we need to show

\begin{align*}
&\ownThread{i}{\lctx[\fork{[\substcB]\Expr_\f}; [\substcB']\Expr]} 
* \left(\underset{x \in \Gamma_1}{\bigast} 
   \interpExprRel{\substhB(x)}{\substcB(x)}{\thefp}{\Gamma_1(x)}\right) *
\left(\underset{x \in \Gamma_2}{\bigast} 
   \interpExprRel{\substhB'(x)}{\substcB'(x)}{\thefp}{\Gamma_2(x)}\right)
\\
&\quad  \vdash 
\wpre{(\fork{[\substhB]\compile{\Expr_{f}}}; [\substhB']\compile{\Expr})}
     {\val.\, \exists \Val.\, 
     \ownThread{i}{\lctx[\Val]}
     * \interpValRel{\val}{\Val}{\thefp}{\typ}}
\end{align*}

We start by applying the fork rule; we then do a step shift in the
source to get $\ownThread{j}{[\substcB]\Expr_f}$ for some new $j$, and
the parent thread becomes $\ownThread{i}{\Expr}$. We now split the
context, passing the assumptions about $\Gamma_1$ the $j$ source
thread to a proof of weakest precondition for the child target thread;
the rest is used in the target parent thread's proof. So we need to prove the 
following two entailments:

\begin{align*}
&\ownThread{j}{[\substcB]\Expr_\f}
* \left(\underset{x \in \Gamma_1}{\bigast} 
   \interpExprRel{\substhB(x)}{\substcB(x)}{\thefp}{\Gamma_1(x)}\right)
\\
&\quad  \vdash 
\wpre{[\substhB]\compile{\Expr_\f}}
     {\val.\, \stopped}
\end{align*}

\begin{align*}
&\ownThread{i}{\lctx[[\substcB']\Expr]} 
\left(\underset{x \in \Gamma_2}{\bigast} 
   \interpExprRel{\substhB'(x)}{\substcB'(x)}{\thefp}{\Gamma_2(x)}\right)
\\
&\quad  \vdash 
\wpre{[\substhB']\compile{\Expr}}
     {\val.\, \exists \Val.\, 
     \ownThread{i}{\lctx[\Val]}
     * \interpValRel{\val}{\Val}{\thefp}{\typ}}
\end{align*}

The latter follows from the induction hypothesis. For the former, using the induction hypothesis and the rule of consequence for weakest precondition means we just need to show, for all $\val$, $\Val$:

\[
\ownThread{j}{\Val} * \interpValRel{\val}{\Val}{\thefp}{\typ'} \vdash
\stopped
\]

But the right side of the conjunction is affine, hence can be thrown away, and we have mentioned above that a \ownThreadNoArg\ of a value entails \stopped.

\item Case \ruleref{NewChTyp}: At last we come to a case involving a channel. This expression is closed, it suffices to show:
\begin{align*}
&\ownThread{i}{\lctx[\newch]} 
* \left(\underset{x \in \Gamma}{\bigast} 
   \interpExprRel{\substhB(x)}{\substcB(x)}{\thefp}{\Gamma_1(x)}\right) \\
&\quad \vdash 
\wpre{\heapnewch}
     {\val.\, \exists \Val.\, 
     \ownThread{i}{\lctx[\Val]}
     * \interpValRel{\val}{\Val}{\thefp}{\styp \otensor \dual{\styp}}}
\end{align*}
Throwing away the $\Gamma$ piece of the context and rewriting our proof rule for $\newch$ on the remaining $\ownThread{i}{\lctx[\newch]}$ yields:

\begin{align*}
& \wpre{\heapnewch}
       {(\heaploc, \heaploc). \exists \chanloc.\,
        \ownThread{i}{\lctx[(\locside{\chanloc}{\lside}, \locside{\chanloc}{\rside})]}
        * \SessionProt{\later \thefp}{\heaploc}{\locside{\chanloc}{\lside}}{\styp}
        * \SessionProt{\later \thefp}{\heaploc}{\locside{\chanloc}{\rside}}{\dual{\styp}}}
\\ & \quad
\vdash 
\wpre{\heapnewch}
     {\val.\, \exists \Val.\, 
     \ownThread{i}{\lctx[\Val]}
     * \interpValRel{\val}{\Val}{\thefp}{\styp \otensor \dual{\styp}}}
\end{align*}

Applying the rule of consequence, we merely need to show, for all $l$ and $c$:
\begin{align*}
& \ownThread{i}{\lctx[(\locside{\chanloc}{\lside},
                       \locside{\chanloc}{\lside})]}
        * \SessionProt{\later \thefp}{\heaploc}{\locside{\chanloc}{\lside}}{\styp}
        * \SessionProt{\later \thefp}{\heaploc}{\locside{\chanloc}{\rside}}{\dual{\styp}}
\\ & \quad \vdash 
 \ownThread{i}{\lctx[(\locside{\chanloc}{\lside},
                       \locside{\chanloc}{\rside})]}
     * \interpValRel{(l, l)}{(\locside{\chanloc}{\lside}, 
                              \locside{\chanloc}{\rside})}
                            {\thefp}{\styp \otensor \dual{\styp}}
\end{align*}
Unfolding the definition of value relation at pair type and session type, the left
side matches the right side so we are done.

\item Case \ruleref{Send}: Our induction hypothesis gives us that 
 $\interpOpenRel{\Gamma}{\compile{\Expr_1}}{\Expr_1}{\sendtyp{\typ}\styp}$ and
 $\interpOpenRel{\Gamma}{\compile{\Expr_2}}{\Expr_2}{\typ}$ and
 we need to show, given closing substitutions $\substh$ and $\substc$:

\[
\left(\underset{x \in \Gamma_1 \uplus \Gamma_2}{\bigast} \interpExprRel{\substh(x)}{\substc(x)}{\thefp}{(\Gamma_1 \uplus \Gamma_2)(x)}\right)
  \vdash \interpExprRel{\heapsend\ {[\substh]\compile{\Expr_1}}\ {[\substh]\compile{\Expr_2}}}
                       {\send{[\substc]\Expr_1}{[\substc]\Expr_2}}{\thefp}{\styp}
\]

Doing the usual routine of noting that the two subexpressions only have free variables in their respective contexts, taking restrictions, and unfolding everything:

\begin{align*}
&\ownThread{i}{\lctx[\send{[\substcB]\Expr_1}{[\substcB']\Expr_2}]}
* \left(\underset{x \in \Gamma_1}{\bigast} 
   \interpExprRel{\substhB(x)}{\substcB(x)}{\thefp}{\Gamma_1(x)}\right)
* \left(\underset{x \in \Gamma_2}{\bigast} 
   \interpExprRel{\substhB'(x)}{\substcB'(x)}{\thefp}{\Gamma_2(x)}\right)
\\ & \quad
\vdash 
\wpre{\heapsend\ {[\substhB]\compile{\Expr_1}}\ {[\substh]\compile{\Expr_2}}}
     {\val.\, \exists \Val.\, 
     \ownThread{i}{\lctx[\Val]}
     * \interpValRel{\val}{\Val}{\thefp}{\styp}}
\end{align*}

Notice that because the implementation of $\heapsend$ let-binds a pair containing the two ``arguments'', they are both evaluated first before substitution into the ``body'' of the send primitive. Hence, we first evaluate the left and then right subexpressions using our induction hypothesis, and then just have to show for all values $\val_1$, $\val_2$, $\Val_1$, $\Val_2$:

\begin{align*}
&\ownThread{i}{\lctx[\send{\Val_1}{\Val_2}]}
  * \interpValRel{\val_1}{\Val_1}{\thefp}{\sendtyp{\typ}{\styp}}
  * \interpExprRel{\val_2}{\Val_2}{\thefp}{\typ}
\\ & \quad
\vdash 
\wpre{\heapsend\ \val_1\ \val_2}
     {\val.\, \exists \Val.\, 
     \ownThread{i}{\lctx[\Val]}
     * \interpValRel{\val}{\Val}{\thefp}{\styp}}
\end{align*}

But then, the relation of $\val_1$ and $\Val_1$ unfolds to give us:
\begin{align*}
&\ownThread{i}{\lctx[\send{\Val_1}{\Val_2}]}
  * \SessionProt{\later \thefp}{\val_1}{\Val_1}{\sendtyp{\typ}{\styp}}
  * \interpExprRel{\val_2}{\Val_2}{\thefp}{\typ}
\\ & \quad
\vdash 
\wpre{\heapsend\ \val_1\ \val_2}
     {\val.\, \exists \Val.\, 
     \ownThread{i}{\lctx[\Val]}
     * \interpValRel{\val}{\Val}{\thefp}{\styp}}
\end{align*}

So we can use the proof rule for send primitive on the left to get

\begin{align*}
&  \wpre{\heapsend\ \val_1 \val_2}
       {\Ret\heaploc'. \ownThread{i}{\lctx[\locside{\chanloc}{\side}]}
         * \SessionProt{\later \thefp}{\val_1}{\Val_1}{\styp}}
\\ & \quad
\vdash 
\wpre{\heapsend\ \val_1\ \val_2}
     {\val.\, \exists \Val.\, 
     \ownThread{i}{\lctx[\Val]}
     * \interpValRel{\val}{\Val}{\thefp}{\styp}}
\end{align*}

Applying the rule of consequence, and unfolding the interpretation at session type,
we are done.

\item Case \ruleref{Recv}: 
 Our induction hypothesis gives us that 
 $\interpOpenRel{\Gamma}{\compile{\Expr}}{\Expr}{\recvtyp{\typ}\styp}$ and
 we need to show, given closing substitutions $\substh$ and $\substc$:

\[
\left(\underset{x \in \Gamma}{\bigast} \interpExprRel{\substh(x)}{\substc(x)}{\thefp}{(\Gamma)(x)}\right)
  \vdash \interpExprRel{\heaprecv\ {[\substh]\compile{\Expr}}}
                       {\recv{[\substc]\Expr}}{\thefp}{\typ \otensor \styp}
\]

Unfolding we have:
\begin{align*}
&\ownThread{i}{\lctx[\recv{[\substc]\Expr}]}
* \left(\underset{x \in \Gamma}{\bigast} 
   \interpExprRel{\substh(x)}{\substc(x)}{\thefp}{\Gamma(x)}\right)
\\ & \quad
\vdash 
\wpre{\heaprecv\ {[\substh]\compile{\Expr}}}
     {\val.\, \exists \Val.\, 
     \ownThread{i}{\lctx[\Val]}
     * \interpValRel{\val}{\Val}{\thefp}{\typ \otensor \styp}}
\end{align*}

We first evaluate the subexpression using our induction hypothesis, and then just to show for all values $\val$, $\val$:

\begin{align*}
&\ownThread{i}{\lctx[\recv{\Val}]}
  * \interpValRel{\val}{\Val}{\thefp}{\recvtyp{\typ}{\styp}}
\\ & \quad
\vdash 
\wpre{\heaprecv\ \val}
     {\val'.\, \exists \Val'.\, 
     \ownThread{i}{\lctx[\Val]}
     * \interpValRel{\val}{\Val}{\thefp}{\typ \otensor \styp}}
\end{align*}

Again, the relation of $\val$ and $\Val$ unfolds to give us:
\begin{align*}
&\ownThread{i}{\lctx[\recv{\Val}]}
  * \SessionProt{\later \thefp}{\val}{\Val}{\recvtyp{\typ}{\styp}}
\\ & \quad
\vdash 
\wpre{\heaprecv\ \val}
     {\val'.\, \exists \Val'.\, 
     \ownThread{i}{\lctx[\Val]}
     * \interpValRel{\val}{\Val}{\thefp}{\typ \otensor \styp}}
\end{align*}

So we can use the proof rule for receive primitive on the left to get
\begin{align*}
& \wpre{\heaprecv\ \val}
       {\Ret(\heaploc', \val'). \exists \Val'.\,
        \ownThread{i}{\lctx[(\locside{\chanloc}{\side}, \Val')]}
         * (\interpValRel{\val'}{\Val'}{\typInterp}{\tau})
         * \SessionProt{\later \thefp}{\heaploc'}{\Val'}  \styp}
\\ & \quad
\wpre{\heaprecv\ \val}
     {\val'.\, \exists \Val'.\, 
     \ownThread{i}{\lctx[\Val']}
     * \interpValRel{\val'}{\Val'}{\thefp}{\typ \otensor \styp}}
\end{align*}

Applying the rule of consequence, and unfolding the interpretation at tensor and session type,
we are done.

\end{itemize}
\end{proof}

This completes the proof of the fundamental lemma. Then, 
\thmref{thm:compiler} follows by combining the fundamental lemma and the
soundness of the logical relation (\lemref{lem:logrel-soundness}):
every well typed term is logically related to its translation, and the
left side of a logically related pair refines the right side.

\subsection{Craig-Landin-Hagersten Lock}

\newcommand\lockcode{\begin{figure}
\[
\begin{array}[t]{l>{\quad\quad}l}
\begin{array}{ll}
&\clhnew \eqdef \\
&\quad \lambda \_.\, \bind{d}{\alloc{\FALSE}}{} \\
&\quad\quad\quad (\alloc{d}, \alloc{d})
\end{array}
&
\begin{array}{ll}
&\ticketnew \eqdef \\
&\quad \lambda \_.\, (\alloc{0}, \alloc{0}) \\
\\
\end{array}
\\ \\ 
\begin{array}{ll}
&\clhwait \eqdef \recN{loop}\ {me}\ prev\ lk \\
&\quad \bind{w}{\load{prev}}{} \\
&\quad \ifmatch{w}\\
&\quad\quad loop\  me\  prev\  lk \\
&\quad \textlog{else}\\
&\quad\quad \store{(\fst lk)}{me}
\end{array}
&
\begin{array}{ll}
&\ticketwait \eqdef \recN{loop}\ {x}\ lk \\
&\quad \bind{o}{\load{(\fst{prev})}}{} \\
&\quad \ifmatch{x = o}\\
&\quad\quad () \\
&\quad \textlog{else}\\
&\quad\quad loop\ x\ lk
\end{array}
\\ \\ 
\begin{array}{ll}
&\clhacq \eqdef \lambda lk. \\
&\quad \bind{me}{\alloc{\TRUE}}{} \\
&\quad \bind{prev}{\swap{\snd{lk}}{me}}{} \\
&\quad \clhwait\ me\ prev\ lk
\end{array}
&
\begin{array}{ll}
&\ticketacq \eqdef \lambda lk. \\
&\quad \bind{n}{\fai{\snd{lk}}}{} \\
&\quad \ticketwait\ n\ lk \\
\\
\end{array}
\\ \\ 
\begin{array}{ll}
&\clhrel \eqdef \lambda lk. \\
&\quad\store{\load{(\fst lk)}}{\FALSE} 
\end{array}
&
\begin{array}{ll}
&\ticketrel \eqdef \lambda lk. \\
&\quad\store{(\fst lk)}{\load{(\fst lk)} + 1}
\end{array}
\end{array}
\]
\caption{Code for CLH and ticket locks.}
\label{fig:clhticket}
\end{figure}
}

\newcommand\typetransfig{
\begin{figure}
\begin{mathpar}
\inferN{Var}{\Gamma(\var) = \typ}{\Gamma \vdash \var \typtrans \var : \typ}

\inferN{Int}{}{\Gamma \vdash n \typtrans n : \inttyp}

\inferN{Unit}{}{\Gamma \vdash \vunit \typtrans \vunit : \unit}

\inferN{Bool}{}{\Gamma \vdash b \typtrans b : \booltyp}

\inferN{Pair-Intro}{\Gamma \vdash \expr_1 \typtrans \expr_1' : \typ_1 \\
                    \Gamma \vdash \expr_2 \typtrans \expr_2' : \typ_2}
       {\Gamma \vdash (\expr_1, \expr_2) \typtrans (\expr_1', \expr_2') : \typ_1 \times \typ_2}

\inferN{Pair-Fst}{\Gamma \vdash \expr \typtrans \expr' : \typ_1 \times \typ_2}
       {\Gamma \vdash \fst\ \expr \typtrans \fst\ \expr': \typ_1}

\inferN{Pair-Snd}{\Gamma \vdash \expr \typtrans \expr' : \typ_1 \times \typ_2}
       {\Gamma \vdash \snd\ \expr \typtrans \snd\ \expr': \typ_2}

\inferN{Sum-Left}{\Gamma \vdash \expr \typtrans \expr' : \typ_1}
       {\Gamma \vdash \inl\ \expr \typtrans \inl\ \expr': \typ_1 + \typ_2}

\inferN{Sum-Right}{\Gamma \vdash \expr \typtrans \expr' : \typ_2}
       {\Gamma \vdash \inr\ \expr \typtrans \inr\ \expr': \typ_1 + \typ_2}

\inferN{Sum-Elim}{\Gamma \vdash \expr \typtrans \expr' : \typ_1 + \typ_2 \\
       \Gamma \vdash \expr_1 \typtrans \expr_1' : \typ_1 \rightarrow \typ \\
       \Gamma \vdash \expr_2 \typtrans \expr_2' : \typ_2 \rightarrow \typ}
       {\Gamma \vdash \caseelim{\expr}{\expr_1}{\expr_2}
         \typtrans \caseelim{\expr'}{\expr_1'}{\expr_2'} : \typ}

\inferN{Seq}{\Gamma \vdash \expr_1 \typtrans \expr_1' : \typ_1 \\
             \Gamma \vdash \expr_2 \typtrans \expr_2' : \typ_2}
       {\Gamma \vdash \expr_1; \expr_2 \typtrans \expr_1';\expr_2' : \typ_2}
       
\inferN{Fun-Intro}
       {\Gamma, f : \typ_1 \rightarrow \typ_2, x : \typ_1 \vdash \expr \typtrans \expr' : \typ_2 }
       {\Gamma \vdash \rec{f}{x} \expr \typtrans \rec{f}{x}.\, \expr' : \typ_1 \rightarrow \typ_2}

\inferN{Fun-Elim}{\Gamma \vdash \expr_1 \typtrans \expr_1' : \typ_1 \rightarrow \typ_2 \\
       \Gamma \vdash \expr_2 \typtrans \expr_2' : \typ_1 }
      {\Gamma \vdash \expr_1\, \expr_2 \typtrans \expr_1'\, \expr_2' : \typ_2}

\inferN{Fork}{\Gamma \vdash \expr_\f \typtrans {\expr_\f}': \typ}
      {\Gamma \vdash \fork{\expr_\f} \typtrans \fork{{\expr_\f}'}  : \unit}
      
\inferN{Load}{\Gamma \vdash \expr \typtrans \expr' : \reftyp{\typ}}
      {\Gamma \vdash \load{\expr} \typtrans \load{\expr'}  : \typ}

\inferN{Store}{\Gamma \vdash \expr \typtrans \expr' : \reftyp{\typ} \\
               \Gamma \vdash \expr_s \typtrans \expr_s' : \typ}
      {\Gamma \vdash \store{\expr}{\expr_s} \typtrans \store{\expr'}{\expr_s'}  : \unit}
      
\inferN{Send}{\Gamma_1 \vdash \Expr_1 : \sendtyp{\typ}{\styp} \\
       \Gamma_2 \vdash \Expr_2 : \typ
      }
      {\Gamma_1 \uplus \Gamma_2 \vdash \send{\Expr_1}{\Expr_2}  : \styp}

\inferH{Lock-Intro}{}
      {\Gamma \vdash \ticketnew \typtrans \clhnew : \unit \rightarrow \locktyp}

\inferH{Lock-Elim}
       {\Gamma \vdash \expr_l \typtrans \expr_l' : \locktyp \\
        \Gamma \vdash \expr \typtrans \expr' : \unit \rightarrow \typ}
      {\Gamma \vdash \ticketsync\ \expr_l\ \expr \typtrans \clhsync\ \expr_l'\ \expr' : \typ}
      
\ticketsync\ \expr_l\ \expr \eqdef \lambda lk, f.\ \ticketacq\ lk; \bind{z}{\expr\,\vunit}
                  {(\ticketrel\ lk ; z)} \\
\clhsync\ \expr_l'\ \expr' \eqdef \lambda lk, f.\ \clhacq\ lk; \bind{z}{\expr'\,\vunit}
                  {(\clhrel\ lk ; z)}
\end{mathpar}
\caption{Type directed translation between programs using ticket lock and CLH locks.}
\label{fig:typetrans}
\end{figure}
}

\newcommand{\clhsync}{\textlog{CLHsync}}
\newcommand{\clhnew}{\textlog{CLHnew}}
\newcommand{\clhacq}{\textlog{CLHacq}}
\newcommand{\clhrel}{\textlog{CLHrel}}
\newcommand{\clhwait}{\textlog{CLHwait}}
\newcommand{\ticketsync}{\textlog{ticketsync}}
\newcommand{\ticketnew}{\textlog{ticketnew}}
\newcommand{\ticketacq}{\textlog{ticketacq}}
\newcommand{\ticketrel}{\textlog{ticketrel}}
\newcommand{\ticketwait}{\textlog{ticketwait}}

\lockcode

In this part, we describe our second case study, which shows that the
Craig-Landin-Hagersten queue lock\citep{Craig93, MagnussonLH94}
refines a fair ticket lock~\citep{MCS91}. Like the first case study, the results
mentioned here have all been mechanized in Coq. Our goal here is simply to give
a feel for what the result is about; we do not describe the actual proofs or
the state transition systems we use. The interested reader should consult the Coq source.

\subsubsection{Lock Implementations} The code for the two types of locks appears in
\figref{fig:clhticket}. The $\fai{-}$ operation is a fetch-and-increment:
it takes a location as a parameter and atomically increments that
location and returns the previous value. Meanwhile, $\swap{-}{-}$ takes
a location and a value and atomically loads the previous value of the location and
stores the second parameter as the new value.

Each ticket lock consists of two references storing integers: the
``owner'' counter and the ``next'' counter, called $o$ and $n$ in the
code, so $\ticketnew$ just allocates these two references. To acquire
the lock, a thread atomically fetches and increments the next counter
to get a number, which we call a ``ticket'' ($\ticketacq$). It then spins on the
owner counter waiting until the owner counter matches the thread's
ticket number ($\ticketwait$). Once it matches, the thread enters the critical
section. To release the lock, the thread increments the owner counter ($\ticketrel$).

The lock is fair in the sense that once a thread calls $\ticketacq$
and completes its fetch-and-increment operation to get its ticket, it
is guaranteed to enter the critical section \emph{before} other
threads that subsequently call $\ticketacq$. Thus, if every thread
which acquires the lock eventually releases it, every thread that
tries to acquire the lock will eventually enter the critical section.

However, one drawback to this design is that every thread
waiting to acquire the lock spins on the same owner counter
location. Depending on the machine, this can cause poor performance
for memory related reasons~\citep{MCS91}.  In contrast, in the CLH
lock, every waiting thread spins on a \emph{different} memory
location.  The rather extensive benchmarking done by \citet{DavidGT13}
suggests that in some settings the CLH lock performs better than the
ticket lock.

Conceptually, we can think of
the CLH lock as consisting of a queue of threads, in the order that
they tried to acquire the lock. The lock has two fields: a pointer to
the head of the queue, and a pointer to the tail of the queue. Each
node in the queue consists of a single boolean value. While this value
is true, the node that owns that node is either (1) still waiting to
enter the critical section, or (2) is in the critical section, but has
not yet released the lock. When creating a new CLH lock, we create a
dummy node in which this field is set to false, and set the head and
tail to point to it ($\clhnew$).  To acquire the lock, a thread
allocates a new node and inserts itself at the back of the list by
atomically swapping the location of the new node with the current
tail ($\clhacq$) -- the value returned by this atomic swap is the node of its
predecessor in the list. The thread then spins on this node's boolean value until
it is false -- this indicates that the predecessor has released the
lock, signalling that the thread is now at the head of the queue and
may enter the critical section ($\clhwait$). Before entering the critical section,
the thread therefore updates the head pointer to point to its node. To
release the lock, the thread looks at the node pointed to by the head
field, and sets that node to false ($\clhrel$).

Observe that, as claimed, each waiting thread spins on a different location: the
node of its predecessor in the list. 

There is one important difference between our implementation and
typical versions. Usually, there is no head pointer, and the API is designed so that the acquire
method take two parameters: (1) the lock, and (2) the new node to be
added to the list, while the release method takes only the releaser's
node. This means conventionally it is not a drop-in replacement for the ticket-lock,
because in the ticket lock, both acquire/release take the lock
structure as their only parameter. That said, the head pointer 
may have other uses and is included in some implementations~\citep{Lea05}.

\subsubsection{Refinement Specification}

Note that the CLH lock is fair in the same sense as the ticket lock:
once a thread completes its atomic swap to insert itself into the
queue, it must enter the critical section before threads that later
call $\clhacq$. This is what makes it possible to establish a fair
termination preserving refinement.

However, let us note that if we take a closed program $e$ and replace
all instances of the ticket lock primitives with CLH primitives to get
a program $e^\ast$, it is \emph{not} necessarily the case that $e^\ast$ refines
$e$. First, one can violate the ``abstraction'' of the lock datatype, like so:
\[
\begin{array}{ll}
&\bind{l}{(\alloc{0},\alloc{0})}{} \\
&\quad \ticketacq\ l
\end{array}
\]
This program does not trigger a fault, because $l$ happens to represent a valid ticket lock, but if we replace $\ticketacq$ with $\clhacq$ it gets stuck. Of course, we could use a type system in which the lock primitives manipulate values of some abstract type ``lock'', and this would rule out such examples. Then we might hope to prove that the refinement holds when the original program is well-typed.

However, this is not true. Consider the following example:
\[
\begin{array}{ll}
&\bind{l}{\ticketnew\, ()}{} \\
&\quad \ticketrel\ l \\
&\quad \ticketacq\ l
\end{array}
\]
This program diverges, because the first call to $\ticketrel$ will
increment the owner counter to $1$, but during the call to
$\ticketacq$, the ticket the thread gets will be number $0$, so it
will loop forever waiting for the owner counter to be $0$. In
contrast, if we use $\clhnew$, $\clhrel$, and $\clhacq$ in the above
program, the result terminates! The call to $\clhrel$ will set the
dummy node's boolean to $\FALSE$, but it was already $\FALSE$, so this
does nothing. Thus, the thread will see the dummy node is $\FALSE$
during the call to $\clhacq$ and will not spin. Thus, the CLH version is
not a termination-preserving refinement of the source.

Of course, this example uses the lock ``incorrectly'' because it releases
the lock before acquiring it. What's important is that as long as the lock
is used ``correctly'' the refinement should hold. This is captured by
the following set of rules that we prove about the CLH primitives:

\begin{mathpar}
\infer{5 < \delay \leq \dmax \\ 
        \delay' \leq \dmax}
{
\begin{array}[t]{l}
\ownThreadD{i}{\lctx[\ticketnew\,()]}{\delay} * \aff{R} \proves
\arcr 
 \quad \wpre{\clhnew\,()}
       {
          lk. \exists lks, \gamma.\,
        \islock{\gamma}{lks}{lk}{R} 
        * \ownThreadD{i}{\lctx[lks]}{\delay'} 
       }
\end{array}
}

\infer{5 < \delay \leq \dmax \\ 
        \delay' \leq \dmax - 2}
{
\begin{array}[t]{l}
\ownThreadD{i}{\lctx[\ticketacq\ lks]}{\delay} * \islock{\gamma}{lks}{lk}{R} \proves 
\arcr 
\quad  \wpre{\clhacq\,()}
       {
          v. v = () * \locked{\gamma}{lks}{lk} * \ownThreadD{i}{\lctx[()]}{\delay'}
          * \aff{R}
       }
\end{array}
}

\infer{5 < \delay \leq \dmax \\ 
        \delay' \leq \dmax - 2}
{
\begin{array}[t]{l}
\ownThreadD{i}{\lctx[\ticketrel\ lks]}{\delay} * \islock{\gamma}{lks}{lk}{R}
 * \locked{\gamma}{lks}{lk} * \aff{R}
 \proves
\arcr 
  \quad \wpre{\clhrel\,lk}
       {
          v. v = () * \ownThreadD{i}{\lctx[()]}{\delay'}
       }
\end{array}
}
\end{mathpar}
where $\islock{-}{-}{-}{-}$ and $\locked{-}{-}{-}$ are certain
predicates defined in the logic. If one ignores all the parts of the
above specification that have to do with our extensions (delay
constants, refinement resources, and the affine modality), this looks
like a standard specification for a lock primitive in a higher-order
concurrent separation logic (indeed, this kind of specification is proved
about the ticket lock in the original Iris repository). Such a
specification says that for an arbitrary assertion $R$, if one
initially owns $R$, it can be given up to create a lock protecting $R$
-- in exchange, one gets an assertion $\islock{-}{-}{-}{-}$ which is
duplicable. Then, a thread can use this assertion to call acquire, and
afterward it back $R$, and an assertion $\locked{-}{-}{-}$ which
signifies that it is the owner of the lock. Both $R$ and
$\locked{-}{-}{-}$ are needed to then call release. Thus this typical
formulation of a lock specification rules out the kind of
counter-examples we illustrated above. Re-reading the specification
with the parts relevant to our extension, we see that each shows a
refinement from the corresponding ticket lock primitive.

Of course, one can use these rules, along with the rest of the logic,
to prove that for a given program, if one replaces the ticket
primitives with the CLH versions, the result is a refinement. If the
reader is familiar with the ``usual'' specification of lock primitives
in higher-order CSL, hopefully the above description is compelling
evidence that this refinement specification is ``good enough''. In the
next section, we describe a particular use of these proof rules to 
show that for a large class of well-typed programs, the desired refinement holds.

\subsubsection{Type-directed translation} We now describe a simple type-directed
translation from programs using the ticket primitives to ones using
the CLH primitives.  We then construct a logical relation which uses
the weakest precondition specification described above to show that
the translated programs refine their sources.

The translation is shown in \figref{fig:typetrans}. We write $\Gamma
\vdash \expr \typtrans \expr' : \typ$ to indicate that in context
$\Gamma$, the expression $\expr$ translates to $\expr'$ at type
$\typ$. The type system used in the translation is a standard simple
type system for MiniML with references, extended with an abstract type
$\locktyp$ for locks. In contrast to the session-typed language, the
type system here is fully structral, not affine. The primitive
$\ticketnew$ translates to $\clhnew$ and returns a value of type
$\locktyp$ (\ruleref{Lock-Intro}). This lock can be used with the
command $\ticketsync$, which takes a ticket lock and a function of
type $\unit \rightarrow \typ$ and synchronizes execution of the
function with the lock (\ruleref{Lock-Elim}). This translates to
$\clhsync$, which does the same thing but using the CLH lock
implementation.

Because these ``synchronize'' commands properly acquire and then
release the lock, a the type system rules out the kinds of bad
programs we mentioned above. Of course, this is a very restricted way
of using the lock. Our point is just to show that the weakest
precondition specification supports this kind of use; it in fact is
more flexible, but a type system that demonstrated this would be more
complex (at which point, one might want to just appeal to the program
logic directly to prove the refinement).

\typetransfig

As usual, to state our refinement result, we need to give a notion of observational equivalence on values. Here, we restrict our attention to booleans and say that given two booleans $b$ and $b'$, $b \valobsrel b'$ if and only if $b = b'$. With this notion of value equivalence as our foundation for refinement, we have mechanized the following result:

\begin{thm}\label{thm:locks}
If $\cdot \vdash \expr \typtrans \expr' : \booltyp$ then $\expr' \refines \expr$.
\end{thm}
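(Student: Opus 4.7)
The plan is to prove this via a binary logical relations argument that mirrors the session-types case study (\lemref{lem:logrel-soundness} and \lemref{lem:fundamental}), but tailored to the lock API. Concretely, I would define an interpretation $\interpValRel{v}{V}{}{\typ}$ between target values (CLH-based) and source values (ticket-based), and lift it to expressions via the usual weakest-precondition lifting $\interpExprRel{\expr}{\Expr}{}{\typ} \eqdef \forall i, \lctx.\, \aff{\ownThread{i}{\lctx[\Expr]} \wand \wpre{\expr}{v. \exists V.\, \ownThread{i}{\lctx[V]} * \aff{\interpValRel{v}{V}{}{\typ}}}}$, and then to open expressions via closing substitutions, just as in \Sref{sec:session-proof}. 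Soundness (logically related closed terms refine each other at $\booltyp$) follows exactly as in \lemref{lem:logrel-soundness}, using that $\valobsrel$ on booleans is just equality.

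The interpretation of base types, products, sums, and functions is standard and componentwise. The interesting cases are $\reftyp{\typ}$ and $\locktyp$. For reference types, since the language is non-affine and locations can be freely aliased and shared between threads, I would follow the standard recipe and interpret $\interpValRel{\heaploc}{\heaploc'}{}{\reftyp{\typ}}$ as the existence of an Iris invariant of the form $\exists v, V.\, \heaploc \mapsto v * \heaploc' \smapsto V * \later\interpValRel{v}{V}{}{\typ}$; the $\later$ guards the recursive occurrence so the whole relation is well-defined as a guarded fixed point, analogously to the treatment of $\SessionProt{}{}{}{}$ in \Sref{sec:session-proof}. For the abstract lock type I would pick $\interpValRel{lk}{lks}{}{\locktyp} \eqdef \exists \gamma.\, \islock{\gamma}{lks}{lk}{\EMP}$, using the trivial invariant $\EMP$: the lock at the source level protects no language-level resource, so we simply need the two lock handles to be linked through $\islock$. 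This assertion is affine and persistent (assuming $\islock$ is, as is standard), so it passes unchanged through closing substitutions and framing.

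The fundamental lemma then proceeds by induction on the translation judgment $\Gamma \vdash \expr \typtrans \expr' : \typ$. The cases for variables, constants, pairs, sums, sequencing, functions, applications, and forks follow the same bind/step-shift pattern as in the proof of \lemref{lem:fundamental}, just with the source and target swapped in role (source = ticket, target = CLH). The reference cases (\ruleref{Load}, \ruleref{Store}) involve opening the reference invariant around an atomic step in both source and target, reading/writing on both sides, and re-establishing the invariant with the new related values — this is routine but requires stripping a $\later$ via the atomic Iris rules. The \ruleref{Lock-Intro} case is a direct application of the weakest-precondition specification for $\clhnew$ with $R \eqdef \EMP$: we step the source $\ticketnew$ appearing in $\lctx$, and the postcondition hands us the $\islock$ witness we need.

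The main obstacle will be the \ruleref{Lock-Elim} case. After evaluating $\expr_l \typtrans \expr_l'$ to related locks and obtaining $\interpValRel{lk}{lks}{}{\locktyp}$ via the IH (which unfolds to $\islock$), and similarly reducing the function argument, I need to simulate the combined acquire/body/release dance. I would use the $\clhacq$ specification to step the source $\ticketacq$ inside its evaluation context and obtain $\locked{\gamma}{lks}{lk}$; then apply the IH for the body $\expr \typtrans \expr'$, using the bind rule of weakest preconditions to sequence through the result $z$; and finally use the $\clhrel$ specification with the preserved $\locked$ token to step the source $\ticketrel$. The subtlety here is threading the evaluation contexts correctly — each call to the lock specs embeds the source term inside a larger context, so I must apply the bind rule in the source side as well via the usual step-shift for pure reductions, and make sure the $\locked$ token survives across the call to the (higher-order, logically-related) body. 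Once this case is discharged, \thmref{thm:locks} follows by combining the fundamental lemma at type $\booltyp$ with the soundness of the relation.
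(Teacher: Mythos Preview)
Your proposal is correct and matches the paper's own approach: a binary logical relation over the type-translation judgment, with a fundamental lemma and a soundness lemma at $\booltyp$, combined to yield the refinement. The only cosmetic differences are that the paper instantiates the lock invariant with $\TRUE$ rather than $\EMP$ (immaterial, since the specification wraps the resource in $\aff{-}$), and omits the explicit $\later$ in the reference interpretation (the guard is already supplied by the Iris invariant mechanism via $\latertinj$, so your extra $\later$ is harmless but redundant).
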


To prove this theorem, we once again construct a logical
relation.\footnote{The logical relation in our Coq formalization is
  actually formulated generically with respect to a pair of lock
  implementations, assumed to satisfy weakest precondition
  specifications like the ones given above for the ticket and CLH
  lock. The mechanization then instantiates this generic formulation
  with the particular results about the ticket and CLH locks.}  This
is very much like the relation given for the session typed language,
with two notable exceptions (1) the type system is not affine, so all
the interpretations of the types in the relation are both
\emph{affine} and \emph{relevant} propositions (meaning they can be
duplicated and thrown away, freely), (2) we need to model reference and lock types.
For references and locks we have:

\begin{mathpar}
\infer{\islock{\gamma}{lk}{lk'}{\TRUE}}{\interpValRel{lk}{lk'}{}{\locktyp}}

\infer{\knowInv{}{\exists v, v'.\, \heaploc \mapsto v * \heaploc' \smapsto v' * \interpValRel{v}{v'}{}{\typ}}}
         {\interpValRel{\heaploc}{\heaploc'}{}{\reftyp{\typ}}}
\end{mathpar}
where the boxed assertion in the premise of the second rule is an Iris \emph{invariant}. These were not described in the main text, so the reader unfamiliar with the original work on Iris can think of them as a single-state STS\footnote{In fact, the STS construct is \emph{derived} from invariants.}. The first rule means that lock types are just interpreted as the $\islock{-}{-}{-}{-}$ assertion we saw before. The fourth parameter is $\TRUE$ because for purposes of the type translation, we don't care what resource the programmer is intending to protect with the lock. The second says that two heap locations are related at reference type $\reftyp{\tau}$ if there is an invariant enforcing that they must always point to values that are related at the interpretation of $\tau$.

Using these definitions, we can prove a fundamental lemma showing that if $\cdot \proves \expr \typtrans \expr' : \typ$, then $\interpExprRel{\expr}{\expr'}{}{\typ}$ holds. Then, we show a soundness lemma that states that if $\interpExprRel{\expr}{\expr'}{}{\booltyp}$, then $\expr \refines \expr'$. By composing these two results, we obtain \thmref{thm:locks}.

 \endgroup}
\fi

\end{document}